\definecolor{LinkColor}{rgb}{.3,.3,.6}
\newcommand{\M}{\ensuremath{\mathbf{M}}} 
\newcommand{\U}{\ensuremath{\mathbf{U}}} 
\newcommand{\I}{\ensuremath{\mathbf{I}}} 
\newcommand{\A}{\ensuremath{\mathbf{A}}} 
\newcommand{\F}{\ensuremath{\mathbf{F}}} 
\renewcommand{\S}{\ensuremath{\mathbf{S}}} 
\newcommand{\R}{\ensuremath{\mathbf{R}}} 
\newcommand{\N}{\ensuremath{\mathbf{N}}} 
\newcommand{\W}{\ensuremath{\mathbf{W}}} 
\newcommand{\T}{\ensuremath{\mathbf{T}}} 
\newcommand{\Q}{\ensuremath{\mathbf{Q}}} 
\renewcommand{\Q}{\ensuremath{\mathbf{K}}} 
\newcommand{\G}{\ensuremath{\mathbf{G}}} 
\renewcommand{\L}{\ensuremath{\mathbf{L}}} 
\DeclareMathOperator{\Ext}{ext}
\DeclareMathOperator{\Sep}{sep}
\DeclareMathOperator{\Jmp}{jmp}
\DeclareMathOperator{\Bal}{bal}
\DeclareMathOperator{\Ratio}{ratio}
\DeclareMathOperator{\RATIO}{RATIO}
\DeclareMathOperator{\Length}{len}
\DeclareMathOperator{\Lex}{lex}
\newcommand{\LL}{L}
\newcommand{\RR}{R}
\newcommand{\LN}{\N^-}
\newcommand{\RN}{\N^+}
\newcommand{\Fl}{F_l}
\newcommand{\Fr}{F_r}
\newcommand{\Hl}{H_l}
\newcommand{\Hr}{H_r}
\DeclareMathOperator{\Sign}{sgn}
\DeclareMathOperator{\Row}{row}
\DeclareMathOperator{\Rows}{rows}
\DeclareMathOperator{\Col}{col}
\DeclareMathOperator{\Cols}{cols}
\DeclareMathOperator{\Pos}{pos}
\DeclareMathOperator{\Gr}{Gr}
\DeclareMathOperator{\mult}{\times}
\newcommand{\Rep}{\textsc{Rep}\xspace}
\newcommand{\RepUIG}{\textsc{RepUIG}\xspace}
\newcommand{\nMult}[1]{\textsc{$#1$-Mult}\xspace}
\newcommand{\kMult}{\nMult{k}}
\newcommand{\klcMult}{\nMult{(k,\Circ,\Len)}}
\newcommand{\wrap}{\omega}
\newcommand{\Unit}{\U}
\newcommand{\arclength}[1]{|#1|}
\newcommand{\Circ}{c}
\newcommand{\Len}{\ell}
\newcommand{\Extra}{e}
\newcommand{\Syn}{\S}
\newcommand{\Red}{\R}
\newcommand{\ARow}{\L}
\newcommand{\Copies}{\lambda}
\newcommand{\Spiral}{\gamma}
\newcommand{\Unroll}{\cdot}
\newcommand{\nose}{\mu}
\newcommand{\hollow}{\eta}
\newcommand{\Bool}{\beta}
\newcommand{\Boolean}[1]{\Bool\{#1\}}
\newcommand{\CRange}[1]{\llbracket #1\rrbracket}
\newcommand{\Range}[1]{\llbracket #1\rrparenthesis}
\newcommand{\ORange}[1]{\llparenthesis #1\rrparenthesis}
\newcommand{\Dist}[1]{\ensuremath{{\rm\bf d}{#1}}}
\newcommand{\IDist}[1]{\ensuremath{{\rm\bf d^*}{#1}}}
\newtheorem{theorem}{Theorem}{\bf}{\it}
\newtheorem{lemma}{Lemma}{\bf}{\it}
\newtheorem{corollary}{Corollary}{\bf}{\it}
{\bf}{\it}
{\itshape}{\rmfamily}
\newtheorem{factproof}{Proof of Facts}{\itshape}{\rmfamily}
\newlist{discription}{enumerate}{2}
\setlist[discription]{align=left,leftmargin=\parindent,labelsep=*,itemindent=!,labelindent=0pt,parsep=0pt,topsep=2pt,itemsep=2pt}
\setlist[enumerate]{align=left,leftmargin=\parindent,labelsep=*,itemindent=!,labelindent=0pt,parsep=0pt,topsep=2pt,itemsep=2pt}
\begin{document}

\title{Loop unrolling of UCA models: distance labeling}        

\author{Francisco J.\ Soulignac}
\author{Pablo Terlisky}

\affil{\small{Universidad de Buenos Aires. Facultad de Ciencias Exactas y Naturales. Departamento de Computación. Buenos Aires, Argentina.\\
           CONICET-Universidad de Buenos Aires. Instituto de Investigación en Ciencias de la Computación (ICC). Buenos Aires, Argentina.}}

\date{March 2026}

\maketitle

\begin{abstract}
  A \emph{proper circular-arc (PCA) model} is a pair $\M = (C, \A)$ where $C$ is a circle and $\A$ is a family of inclusion-free arcs on $C$ whose extremes are pairwise different.  The model $\M$ \emph{represents} a digraph $D$ that has one vertex $v(A)$ for each $A \in \A$ and one edge $v(A) \to v(B)$ for each pair of arcs $A,B \in \A(\M)$ such that the beginning point of $B$ belongs to $A$.  For $k \geq 0$, the \emph{$k$-th power} $D^k$ of $D$ has the same vertices as $D$ and $v(A) \to v(B)$ is an edge of $D^k$ when $A\neq B$ and the distance from $v(A)$ to $v(B)$ in $D$ is at most $k$.  A \emph{unit circular-arc (UCA) model} is a PCA model $\Unit = (C,\A)$ in which all the arcs have the same length $\Len+1$. If $\Len$, the length $\Circ$ of $C$, and the extremes of the arcs of $\A$ are integer, then $\Unit$ is a \emph{$(\Circ,\Len+1)$-CA model}.  For $i \geq 0$, the model $i \mult \Unit$ of $\Unit$ is obtained by replacing each arc $(s,s+\Len+1)$ with the arc $(s,s+i\Len+1)$.  If $\Unit$ represents a digraph $D$, then $\Unit$ is \emph{$k$-multiplicative} when $i \mult\Unit$ represents $D^i$ for every $0 \leq i \leq k$.  In this article we design a linear time algorithm to decide if a PCA model $\M$ is equivalent to a $k$-multiplicative UCA model when $k$ is given as input.  The algorithm either outputs a $k$-multiplicative UCA model $\Unit$ equivalent to $\M$ or a negative certificate that can be authenticated in linear time.  
  Our main technical tool is a new characterization of those PCA models that are equivalent to $k$-multiplicative UCA models.  For $k=1$, this characterization yields a new algorithm for the classical representation problem that is simpler than the previously known algorithms. \\

\textbf{keywords:} multiplicative UCA models, distance labeling, powers of UCA models, representation problem

\end{abstract}

\section{Introduction}
\label{sec:introduction}

The last decade saw an increasing amount of research on numerical representation problems for unit circular-arc (UCA) models and some of its subclasses~\cite{CostaDantasSankoffXuJBCS2012,DuranFernandezSlezakGrippoSouzaOliveiraSzwarcfiterDAM2017,KlavikKratochvilOtachiRutterSaitohSaumellVyskocilA2017,LinSoulignacSzwarcfiter2009,LinSzwarcfiterSJDM2008,SoulignacJGAA2017,SoulignacJGAA2017a}.  In these problems we are given a proper circular-arc (PCA) model $\M$ and we have to find a UCA model $\Unit$, related to $\M$, that satisfies certain numerical constraints.  The paradigmatic example is the \emph{classical representation problem} in which a UCA model $\Unit$ equivalent to an input PCA model $\M$ has to be computed.  The equivalence of $\M$ and $\Unit$ means that the endpoints of $\Unit$ must appear in the same circular order as those of $\M$.

In this article we consider a generalization of the classical representation problem.  In a nutshell, given a PCA model $\M$ and $k \geq 0$, the goal is to find a UCA model $\Unit$ whose ``multiplication'' $i \times \Unit$ is equivalent to the ``power'' $\M^i$ of $\M$ for every $0 \leq i \leq k$.  Here, $i \mult \Unit$ is obtained from $\Unit$ by lengthening each arc to have length $i\times\Len+1$, where $\Len+1$ is the length of the arcs in $\Unit$.  On the other hand, $\M^i$ is a PCA model whose intersection graph is the $i$-th power of the intersection graph of $\M$.  To formally state the problem we require some terminology that will be used throughout the article.

\subsection{Statement of the problem}

In this work, the term \emph{arc} refers to open circular arcs.  For points $s \neq t$ of a circle $C$, we write $(s,t)$ to denote the arc of $C$ that goes from $s$ to $t$ in a clockwise traversal of $C$.  Each arc $A = (s,t)$ of $C$ with \emph{extremes} $s$ and $t$ is described by its \emph{beginning point} $s(A) = s$ and its \emph{ending point} $t(A) = t$.  We write $\arclength{A} = \arclength{s,t}$ and $\arclength{C}$ to denote the \emph{lengths} of $A$ and $C$, respectively.  We assume that every circle $C$ has a special point $0$ such that $p = \arclength{0,p}$ for every point $p \in C$.  Thus, $p < q$ if and only if $p$ appears before $q$ in a clockwise traversal of $C$ from $0$.  For arcs $A_1$ and $A_2$ of $C$, we write $A_1 < A_2$ to mean that $s(A_1) < s(A_2)$.  We classify the arcs of $C$ as being \emph{external} or \emph{internal} according to whether $A \cup \{t(A)\}$ contains $0$ or not, respectively.  In other words, $A$ is external when $t(A) < s(A)$.

A \emph{proper circular-arc (PCA) model} (\cref{fig:example-uig-pca}) is a pair $\M = (C, \A)$ where $C$ is a circle and $\A$ is a family of inclusion-free arcs on $C$, no two of which share an extreme.  We write $C(\M) = C$ and $\A(\M) = \A$ to denote the circle and the family of arcs of $\M$, respectively.  The \emph{extremes} of $\M$ are those extremes of the arcs in $\A$.  Say that $\M$ and a PCA model $\M'$ are \emph{equivalent} if there exists a bijection $f\colon\A(\M) \to \A(\M')$ such that $e(A) < e'(B)$ if and only if $e(f(A)) < e'(f(B))$, for $e, e' \in \{s,t\}$.  Colloquially, $\M$ and $\M'$ are equivalent if their extremes appear in the same order, regarding $f$, when their circles are traversed clockwise from their respective $0$ points.

\begin{figure}[b]
  \mbox{}\hfill \includegraphics{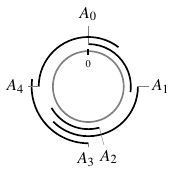} \hfill \includegraphics{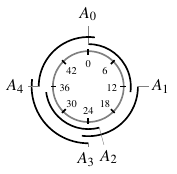} \hfill \includegraphics{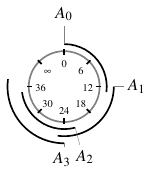} \hfill\mbox{} \raisebox{2mm}{\includegraphics{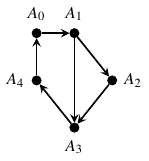}} 
  
  \caption[]{From left to right: a PCA model $\M$ with arcs $A_0 < \ldots < A_4$; a $(48,13)$-CA model $\Unit$ equivalent to $\M$; the $13$-IG model obtained by removing the external arc of $\Unit$ (the circle represents $\mathbb{R}$); the digraph $D(\M)$.}\label{fig:example-uig-pca}
\end{figure}

A \emph{unit circular-arc} (UCA) model is a PCA model $\M$ whose arcs all have the same length $\Len$.  If every extreme of $\M$ is integer, then we refer to $\M$ as being a \emph{$(\arclength{C}, \Len)$-CA} model (\cref{fig:example-uig-pca}).  Note that if $\M$ is a PCA model with no external arcs, then we can remove a segment $(|C(\M)|-\varepsilon, 0)$ from $C(\M)$ to obtain a line $L$, without removing points of the arcs of $\M$.  Replacing $L$ with the real line (of infinite length), we obtain a new representation of $\M$ where each arc corresponds to an interval of the real line.  Conversely, any family of intervals on the real line can be transformed into arcs of a circle by pasting together two points of the line that bound all the intervals.   To keep a uniform terminology for both PCA and proper intervals models, in this work we say that $\I$ is a \emph{proper interval (PIG)} or \emph{unit interval (UIG)} model to mean that $(\mathbb{R}, \I)$ is a PCA or UCA model with no external arcs, respectively, where the real line $\mathbb{R}$ is thought of as a circle with infinite length (\cref{fig:example-uig-pca}).  Moreover, instead of stating that $(\mathbb{R},\I)$ is an $(\infty, \Len)$-CA model, we simply state that $\I$ is an $\Len$-IG model.

Every PCA model $\M$ defines a digraph $D(\M)$ that has a vertex $v(A)$ for each $A \in \A(\M)$ where $v(A) \to v(B)$ is a directed edge ($A,B \in \A(\M)$) if and only if $s(B) \in A$ (\cref{fig:example-uig-pca}).  In the underlying graph $G(\M)$ of $D(\M)$, $v(A)$ and $v(B)$ are adjacent if and only if $A \cap B \neq \emptyset$.  A (di)graph $G$ is a \emph{proper circular-arc (PCA)} (di)graph \emph{represented} by $\M$ when $G$ is isomorphic to $G(\M)$.  \emph{Unit circular-arc} (UCA), \emph{proper interval} (PIG) and \emph{unit interval} (UIG) (di)graphs are defined analogously.  Because of the circular nature of $\M$, the distance between $v(A)$ and $v(B)$ in $G(\M)$ is the minimum of the distances in $D(\M)$ from $v(A)$ to $v(B)$ and from $v(B)$ to $v(A)$ (e.g.~\cite[Lemmas 5 and 6]{GavoillePaulSJDM2008}).  Thus, to determine the distance between two vertices of a graph represented by a PCA model $\M$, it suffices to find the distances of their respective vertices in $D(\M)$.  And, as $D(\M)$ is implicitly encoded by $\M$, we can work directly with $\M$.

Let $\M$ be a PCA model with arcs $A_0 < \ldots < A_{n-1}$.  The arc $A_0$ is called the \emph{initial} arc of $\M$.  Any sequence $\L = A_i, A_{i+1}, \ldots, A_{i+k}$, with subindices modulo $n$, is said to be \emph{contiguous}. The arcs $A_i$ and $A_{i+k}$ are the \emph{leftmost} and \emph{rightmost} arcs of $\L$, respectively.  For $0 \leq i < n$, define: 
\begin{itemize}
 \item $\LN[A_i]$ as the contiguous sequence of arcs with ending point in $A_i \cup \{t(A_i)\}$ that has $A_i$ as its rightmost arc,
 \item $\RN[A_i]$ as the contiguous sequence of arcs with beginning point in $A_i \cup \{s(A_i)\}$ that has $A_i$ as its leftmost arc,
 \item $\Fl(A_i)$ as the leftmost arc in $\LN[A_i]$ and $\Fr(A_i)$ as the rightmost arc in $\RN[A_i]$,
 \item $\LL(A_i) = A_{i-1}$ and $\RR(A_i) = A_{i+1}$ (modulo $n$),
 \item $\Hl(A_i)$ as the unique arc $A$ such that $\Fr(A) = A_i$ and $\Fr \circ \RR(A) \neq A_i$; if $A$ does not exist, then $\Hl(A_i) = \bot$, and
 \item $\Hr(A_i)$ as the unique arc $A$ such that $\Fl(A) = A_i$ and $\Fl \circ \LL(A) \neq A_i$; if $A$ does not exist, then $\Hr(A_i) = \bot$.
\end{itemize}

In \cref{fig:example-uig-pca}, $\RN[A_1] = \LN[A_3] = A_1,A_2,A_3$, $\RR(A_1) = \LL(A_3) = A_2$, $\Fr(A_1) = \Fl(A_4) = A_3$, $\Hr(A_1) = \Hl(A_3) = A_2$, and $\Hl(A_{2}) = \Hr(A_{2}) = \bot$.  Note that $v(A_i) \to v(A_j)$ is a directed edge of $D(\M)$ if and only if $i \neq j$ and $A_i \in \LN[A_j]$, which happens if and only if $i \neq j$ and $A_j \in \RN[A_i]$.  Therefore, $\LN[A_i]$ and $\RN[A_i]$ represent the in and out closed neighborhoods of $v(A_i)$ in $D(\M)$, respectively.  





For a (di)graph $D$ with vertex set $V(D)$, its \emph{$k$-th power} $D^k$ is the (di)graph with vertex set $V(D)$ such that $v \to w$ is a (directed) edge of $D^k$ if and only if $v \neq w$ and the distance from $v$ to $w$ in $D$ is at most $k$.  Let $D^k(\M) = (D(\M))^k$. The known fact that $D^k(\M)$ is a PCA digraph can be proved with the following construction.  For $k \geq 0$, $A \in \A(\M)$, and $f \in \{\LL, \RR, \Fl, \Fr, \Hl, \Hr\}$, let $f^0(A) = A$ and $f^{k+1}(A) = f \circ f^{k}(A)$, where $f^{k+1}(A) = \bot$ if $f^k(A) = \bot$.  The \emph{$k$-th power} of $A$ is the arc $A^k = (s(A), s(\Fr^{k}(A)) + \epsilon(x+1))$, where $x$ is the number of arcs $B$ with $s(A) \in B$ and $\Fr^k(B) = \Fr^k(A)$, and $\varepsilon < n^{-1}$ is small enough so that $t(A^i) \not\in \RR \circ \Fr^i(A)$.  Note that $A^k$ and $B^k$ share no extremes for $A, B\in \A$. The \emph{$k$-th power} of $\M$ is the pair $\M^k = (C(\M), \{A^k \mid A \in \A(\M)\})$; see \cref{fig:example-mult-power}.

\begin{figure}[t]
 \mbox{}\hfill \includegraphics{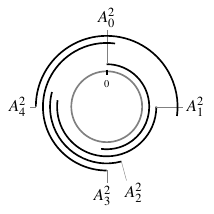} \hfill \includegraphics{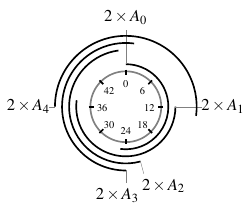} \hfill \includegraphics{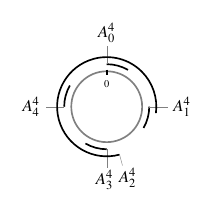} \hfill\mbox{}
 \caption[]{From left to right: $\M^2$; $2 \times \Unit$; and $\M^4$ for $\M$ and $\Unit$ in \cref{fig:example-uig-pca}.  Model $\Unit$ is $2$-multiplicative and $\wrap(\U)=4$.}\label{fig:example-mult-power}
\end{figure}

Define the \emph{wraparound value} $\wrap(\M)$ of $\M$ to be the minimum $\wrap > 1$ for which there is an arc $A \in \A(\M)$ such that $A^\wrap \subset A$. For the sake of notation, we usually omit the parameter $\M$ of $\wrap$. Note that $\wrap$ is well defined unless $\M$ is a PIG model, in which case we let $\wrap = n$.  Although $\M^k$ is defined for proving that $D^k(\M)$ is PCA, the statement $D(\M^k) = D^k(\M)$ is false when $k \geq \omega$ and $\M$ is not PIG.  Indeed, as $A^{\wrap} \subset A$ for some arc $A$, $A^{\wrap}$ intersects fewer arcs than $A$, whereas $v(A)$ has more neighbors in $D^\wrap(\M)$ than in $D(\M)$.  The reason why this happens is that $A^{\wrap-1} \cup \Fr^\wrap(A)$ covers the circle.  To fix this issue it can be observed that $D^{\wrap}(\M)$ is a complete digraph, thus it suffices to define $A^k = (s(A), s(A) -\varepsilon)$ when $k \geq \wrap$.  In this article we are concerned with the model, thus we avoid this approach.  Nevertheless, the following well-known theorem holds.

\begin{theorem}[\cite{FlotowDAM1996}]\label{thm:power models}
  Let\/ $\M$ be a PCA model.  If\/ $0 \leq k < \wrap$, then\/ $\M^k$ is a PCA model that represents $D^k(\M)$; otherwise, $D^k(\M)$ is a complete digraph.
\end{theorem}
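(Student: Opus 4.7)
The plan is to prove both statements via a reachability lemma: for every $A \in \A(\M)$ and every $0 \leq k < \wrap$, the set of arcs $B$ with distance from $v(A)$ to $v(B)$ in $D(\M)$ at most $k$ equals $\{B : s(B) \in [s(A), s(\Fr^k(A))]\}$, where $[s(A), s(\Fr^k(A))]$ denotes the clockwise arc from $s(A)$ to $s(\Fr^k(A))$. The condition $k < \wrap$ guarantees $s(\Fr^k(A)) \notin A$ for every $A$ -- otherwise $A^k \subset A$ would force $\wrap \leq k$, contradicting minimality -- which is precisely what keeps the characterisation well-posed. I would prove the lemma by induction on $k$. The base $k = 0$ is immediate. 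The inductive step uses the monotonicity of $\Fr$ (if $s(C_1) \leq s(C_2)$ along a non-wrapping prefix then $s(\Fr(C_1)) \leq s(\Fr(C_2))$, a consequence of inclusion-freeness in any PCA model): the forward direction bounds $s(B) \leq s(\Fr^{k+1}(A))$ whenever $B$ sits one $\RN$-step past a distance-$k$ arc $C$ with $s(C) \in [s(A), s(\Fr^k(A))]$; the backward direction handles points in the ``fresh slab'' $(s(\Fr^k(A)), s(\Fr^{k+1}(A))]$ by noting that they lie inside $\Fr^k(A)$, since $\Fr^{k+1}(A)$ is by definition the rightmost arc beginning in $\Fr^k(A)$, so $B$ is adjacent to $\Fr^k(A)$, which is at distance $k$ from $A$ by induction.

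With the lemma in hand, Part 1 splits into two verifications. First, $\M^k$ is a PCA model whenever $k < \wrap$: beginning points $s(A^k) = s(A)$ are distinct by hypothesis on $\M$, and the offset $\varepsilon(x+1)$ with $\varepsilon < n^{-1}$ in $t(A^k) = s(\Fr^k(A)) + \varepsilon(x+1)$ is tuned so that $t(A^k)$ lies strictly between $s(\Fr^k(A))$ and $s(\RR \circ \Fr^k(A))$, making endpoints pairwise distinct and disjoint from all beginning points; inclusion-freeness follows from the lemma by observing that a strict inclusion $A^k \subsetneq B^k$ would tour the $\Fr$-orbit of $B$ back into $B$ in at most $k$ steps, contradicting the minimality of $\wrap$. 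Second, $D(\M^k) = D^k(\M)$ falls out at once: by definition $v(A^k) \to v(B^k)$ belongs to $D(\M^k)$ iff $s(B) \in A^k$, which by the choice of $\varepsilon$ is equivalent to $s(B) \in (s(A), s(\Fr^k(A))]$, which by the lemma is equivalent to the distance from $v(A)$ to $v(B)$ in $D(\M)$ being at most $k$ and $A \neq B$.

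For Part 2, assume $k \geq \wrap$ and let $A_0$ witness $A_0^\wrap \subsetneq A_0$, so that $s(\Fr^\wrap(A_0)) \in A_0$. Consecutive arcs in $A_0, \Fr(A_0), \ldots, \Fr^\wrap(A_0)$ overlap by the definition of $\Fr$, and since the chain re-enters $A_0$ their union is all of $C(\M)$. Monotonicity of $\Fr$ implies that the chain $B, \Fr(B), \ldots, \Fr^\wrap(B)$ from any arc $B$ advances clockwise by at least as much as the chain starting at $A_0$, and hence also covers the circle; consequently, for every pair $B, B'$, $s(B')$ lies in some $\Fr^i(B)$ with $i \leq \wrap$, giving a directed path of length at most $\wrap$ from $v(B)$ to $v(B')$ in $D(\M)$. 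Therefore $D^k(\M)$ is a complete digraph for every $k \geq \wrap$.

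The main obstacle is the circular bookkeeping underpinning the reachability lemma: making precise that $s(\Fr^k(A)) \notin A$ for every $A$ when $k < \wrap$, and extending monotonicity of $\Fr$ uniformly over all starting arcs so that the single threshold $\wrap$ governs every $\Fr$-orbit, not only the minimising one used to define $\wrap$. These two facts are precisely what supports the inclusion-freeness step of Part 1 and the ``every chain covers the circle'' step of Part 2.
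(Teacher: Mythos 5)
The paper does not prove this statement; it cites it to Flotow (1996) and moves on, so there is no in-paper argument to compare against. Evaluating your proposal on its own terms, there is a concrete gap in the invariant you put under the reachability lemma. You claim that $k < \wrap$ guarantees $s(\Fr^k(A)) \notin A$ for every $A$, on the grounds that $s(\Fr^k(A)) \in A$ would give $A^k \subset A$ and hence $\wrap \leq k$. This fails already at $k = 1$: by the definition of $\Fr$, whenever $\Fr(A) \neq A$ one has $s(\Fr(A)) \in A$, while $\wrap > 1$ holds by fiat (and can be as large as $n$). The underlying error is treating ``$s(\Fr^k(A)) \in A$'' as equivalent to ``$A^k \subset A$''. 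The former holds both while the chain $A, \Fr(A), \ldots$ still begins inside $A$ (small $k$) and again once it has completed a lap of $C$ and re-entered $A$ (large $k$); only the second regime gives $|A^k| < |A|$. Since the arc $A^k = (s(A), s(\Fr^k(A)) + \varepsilon(x+1))$ is determined by its two endpoints and not by the trajectory, the lap count is not recoverable from a single point's membership in $A$. What you actually need---and what the definition of $\wrap$ really delivers, after an argument that $\Fr$ cannot step over $s(A)$ without landing in $\RN[A]$---is that the cumulative clockwise advance of $A, \Fr(A), \ldots, \Fr^k(A)$ is still less than $|C(\M)|$ for every $A$, so that the clockwise interval $[s(A), s(\Fr^k(A))]$ grows monotonically with $k$ and never wraps past $s(A)$.

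The same conflation bleeds into two downstream steps. For inclusion-freeness, the correct reasoning is that $A^k \subsetneq B^k$ plus the ``no wrap'' fact contradicts monotonicity of the map $s(A) \mapsto s(\Fr^k(A))$ (lifted so the cyclic order is an honest order), rather than anything about the $\Fr$-orbit of $B$ touring back into $B$. And in Part~2, ``the chain from any $B$ advances clockwise by at least as much as the chain from $A_0$'' is not what monotonicity gives you; it gives the sandwich $\Fr^i(A_0) \leq \Fr^i(B) \leq \Fr^{i+1}(A_0)$. Some $B$ can have an orbit that only wraps strictly later than step $\wrap$, yet $D^\wrap(\M)$ is still complete because the sandwich already places $s(\Fr^\wrap(B))$ at or past $s(\Fr^\wrap(A_0))$, which has re-entered $A_0 \cup \{s(A_0)\}$. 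You rightly flag the circular bookkeeping as the main obstacle, but the invariant as written points in the wrong direction and would derail the induction if followed literally.
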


If we store $\M^k$ for every $k < \wrap$, then we can efficiently answer any distance query in $G(\M)$.  Our goal, however, is to define one UCA model $\M$ to answer these queries efficiently.  Let $\M$ be a $(\Circ,\Len+1)$-CA model.  For $i \geq 0$ and $A \in \A(\M)$, define the \emph{$i$-multiple} of $A$ as the arc $i \mult A = (s(A), s(A) + i\Len+1 \bmod \Circ)$.  The \emph{$i$-multiple} of $\M$ is the pair $i\mult \M = (C(\M), \{i \mult A \mid A \in \A(\M)\})$; see 
\cref{fig:example-mult-power}.  For $k \geq 0$, we say that $\M$ is \emph{$k$-multiplicative} when $i\mult\M$ is a UCA model equivalent to $\M^i$ for every $0 \leq i \leq k$ (and thus it represents $D^i(\M)$ for $i$ up to $k$).  We remark that $i\mult\M$ is a UCA model unless two arcs have a common extreme.  To avoid this possibility, say that a $(\Circ,\Len+1)$-CA model $\M$ is \emph{even} when $\Circ$, $\Len$, and all the beginning points of the arcs in $\A(\M)$ are even.  It is not hard to see that $i \mult \M$ is an even UCA model when $\M$ is even.  There is no loss of generality in restricting the study to even models, as every $(\Circ,\Len)$-CA model $\M$ can be transformed into an equivalent $(2\Circ, 2\Len+1)$-CA model by replacing every arc $A$ by the arc $(2s(A), 2t(A)+1)$.  Note that $O(1)$ time is enough to decide if the distance from $v(A)$ to $v(B)$ in $D(\M)$ is $i \leq k$ when a $k$-multiplicative $(\Circ,\Len+1)$-CA model $\M$ is given, as it suffices to check that $s(A), s(B), t(i \mult A) = s(A) + i\Len+1 \bmod \Circ$ appear in this order in a clockwise traversal of $C(\M)$.


In this article we study the \kMult problem. Given a PCA model $\M$ and $0 \leq k < \wrap$, the goal of \kMult is to determine if $\M$ is equivalent to a $k$-multiplicative model.  If affirmative, a \emph{certifying} algorithm outputs a $k$-multiplicative model $\Unit$ equivalent to $\M$; otherwise, it outputs a negative certificate. The problem is trivial when $k = 0$ because $\M$ is $0$-multiplicative.  For this reason, we restrict our attention to the case $k > 0$ in which $\U$ must be UCA.

\subsection{Motivation for the problem}

Every PIG model $\I$ yields a metric $d_{\I}$ on $V(G(\I))$ where $d_{\I}(v(A),v(B)) = |s(A) - s(B)|$ for every $A, B \in \I$.  Among all the PIG representations of $G = G(\I)$, those that are UIG provide a better notion of \emph{nearness}, as the vertices adjacent in $G$ are nearer than those non-adjacent.  Indeed, if $\Unit$ is an $\Len$-IG model and $d_G(v(A),v(B)) \leq 1 < d_G(v(X),v(Y))$, then $d_{\Unit}(v(A),v(B)) < \Len < d_{\Unit}(v(X),v(Y))$.  This feature is one of the main reasons why (some notion equivalent to) UIG models are introduced in many different theoretical frameworks, including uniform arrays~\cite{GalanterPR1956,Goodman1977a}, semiorders~\cite{LuceE1956,ScottSuppesJSL1958} and indifference graphs~\cite{Roberts1969}.  As argued by 
Goodman in~\cite{Goodman1977a}, $d_{\Unit}$ reflects the natural idea that among all the pairs 
of adjacent vertices of $G$, some are nearer than others.  This is important in Goodman's work about the topology of quality, as large gaps in $d_{\Unit}$ may suggest that some qualia are yet undiscovered.  However, when greater distances on $G$ are considered, the main feature of UIG models is lost: there are UIG models $\Unit$ with $d_G(v(A), v(B)) \leq k < d_G(v(X), v(Y))$ and $d_{\Unit}(v(A), v(B)) > d_{\Unit}(v(X), v(Y))$.  Instead, if $\Unit$ is $(\Len+1)$-IG and $\infty$-multiplicative, then $d_{\Unit}(v(A),v(B)) < k\Len+1 < d_{\Unit}(v(X),v(Y))$.  \Cref{fig:proportionality} depicts the situation for general PCA models.

\begin{figure}
 \mbox{}\hfill\includegraphics{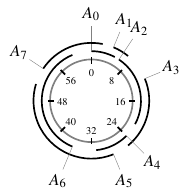}\hfill\includegraphics{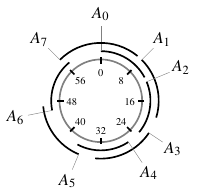}\hfill\includegraphics{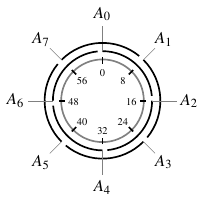}\hfill\mbox{}
	\caption{From left to right: a PCA model $\M$, a UCA model $\Unit_1$ equivalent to $\M$, and a $6$-multiplicative UCA model $\Unit_6$ equivalent to $\M$.  In $\M$, $s(A_2) - s(A_0) = 7 < 18 = s(A_7) - s(A_6)$ and $d(v(A_0), v(A_2)) = 2 > 1 = d(v(A_7), v(A_6))$.  In $\Unit_1$, $s(A_4)-s(A_0) = 25 < 30 = s(A_7) - s(A_4)$ and $d(v(A_0), v(A_4)) = 4 > 3 = d(v(A_4), v(A_7))$. In turn, $\Unit_6$ preserves the proportions for every $k < \wrap = 7$.}\label{fig:proportionality}
\end{figure}

Robert's ``PIG=UIG'' theorem~\cite{Roberts1969} 
states that every PIG model is equivalent to a UIG model. The classical representation problem \RepUIG asks to find a UIG model $\Unit$ equivalent to an input PIG model $\I$.  By definition, $\Unit$ is a UIG model if and only if $\Unit$ is $1$-multiplicative.  Thus, \RepUIG is simply the restriction of \nMult{1} to PIG inputs.  There are many algorithms to solve \RepUIG, at least three of which run in linear time~\cite{CorneilKimNatarajanOlariuSpragueIPL1995,LinSoulignacSzwarcfiter2009,Mitas1994}.  It is not hard to prove that the UIG models produced by the linear time algorithms in~\cite{CorneilKimNatarajanOlariuSpragueIPL1995} 
and~\cite{LinSoulignacSzwarcfiter2009} are $\infty$-multiplicative. (An implicit proof for the algorithm in~\cite{CorneilKimNatarajanOlariuSpragueIPL1995} follows from Gavoille and Paul~\cite{GavoillePaulSJDM2008}; 
see also Theorem~\ref{thm:equivalence}.)  The following generalization of Roberts' ``PIG=UIG'' theorem is obtained.

\begin{theorem}\label{thm:multiplicative PIG}
 Every PIG model\/ $\I$ is equivalent to some $\infty$-multiplicative UIG model.  Furthermore, an $\infty$-multiplicative UIG model equivalent\/ $\I$ can be computed in linear time.
\end{theorem}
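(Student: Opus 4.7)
The plan is to invoke one of the known linear-time algorithms for \RepUIG, for instance that of \cite{CorneilKimNatarajanOlariuSpragueIPL1995} or \cite{LinSoulignacSzwarcfiter2009}. On input $\I$ such an algorithm produces in linear time a UIG model $\Unit$ equivalent to $\I$, which already delivers the existence half of the theorem (Roberts' ``PIG=UIG'') together with the linear-time claim. What remains is to verify that the particular $\Unit$ returned by the chosen algorithm is $\infty$-multiplicative, i.e., that $i\mult\Unit$ is equivalent to $\Unit^i$ for every $0 \leq i < \wrap(\Unit) = n$.

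The key structural claim about the output $\Unit$, whose arcs have length $\Len+1$, is that (after an easy integer rescaling) its beginning points are integer and satisfy the identity $s(\Fr^i(A)) = s(A) + i\Len$ for every $A \in \A(\Unit)$ and every $0 \leq i$ such that $\Fr^i(A)$ is not the rightmost arc in the sorted order of $\Unit$. Both cited algorithms assign beginning points via a greedy BFS-style scan that forces consecutive arcs in the sorted order to differ in their beginning points by exactly $1$ inside a single BFS layer, and by the maximum gap compatible with non-adjacency at the boundary between layers. An induction on $i$, with the base identity $s(\Fr(A)) = s(A) + \Len$ at every interior arc, then yields the displayed formula.

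Given the key identity, $\infty$-multiplicativity is immediate under the bijection sending $i\mult A$ to $A^i$ for each $A \in \A(\Unit)$. The beginning points already coincide, so one only needs to check the circular order of the ending points. Writing $t(i\mult A) = s(A) + i\Len + 1$ and $t(A^i) = s(\Fr^i(A)) + \varepsilon(x+1)$, the identity and integrality yield $t(i\mult A) < t(i\mult B) \iff s(A) < s(B) \iff \Fr^i(A) < \Fr^i(B) \iff t(A^i) < t(B^i)$, and likewise $s(B) < t(i\mult A) \iff s(B) \leq s(A) + i\Len \iff s(B) \leq s(\Fr^i(A)) \iff s(B) < t(A^i)$, so $i\mult\Unit$ and $\Unit^i$ are equivalent. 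The hard part is the structural claim itself: an algorithm-specific verification that the greedy assignment produces the arithmetic progression rather than a sporadically smaller value. This is where the paper defers to the implicit proof in \cite{GavoillePaulSJDM2008} and to the forthcoming \cref{thm:equivalence}, either of which can be invoked as a black box to conclude.
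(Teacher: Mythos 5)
Your approach coincides with the paper's: both cite a linear\nobreakdash-time \RepUIG algorithm and then assert, without a self-contained argument, that the resulting UIG model is $\infty$-multiplicative, deferring the verification to \cite{GavoillePaulSJDM2008} and to a later theorem.  (For the PIG case the cleaner black box is \cref{thm:tucker}, which the paper invokes in the proof of \cref{cor:kRep}; \cref{thm:equivalence} is stated only for SPCA models.)  The extra detail you supply, however, has a hole.  The claimed biconditional $s(A) < s(B) \iff \Fr^i(A) < \Fr^i(B)$ is false: $\Fr^i$ is monotone nondecreasing but not injective, and in a connected PIG model $\Fr^i(A) = \Fr^i(B)$ for almost every pair once $i$ is moderately large, since both collapse onto the rightmost arc.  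Your structural identity $s(\Fr^i(A)) = s(A) + i\Len$ is deliberately hypothesized only away from the rightmost arc, so it cannot restore strict inequality there; the needed comparison $t(A^i) < t(B^i)$ is then decided by the tie-breaking term $\varepsilon(x+1)$ in the definition of $A^k$, which your chain never examines.  The outer equivalence $t(i\mult A) < t(i\mult B) \iff t(A^i) < t(B^i)$ is true, but two of the three internal links you wrote are not.

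Separately, the structural claim is overclaimed as stated.  The paper itself records that the \cite{CorneilKimNatarajanOlariuSpragueIPL1995}/\cite{GavoillePaulSJDM2008} scheme places the $n$ beginning points in $[1,n^2]$ with interval length $n$, so consecutive arcs do not ``differ by exactly $1$''; the identity $s(\Fr(A)) = s(A) + \Len$ at interior arcs is a genuine alignment property of the BFS layers that must be proved, not a tautology of greedy placement.  It also fails outright for disconnected $\I$, where $\Fr$ fixes an arc that is not the globally rightmost one, so your exception never applies; the theorem statement does not exclude disconnected inputs, and a component-wise reduction would have to be added on top.  You are right that this algorithm-specific verification is the part the paper leaves implicit, and deferring it to a black box is legitimate; but as written that deferred step is carrying the entire proof, and the chain around it is not yet airtight.
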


The problem \kMult shares a strong relation to the \emph{distance labeling} problem for circular-arc graphs.  The latter problem asks to assign a label $L(v)$ to each vertex $v$ of a graph $G$ in such a way that the adjacency between $v$ and $w$ in $G^k$ can be determined from $L(v)$ and $L(w)$ alone, i.e., $d_G(v,w) = f(L(v),L(w))$ for some function $f$.  The primary goal is to minimize the number of bits required by each label $L(v)$, the secondary goal is to minimize the time required by $f$, and the third goal is to minimize the time required to compute $L$ from $G$.  If $\Unit$ is an $\infty$-multiplicative $(\Len+1)$-IG model representing $G = G(\Unit)$, then we can assign the label $L(v(I)) = s(I)$ for every $I \in \Unit$ because $d_G(v,w) = \lceil|L(v) - L(w)|/\Len\rceil$.  When $\Unit$ is produced using the algorithm in~\cite{CorneilKimNatarajanOlariuSpragueIPL1995}, 
each interval has a length $n$ whereas each beginning point is a number in $[1,n^2]$, thus each label requires at most $2\lceil\log(n)\rceil$ bits which is asymptotically optimal.  Moreover, $d_G(v,w)$ can be computed in $O(1)$ time, whereas $L$ can be computed in linear time.  Essentially, this is the labeling scheme proposed in~\cite{GavoillePaulSJDM2008} 
for PIG graphs, even though they do not mention that the generated labels are the possible beginning points of a UIG model.  In \cite{GavoillePaulSJDM2008}, Gavoille andd Paul also 
show that this scheme can be applied to solve the labeling problem for the general class of circular-arc graphs.  However, contrary to our goal in this article, the labels generated for UCA graphs have little to do with the UCA models representing them.

Theorem~\ref{thm:multiplicative PIG} yields an $O(n)$-time 
algorithm to find a UIG model $\Unit$, equivalent to an input PIG model $\M$, that implicitly encodes a UIG model $i \times \Unit$ of $G^i(\M)$ for every $i \geq 0$.  There is no hope in finding a similar algorithm when the input $\M$ is UCA because $G^i(\M)$ need not be UCA.  By Theorem~\ref{thm:power models}, this implies the well known fact that PCA and UCA are different classes of graphs \cite{TuckerDM1974}. The classical representation problem \Rep asks to determine if an input PCA model $\M$ is equivalent to some UCA model.  A UCA model equivalent to $\M$ or a negative certificate should be given as well.  Observe that a PCA model is UCA if and only if it is $1$-multiplicative.  Thus, \kMult is a natural generalization of \Rep = \nMult{1} that asks for a UCA model $\Unit$, if existing, to implicitly encode the UCA model $i \times \Unit$ of $G^i(\M)$ for every $0 \leq i \leq k$.  The problem \nMult{1} can be solved in linear time using any of the algorithms for \Rep~\cite{KaplanNussbaumDAM2009,LinSzwarcfiterSJDM2008,SoulignacJGAA2017a}.  As far as our knowledge extends, no efficient algorithms are known to solve \kMult for $k > 1$.  

\subsection{Brief history of the problems}

The problem \kMult is a generalization of \Rep that, in turn, is a generalization of \RepUIG.  One of the earliest references to \RepUIG was given by Goodman in the 1940s~\cite{Goodman1977a}, 
predating the current definition of UIG graphs.  Since then, several algorithms to solve \RepUIG were developed, many of which run in linear time (e.g.~\cite{CorneilKimNatarajanOlariuSpragueIPL1995,LinSoulignacSzwarcfiter2009,Mitas1994}).  Regarding \Rep, Goodman states that no adequate rules to transform a PCA model into an equivalent UCA model are known.  Of course, such general rules do not exist because some PCA graphs are not UCA.  Tucker characterized those PCA graphs that are not UCA by showing a family of forbidden induced subgraphs~\cite{TuckerDM1974}.  
His proof yields an effective method to transform a PCA model $\M$ into an equivalent UCA model $\Unit$.  The first linear time algorithm to solve $\Rep$ was given by Lin and Szwarcfiter~\cite{LinSzwarcfiterSJDM2008}.  
Their algorithm outputs a UCA model $\Unit$ equivalent to the input PCA model $\M$ when the output is yes, but it fails to provide a negative certificate when the output is no.  A different algorithm to find such a negative certificate was developed by Kaplan and Nussbaum~\cite{KaplanNussbaumDAM2009}, 
who left open the problem of finding a \emph{unified} certifying algorithm for \Rep; such an algorithm was given by Soulignac~\cite{SoulignacJGAA2017,SoulignacJGAA2017a}.

From a technical point of view, our manuscript can be though of as the sixth on a series of articles that deal with $\RepUIG$ and $\Rep$.  The series started when Pirlot proved that every PIG model $\I$ is equivalent to a minimal UIG model~\cite{PirlotTaD1990}.  
Although Pirlot's work is not of an algorithmic nature, his results yield an $O(n^2\log n)$-time 
algorithm to solve the minimal representation problem. 
As part of his work, Pirlot shows that the problem of computing an $\Len$-IG model equivalent to $\I$, when $\Len$ is given, can be modeled with a system $S_\Len$ having $O(n)$ difference constraints.  A solution to $S_\Len$, if existing, can be found in $O(n^2)$ time by running a shortest path algorithm on its weighted constraint graph $\Syn(\Len)$ (see Theorem~\ref{thm:difference constraints}).  As every PIG graph is equivalent to an $n$-IG model, an $O(n^2)$-time 
algorithm to solve \Rep is obtained.  

The unweighted version $\Syn$ of $\Syn(\Len)$ is a succinct representation of $\M$ and, for this reason, Pirlot refers to $\Syn$ as the \emph{synthetic graph} of $\M$.  Mitas continued the series by arguing that the minimal representation problem can be solved in $O(n)$ time~\cite{Mitas1994}.  
Although her algorithm has a flaw and the correct version runs in $O(n^2)$ time~\cite{SoulignacJGAA2017a}, it correctly solves \RepUIG in $O(n)$ time.  Her algorithm follows by observing that $\Syn$ admits a peculiar plane drawing in which the vertices occupy the entries of an imaginary matrix.  

Klavik et al. rediscovered and extended Pirlot's system $S_\Len$ to solve the bounded representation problem for UIG models in nearly quadratic time~\cite{KlavikKratochvilOtachiRutterSaitohSaumellVyskocilA2017}.  
Later, Soulignac generalized $S_\Len$ to a new system $S_{\Circ,\Len}$ to solve the problem of deciding if $\M$ is equivalent to a $(\Circ,\Len)$-CA model when $\M$, $\Circ$, and $\Len$ are given as input~\cite{SoulignacJGAA2017,SoulignacJGAA2017a}.  
The algorithm runs in $O(n^2)$ time and it can be adapted to solve the bounded representation problem for UCA models in $O(n^2)$ time as well.  Furthermore, Soulignac adapted Mitas' drawings to UCA models to design a certifying algorithm for \Rep that runs in $O(n)$ time or logspace.   Moreover, he proved that every UCA model is equivalent to some minimal UCA model, though he left open the problem of computing such a minimal model in polynomial time.  

Besides the previous five works, other articles apply systems of difference constraints to solve numerical representation problems related to intervals and circular-arc graphs (e.g.~\cite{BalofDoignonFioriniO2013,BoyadzhiyskaIsaakTrenkAe2017a,HamburgerMcConnellPorSpinradXu2018}).

\subsection{Our contributions}

In this article we follow the path described above.  In Section~\ref{sec:synthetic graph}, we define a system $S^k_{\Circ,\Len}$ with $O(n)$ difference constraints to solve \kMult for the particular case in which the output $\Unit$ is required to be a $(\Circ,\Len+1)$-CA model.  The algorithm obtained runs in $O(n^2)$ time.  In Section~\ref{sec:mitas drawing}, we study the structure of the unweighted graph $\Syn^k$ that represents $S^k_{\Circ,\Len}$.  As part of this section we provide an analogous of Mitas' drawings for $\Syn^k$.  In Section~\ref{sec:recognition} we exploit these drawings to devise a simple $O(n)$-time algorithm 
to solve \kMult.  The algorithm in this section outputs a negative certificate when the answer is no.  Theorems \ref{thm:equivalence}~and~\ref{thm:fast equivalence} are the main theoretical contributions in Section~\ref{sec:recognition}, as they provide characterizations of those PCA models that have equivalent $k$-multiplicative UCA models.  As far as our knowledge extends, these theorems are new even for $k=1$, and they yield the simplest algorithm currently known to solve \Rep.  Finally, in Section~\ref{sec:representation} we show how to build a $k$-multiplicative UCA model equivalent to $\M$ when the answer to \kMult is yes.  Theorem~\ref{thm:tucker} is the main theoretical contribution in this section, as it gives us an alternative characterization of those PCA models that are equivalent to $k$-multiplicative UCA models.  For $k=1$, Theorem~\ref{thm:tucker} is a restatement of a theorem by Soulignac~\cite{SoulignacJGAA2017} 
that, in turn, is a generalization Tucker's characterization.

The algorithm to transform a PCA model $\M$ into an equivalent $k$-multiplicative UCA model in Section~\ref{sec:representation} is rather similar to the one given in \cite{SoulignacJGAA2017a}.  
However, the theoretical framework developed to prove that the algorithm is correct is new.  For instance, Theorems \ref{thm:equivalence}~and~\ref{thm:fast equivalence} in Section~\ref{sec:recognition} are new and can be applied to solve other open problems, such as the minimal representation problem, in polynomial time.  These applications, preliminarily described in~\cite{SoulignacTerliskyC2017}, will be discussed in forthcoming articles.  The major difference with respect to previous contributions is that we exploit a powerful geometric framework arising from the combination of Mitas' drawings and the loop unrolling technique.  When $\M$ is a PIG model, the Mitas' drawing of its synthetic graph $\Syn^k$ is a plane drawing for every $k \geq 0$; this property is lost when $\M$ is a PCA model.  The problem is that the edges of $\Syn^k$ corresponding to external arcs of $\M^k$ cross other edges.  This is hard to deal with when $k$ is large, as many arcs in $\M^k$ are external.  To apply the loop unrolling technique (\cref{fig:unrolling}), the idea is to replicate $\Copies$ times the arcs of a PCA model $\M$, for a sufficiently large $\Copies$.  This yields a new model $\Copies\Unroll\M$ in which every arc of $\M^k$ has several internal copies.  Interestingly, the PIG model $\M'$ obtained after removing all the arcs of $\Copies\Unroll\M$ that are external in $(\Copies\Unroll\M)^k$ has enough information to solve \kMult.  The idea, then, is to study the Mitas' drawing of the synthetic graph of $\M'$ as if it were a planar representation of $\Syn^k$.

\begin{figure}
   \mbox{}\hfill\includegraphics{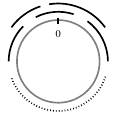}\hfill\includegraphics{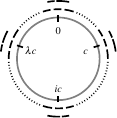}\hfill\includegraphics{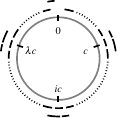}\hfill\mbox{}
	\caption{From left to right: a PCA model $\M$, its loop unrolling $\Copies \Unroll \M$ with $\Copies $ copies, and the model $\M$ obtained by removing the external arcs of $\Copies \Unroll \M$. Every external arc of $\M$ has many internal copies in $\Copies \Unroll\M$.}	
	\label{fig:unrolling}
\end{figure}

It is important to remark that none of the previous tools is required to formally state our characterizations; all of our results can be easily translated to the idiom of PCA models.  Clearly, synthetic graphs and Mitas' drawings are not new concepts, while loop unrolling is a natural and old technique that, unsurprisingly, has already been applied to circular-arc models (e.g.~\cite{WerraEisenbeisLelaitMarmolDAM1999}).  Yet, we are not aware of any work that combines them together to obtain structural results about PCA and UCA models.

\section{Preliminaries}

This section recalls how to solve a system of difference constraints and it introduces the remaining non-standard definitions that we use throughout the article.  For $k \in \mathbb{N}$, we write $\ORange{k} = (0,k) \cap \mathbb{Z}$, $\Range{k} = \ORange{k} \cup \{0\}$ and $\CRange{k} = \Range{k} \cup \{k\}$.  For a logical predicate $b$, we write $\Boolean{b} \in \{0,1\}$ to denote $1$ if and only if $b$ is true. For partial functions $f, g\colon \mathbb{R} \to \mathbb{R}$, we say that $f$ is \emph{bounded below} by $g$, and that $g$ is \emph{bounded above} by $f$, when $f(x) \geq g(x)$ for every $x \in \mathbb{R}$ that belongs to the domain of both $f$ and $g$.  We sometimes write $vw$ or $v \to w$ to denote an ordered pair $(v,w)$.  As usual, we reserve $n$ to denote the number of arcs of an input PCA model $\M$.

A walk $W$ in a digraph $D$ is a sequence of vertices $v_0, \ldots, v_k$ such that $v_iv_{i+1}$ is an edge of $D$ for $i \in \Range{k}$.  Walk $W$ goes \emph{from} (or \emph{begins at}) $v_0$ \emph{to} (or \emph{ends at}) $v_k$.  If $v_k = v_0$, then $W$ is a \emph{circuit}, if $v_i \neq v_j$ for every $0 \leq i < j \leq k$, then $W$ is a \emph{path}. If 
$W$ is a circuit and $v_0, \ldots, v_{k-1}$ is a path, then $W$ is a \emph{cycle}.  If $W' = v_k, \ldots, v_j$ is a walk, then $W + W' = v_1, \ldots, v_j$ is also walk. If 
$W$ is a circuit, then $j \Unroll W = \sum_{i=1}^j W$ is also a circuit for every $j \geq 1$, and if $D$ has no cycles, then $D$ is \emph{acyclic}.  For the sake of notation, we say that $W$ is a \emph{circuit} when $v_0 \neq v_k$ to mean that $W, v_0$ is a circuit.  

An \emph{edge weighting}, or simply a \emph{weighting}, of a digraph $D$ is a function $w\colon E(D) \to \mathbb{G}$ where $\mathbb{G}$ is a totally ordered additive group.  The value $w(e)$ is referred to as the \emph{weight} of $e$ (with respect to $w$).  For any multiset of edges $E$, the \emph{weight} of $E$ (with respect to an edge weighting $w$) is $w(E) = \sum_{e \in E}w(e)$.  We use two distance measures on a digraph $D$ with a weighting $w$.  For vertices $u, v$, we denote by $\Dist{w}(D, u, v)$ the maximum $w(W)$ among the walks $W$ from $u$ to $v$, while $\IDist{w}(D, u,v)$ denotes the maximum $w(W)$ among the paths $W$ starting at $u$ and ending at $v$.  Note that $\IDist{w}(D,u,v) < \infty$ for every $u,v$, while $\Dist{w}(D, u, v) = \IDist{w}(D, u, v)$ when $D$ contains no cycle of positive weight \cite[Section~24.1]{CormenLeisersonRivestStein2009}.  For the sake of notation, we omit the parameter $D$ when no ambiguities are possible.

A \emph{system of difference constraints} is a system $S$ with $m$ linear inequalities and one equation over a set $x_0, \ldots, x_{n-1}$ of indeterminates. The unique equation of $S$ is $x_0 = 0$ while each of the \emph{difference constraints} is an inequality of the form $x_j \geq x_i + c_{ij}$ for $i,j\in\Range{n}$, where $c_{ij}$ is a constant.  For each $i \in \Range{n}$, $i\neq0$, one of the inequalities is the \emph{non-negativity constraint} $x_i \geq x_0 + 0$.  The system $S$ defines a \emph{constraint digraph} $D$ with $n$ vertices and $m$ edges that has a weighting $\Sep$.  The digraph $D$ has a vertex $v_i$ \emph{corresponding} to $x_i$, $i \in \Range{n}$, and an edge $v_iv_j$ with weight $\Sep(v_iv_j) = c_{ij}$ \emph{corresponding} to each inequality $x_j \geq x_i + c_{ij}$ of $S$.  Vertex $v_0$ is the \emph{initial vertex} of $D$.  Clearly, $S$ is fully \emph{determined} by $D$, $\Sep$, and $v_0$.  The following well-known theorem gives a method to solve $S$.

\begin{theorem}[e.g.~{\cite[Theorem 24.9]{CormenLeisersonRivestStein2009}}]\label{thm:difference constraints}
 Let $D$ be the constraint digraph of a system of difference constraints $S$ with indeterminates $x_0, \ldots, x_{n-1}$.  Then, $S$ has a feasible solution if and only if $\Sep(W) \leq 0$ for every cycle $W$ of $D$.  Moreover, if $S$ has a feasible solution, then $x_i = \Dist{\Sep}(v_0, v_i)$ is a feasible solution to $S$.
\end{theorem}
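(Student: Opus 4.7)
The plan is to prove both implications separately and then verify the explicit formula. For the \emph{only if} direction, suppose $x_0, \ldots, x_{n-1}$ is a feasible solution and let $W = v_{i_0}, v_{i_1}, \ldots, v_{i_k}$ be any cycle of $D$, so $v_{i_k} = v_{i_0}$. Each edge $v_{i_j}v_{i_{j+1}}$ corresponds to the inequality $x_{i_{j+1}} - x_{i_j} \geq \Sep(v_{i_j}v_{i_{j+1}})$. Summing these inequalities over $j \in \Range{k}$ makes the left-hand side telescope to $x_{i_k} - x_{i_0} = 0$, yielding $0 \geq \Sep(W)$, as required.

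For the \emph{if} direction, assume that every cycle of $D$ has non-positive weight. The non-negativity constraint $x_i \geq x_0 + 0$ for each $i \neq 0$ ensures that $v_0v_i$ is an edge of $D$ with $\Sep(v_0v_i) = 0$, so every vertex of $D$ is reachable from $v_0$. Because no cycle has positive weight, any walk from $v_0$ to $v_i$ can be decomposed into a path plus a disjoint union of closed walks, each of which can be removed without decreasing the total weight; hence the supremum defining $\Dist{\Sep}(v_0, v_i)$ coincides with $\IDist{\Sep}(v_0, v_i)$ and is attained by some path. In particular, this quantity is finite, so setting $x_i = \Dist{\Sep}(v_0, v_i)$ is well defined.

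It remains to verify that this $x$ satisfies $S$. For $x_0$: the empty walk gives $x_0 \geq 0$, while any non-empty walk from $v_0$ to $v_0$ is a circuit whose weight is at most $0$ by hypothesis, so $x_0 = 0$. The non-negativity constraints hold because the edge $v_0v_i$ of weight $0$ witnesses $x_i \geq 0$. Finally, for any other constraint $x_j \geq x_i + c_{ij}$ of $S$, the corresponding edge $v_iv_j$ of weight $c_{ij}$ can be appended to an optimal walk from $v_0$ to $v_i$, producing a walk from $v_0$ to $v_j$ of weight $x_i + c_{ij}$; by maximality, $x_j \geq x_i + c_{ij}$.

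The main obstacle is the well-definedness step in the second paragraph: one must argue that the supremum $\Dist{\Sep}(v_0, v_i)$ is finite, which is exactly where the non-positive-cycle hypothesis is used via the walk-to-path decomposition. Once this is in place, the remaining verification that the longest-walk distances satisfy every constraint is immediate from the standard triangle-inequality argument for shortest (here longest) paths.
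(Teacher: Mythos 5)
Your proof is correct, and since the paper treats this statement as a known result (citing CLRS Theorem 24.9 without reproducing a proof), there is no in-paper proof to compare against. Your argument is the standard textbook one, adapted to the paper's sign convention of longest-walk distances and non-positive cycles (CLRS phrases it with shortest paths and non-negative cycles): telescoping around a cycle for necessity, the walk-to-path decomposition plus the fact that $\Dist{\Sep}$ and $\IDist{\Sep}$ agree in the absence of positive cycles for well-definedness, and the triangle inequality for verification.
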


If $S$ has $m$ constrains, then the Bellman-Ford algorithm applied to $D$ outputs in $O(nm)$ time a set of values for $x_0, \ldots, x_{n-1}$ or a cycle $W$ of $D$ with $\Sep(W)>0$. In the former case, we refer to $x_i = \Dist{\Sep}(v_0,v_i)$ as the \emph{canonical solution} to $S$.  Say that $S$ and a system $S'$ are \emph{equivalent} when they have the same canonical solution.  An edge $v_iv_j$ of $D$ is \emph{implied} by a path $W$ from $v_i$ to $v_j$ when $\Sep(W) \geq \Sep(v_iv_j)$; if the inequality is strict, then $v_iv_j$ is \emph{strongly implied} by $W$.  By definition, the digraph $D'$ obtained by removing all the strongly implied edges of $D$ defines a system equivalent to $S$.  Moreover, if every edge of $D$ is implied by a path of a spanning subgraph $D'$ of $D$, then $D'$ defines a system equivalent to $S$.

In the above description, there is at most one inequality $x_j \geq x_i + c_{ij}$ in $S$ for each ordered pair $x_ix_j$, while $D$ is a digraph.  Of course, there is no need for another inequality on $x_ix_j$ as one of these would be strongly implied.  Yet, for the sake of simplicity, it is sometimes convenient to describe a system with more than one constraint for each ordered pair $x_ix_j$.  In these situations, the corresponding constraint digraph $D$ is a multidigraph.  For the sake of notation we ignore this fact and we regard the edge $v_iv_j$ as representing both inequalities.  

\section{The synthetic graph of a model}
\label{sec:synthetic graph}

The goal of \kMult is to decide if a PCA model $\M$ is equivalent to a $k$-multiplicative model.  In this section we define a compact system of difference constraints to solve the simpler problem \klcMult: given $\M$, $k \in \ORange{\wrap}$, and even values $\Circ$ and $\Len$, determine if $\M$ is equivalent to a $k$-multiplicative $(\Circ,\Len+1)$-CA model.

By definition, if $\M$ is equivalent to an even $k$-multiplicative $(\Circ,\Len+1)$-CA model $\Unit$, then $i \mult \Unit$ is equivalent to $\M^i$ for every $i \in \CRange{k}$.  So, if $A_0 < \ldots < A_{n-1}$ are the arcs of $\M$ and $U_0 < \ldots < U_{n-1}$ are the arcs of $\U$, then $\Boolean{s(i\mult U_y) \in i \mult U_x} = \Boolean{s(A_y^i) \in A_x^i}$ for every $x,y \in \Range{n}$.  Moreover, $U_x$ and $U_y$ satisfy the following inequalities because $s(i \mult U_y) = s(U_y)$ is even whereas $i\Len+1$ is odd (e.g.~\cref{fig:example-uig-pca,fig:example-mult-power}):
\begin{empheq}[left=\empheqlbrace]{align*}    
   s(U_y) &\leq s(U_x) + i\Len - \Circ\Boolean{x \geq y} & \text{for $i \in \CRange{k}$ if $s(A_y^i) \in A_x^i$}\\
   s(U_y) &\geq s(U_x) + i\Len+2 - \Circ\Boolean{x \geq y} & \text{for $i \in \CRange{k}$ if $s(A_y^i) \not\in A_x^i$} 
\end{empheq}

It is not hard to see that the converse of the previous reasoning is also true, regardless of whether $\Circ$ and $\Len$ are even or odd.  That is, if $\Unit$ is a $(\Circ,\Len+1)$-CA model with arcs $U_0 < \ldots < U_{n-1}$ that satisfy the above system, then $\Unit$ is $k$-multiplicative and equivalent to $\M$.  Moreover, as the position of $0$ in $C(\Unit)$ is irrelevant in the above inequalities, we can take $s(U_0) = 0$.  Therefore, a PCA model $\M$ is equivalent to a $k$-multiplicative UCA model if and only if the \emph{full system} $\F^k_{\Circ,\Len}(\M)$ below has a solution; see \cref{fig:example-model-constraints}. Moreover, any solution to $\F^k_{\Circ,\Len}(\M)$ yields a $k$-multiplicative $(\Circ,\Len+1)$-CA model equivalent to $\M$.  The system $\F^k_{\Circ,\Len}(\M)$ has an indeterminate $s(A)$ for each $A \in \A(\M)$; the overloaded notation is intentional. The equations of $\F^k_{\Circ,\Len}(\M)$ are defined as  
\begin{empheq}{align}    
   s(A) &= 0 && \text{if $A$ is the initial arc}\tag{initial}\label{eq:initial}\\
   s(A) &\geq s(B) - i\Len + \Circ\Boolean{A \geq B} && \text{for $i \in \CRange{k}$ if $s(B^i) \in A^i$}  \tag{$i$-attract}\label{eq:general hollow}\\
   s(B) &\geq s(A) + i\Len+2 - \Circ\Boolean{A \geq B} && \text{for $i \in \CRange{k}$ if $s(B^i) \not\in A^i$} \tag{$i$-repel} \label{eq:general nose}
\end{empheq}

Note that $\F^k_{\Circ,\Len}(\M)$ is a system of difference constraints where the non-negativity constraints follow by \ref{eq:initial} and the $0$-repel constraints (with $A$ as the initial arc).  Thus, it can conveniently be described with its constraint digraph $\F^k(\M)$ and its weighting $\Sep_{\Circ,\Len}(\M)$.  From now on, we drop $\M$ from $\F^k_{\Circ,\Len}$, $\F^k$, and $\Sep_{\Circ,\Len}$ when no confusions are possible.  For the sake of notation, we write $A$ to denote the vertex of $\F^k$ corresponding to $s(A)$ for every $A \in \A(\M)$.  Moreover, we interchangeably treat $A$ as an arc of $\M$ and as a vertex of $\F^k$. Note that $\F^k$ has $\Theta(kn^2)$ edges. The edge $B \to A$ corresponding to \eqref{eq:general hollow} is said to be an \emph{$i$-attract}, while the edge $A \to B$ corresponding to \eqref{eq:general nose} is called an \emph{$i$-repel}, for $i\in \CRange{k}$ and $A,B \in \A(\M)$.  Those $i$-attracts $B \to A$ with $A < B$ and those $i$-repels $A \to B$ with $A < B$ are called \emph{internal}; non-internal edges are said to be \emph{external}.  Intuitively, $A \to B$ is internal when $(s(A),s(B)) \cup \{s(B)\}$ does not contain $0$. 

\begin{figure}
  \mbox{}\hfill\parbox{.28\linewidth}{\includegraphics{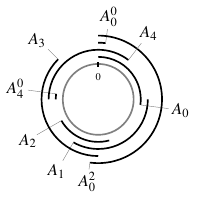}}\hfill%
  \parbox{.62\linewidth}{\begin{empheq}{align*}    
    s(A_1) &\geq s(A_0) + 2 && \text{$0$-repel $A_0 \to A_1$}\\
    s(A_0) &\geq s(A_1) - \Len && \text{$1$-attract $A_1 \to A_0$}\\
    s(A_2) &\geq s(A_0) + \Len + 2 && \text{$1$-repel $A_0 \to A_2$}\\
    s(A_0) &\geq s(A_2) - 2\Len && \text{$2$-attract $A_2 \to A_0$}\\    
    s(A_0) &\geq s(A_4) + 2 - \Circ && \text{$0$-repel external $A_4 \to A_0$}\\
    s(A_4) &\geq s(A_0) - \Len + \Circ && \text{$1$-attract external $A_0 \to A_4$}
  \end{empheq}}\hfill\mbox{}
  \caption{Some constraints of $\F_{\Circ,\Len}^2(\M)$ for the model $\M$ with arcs $A_0, \ldots, A_4$.}\label{fig:example-model-constraints}
\end{figure}

\paragraph{Some strongly implied edges.}

By Theorem~\ref{thm:difference constraints}, a solution to $\F^k_{\Circ,\Len}$, if existing, can be obtained in $O(kn^3)$ time by invoking Bellman-Ford's shortest path algorithm on $\F^k$ and $\Sep_{\Circ,\Len}$.  This algorithm can be improved by observing that most of the edges in $\F^k$ are strongly implied and can be safely removed.  Say that an $i$-attract $B \to A$ is an \emph{$i$-hollow} when $B = \Fr^{i}(A)$ and $A = \Fl^i(B)$.  Intuitively, $A^i$ is the leftmost arc of $\M^i$ reaching $s(B)$ and vice versa, thus $B \to A$ imposes the tightest $i$-attract on $A$ and $B$.

\begin{lemma}\label{lem:attracts are implied}
If\/ $\M$ is a PCA model, $k \in \ORange{\wrap}$, and $i \in \CRange{k}$, then every $i$-attract of\/ $\F^k$ that is not a hollow is strongly implied. 
\end{lemma}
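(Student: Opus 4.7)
The plan is to exhibit, for each non-hollow $i$-attract edge $B \to A$ in $\F^k$, a directed walk of strictly greater weight.  I would split by the distance $d = d(v(A), v(B))$ in $D(\M)$.  When $d < i$, the $d$-attract $B \to A$ is itself an edge of $\F^k$ (since $d \leq k$), with weight $-d\Len + \Circ\Boolean{A \geq B}$, exceeding the $i$-attract weight by $(i - d)\Len > 0$; this single edge already strongly implies the target.

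When $d = i$, the non-hollow condition forces $B \neq \Fr^i(A)$ or, symmetrically, $A \neq \Fl^i(B)$; by the duality exchanging $\Fr$ with $\Fl$ (and reversing edge directions), it suffices to treat the first case.  The key idea is that the hollow endpoint $B^\ast = \Fr^{i'}(A)$, for an appropriate $i' \geq i$, lies cyclically beyond $B$, leaving a geometric gap that a 1-repel of weight $\Len + 2 - \Circ\Boolean{\cdot \geq \cdot}$ can bridge.  Depending on the local structure, I would construct a walk of one of two forms: the direct $B \to B^\ast \to A$ (combining a 1-repel with the $i'$-hollow $B^\ast \to A$) when $B^\ast \notin \RN[B]$, or the extended $B \to \Fl(B) \to B^\ast \to A$ (prepending the 1-attract $B \to \Fl(B)$, which is valid since $s(B) \in \Fl(B)$ by the defining property of $\Fl$) when the direct repel fails because $B^\ast \in \RN[B]$.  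Summing weights in either form yields approximately $-i'\Len + 2$ plus $\Circ$-corrections, and in the generic non-wrapping configuration this exceeds the attract weight $-i\Len$ by $+2$ (with $i'$ chosen as $i$ or $i + 1$ using $i < \wrap$).

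The main obstacles are twofold.  First, one must verify the existence of the chosen 1-repel edge inside $\F^k$ and select a valid $i' \leq k$: a case analysis on whether $\Fr(\Fl(B))$ coincides with $B$ (which decides if the 1-repel $\Fl(B) \to \Fr^i(A)$ already exists) resolves the generic situation, while the boundary case $i = k$ requires replacing the closing $i'$-hollow with a further 1-repel of wrap-adjusted weight to remain inside $\F^k$.  Second, the wrapping indicators $\Boolean{\cdot \geq \cdot}$ must be tracked carefully across the four arcs $A$, $B$, $\Fl(B)$, $B^\ast$: a systematic analysis over the configurations of external and internal arcs is needed to verify that the $+2$ advantage from the repel is not cancelled by sign flips in the $\Circ$-contributions, so that strict improvement over the $i$-attract weight is preserved in every case.
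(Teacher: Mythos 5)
The paper's proof is far simpler and avoids every obstacle you flag: assuming $B \neq \Fr^i(A)$, it takes the two-edge path $B, \RR(B), A$, observing that $B \to \RR(B)$ is \emph{always} a $0$-repel (weight $2 - \Circ\Boolean{B \geq \RR(B)}$) and that $\RR(B) \to A$ is still an $i$-attract precisely because $B \neq \Fr^i(A)$ forces $s(\RR(B))$ to remain in $A^i$. A one-line modular identity on the $\Boolean{\cdot \geq \cdot}$ indicators then gives a margin of exactly $2$, uniformly in $\Circ,\Len$ and with no case split on $d$, $i = k$, or the position of $\Fr^i(A)$ relative to $B$. The key insight you miss is that one need not jump all the way to the far endpoint $\Fr^{i'}(A)$: a single step to the adjacent arc $\RR(B)$ already does the job.

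Your route has genuine gaps that the acknowledged ``obstacles'' understate. In the $d<i$ case, the single $d$-attract edge gives margin $(i-d)\Len$, which vanishes when $\Len = 0$, so the implication is not \emph{strong} uniformly in the weighting as the paper's bound $2$ is. In the $d=i$ case, the extended path $B \to \Fl(B) \to B^\ast \to A$ fails outright when $\Fl(B) = B$ (a legitimate possibility: nothing forces another arc to end inside $B$), since the path would repeat a vertex; and even when $\Fl(B) \neq B$, you never establish that $\Fl(B) \to B^\ast$ is a $1$-repel, i.e.\ that $s(B^\ast) \not\in \Fl(B)$ — both $s(B^\ast)$ and $t(\Fl(B))$ lie in $B$ under the Case-2 hypothesis, and their relative order is not determined. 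Finally, the choice $i' \in \{i, i+1\}$ breaks at $i=k$ because $\F^k$ has no $(k+1)$-attracts, and the fix you gesture at (``a further 1-repel of wrap-adjusted weight'') is left entirely unspecified. These are not bookkeeping details; resolving them would amount to a new argument, and the resulting proof would still be substantially more complicated than the paper's two-edge path.
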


\begin{proof}
Suppose that an $i$-attract $B \to A$ is not a hollow, thus $i > 0$ and either $B \neq \Fr^i(A)$ or $A \neq \Fl^i(B)$.  We prove only the former case, as the latter case is analogous. Hence, $s(B)$, $s(\RR(B))$, and $t(\Fr^{i}(A))$ appear in this order in $C(\M)$ (\cref{fig:constraints}(a)). By definition, $\RR(B) \to A$ is an $i$-attract and $B \to \RR(B)$ is a $0$-repel.  Moreover, either $A \leq B \leq R(B)$ or $R(B) \leq A \leq B$ or $B \leq R(B) \leq A$, thus $\Boolean{A \geq R(B)} - \Boolean{B \geq R(B)} = \Boolean{A \geq B}$.  Altogether, it follows that $B \to A$ is strongly implied by the path $\W = B, \RR(B), A$ of $\F^k$ because
\begin{align*}
 \Sep_{\Circ,\Len}(\W) &= 2 - \Circ\Boolean{B \geq R(B)} -i\Len + \Circ\Boolean{A \geq R(B)}  \\&= 2 -i\Len  + \Circ\Boolean{A \geq B} = 2+\Sep_{\Circ,\Len}(B \to A).\tag*{}
\end{align*}
\end{proof}

The fact that most $i$-repel edges are strongly implied can be proven with similar arguments.  Say that an $i$-repel $A \to B$ is an \emph{$i$-nose} when $A = \Hl^{i} \circ \LL(B)$. The next lemma provides a symmetric definition for $i$-noses: $A \to B$ is an $i$-nose when $B = \Hr^i\circ\RR(A)$.  Colloquially, $A \to B$ is an $i$-nose when $A^i$ is the rightmost arc not reaching $B^i$ and $B^i$ is the leftmost arc not reached by $A^i$.

\begin{lemma}\label{lem:nose equivalences}
Let\/ $\M$ be a PCA model, $k \in \ORange{\wrap}$, and $i \in \CRange{k}$.  The following statements are equivalent for $A, B \in \A(\M)$, and each of them implies that $A \to B$ is an $i$-repel of\/ $\F^k$.
\begin{discription}[label={\textbf{S\arabic*:}},ref={S\arabic*}]
 \item $A = \Hl^i \circ \LL(B)$.\label{lem:nose equivalences:def}
 \item $\Fr^w(A)=\Hl^{i-w} \circ \LL(B)$ for every $w \in \CRange{i}$.\label{lem:nose equivalences:fr}
 \item \ref{lem:nose equivalences:fr} and $\RR \circ \Fr^x(A) = \Hr^x \circ \RR(A)$ for every $x \in \CRange{i}$.\label{lem:nose equivalences:hr r}
 \item $\Fl^z(B)=\Hr^{i-z} \circ \RR(A)$ for every $z \in \CRange{i}$.\label{lem:nose equivalences:fl}
 \item \ref{lem:nose equivalences:fl} and $\LL \circ \Fl^y(B) = \Hl^y \circ \LL(B)$ for every $y \in \CRange{i}$.\label{lem:nose equivalences:hl l}
 \item $B = \Hr^i \circ \RR(A)$.\label{lem:nose equivalences:sym}
\end{discription}
\end{lemma}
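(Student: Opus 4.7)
Plan: I would prove the six-way equivalence via the cyclic chain $\text{S1} \Rightarrow \text{S2} \Rightarrow \text{S3} \Rightarrow \text{S6} \Rightarrow \text{S4} \Rightarrow \text{S5} \Rightarrow \text{S1}$, together with the separate claim that any of the statements forces $A \to B$ to be an $i$-repel of $\F^k$. The implications $\text{S2} \Rightarrow \text{S1}$, $\text{S3} \Rightarrow \text{S2}$, $\text{S4} \Rightarrow \text{S6}$, and $\text{S5} \Rightarrow \text{S4}$ are immediate from evaluating the quantified identities at a boundary index. Moreover, the involution reversing the circular orientation (equivalently, the swap $A \leftrightarrow B$, $\Fr \leftrightarrow \Fl$, $\Hl \leftrightarrow \Hr$, $\LL \leftrightarrow \RR$) interchanges S1 with S6, S2 with S4, and S3 with S5, so $\text{S6} \Rightarrow \text{S4}$, $\text{S4} \Rightarrow \text{S5}$, and $\text{S5} \Rightarrow \text{S1}$ are mirrors of $\text{S1} \Rightarrow \text{S2}$, $\text{S2} \Rightarrow \text{S3}$, and $\text{S3} \Rightarrow \text{S6}$, respectively. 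Hence it suffices to prove the latter three implications plus the $i$-repel claim.

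For $\text{S1} \Rightarrow \text{S2}$: by the definition of $\Hl$, $\Fr \circ \Hl$ is the identity on its domain, so applying $\Fr^w$ to both sides of $A = \Hl^i \circ \LL(B)$ collapses $w$ copies of $\Hl$ and yields $\Fr^w(A) = \Hl^{i-w} \circ \LL(B)$ for each $w \in \CRange{i}$. For $\text{S3} \Rightarrow \text{S6}$: specializing S2 at $w = i$ gives $\Fr^i(A) = \LL(B)$, hence $\RR \circ \Fr^i(A) = B$; the extra condition of S3 at $x = i$ rewrites the left side as $\Hr^i \circ \RR(A)$, so $B = \Hr^i \circ \RR(A)$.

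The main step is $\text{S2} \Rightarrow \text{S3}$, which I would prove by induction on $x$, the case $x = 0$ being trivial. For the step, set $D = \Fr^x(A)$ and $F = \Fr(D)$, and verify that $\RR(F)$ satisfies the two defining conditions of $\Hr(\RR(D))$. The condition $\Fl(F) \neq \RR(D)$ follows from $D \in \LN[F]$: from $s(F) \in [s(D), t(D))$ and inclusion-freeness one has $t(D) \in (s(F), t(F))$, whence $\Fl(F) \leq D < \RR(D)$. The condition $\Fl(\RR(F)) = \RR(D)$ is where S2 is used: the identity $\Hl(F) = D$ gives $\Fr(\RR(D)) \neq F$, so $\Fr(\RR(D)) \geq \RR(F)$ and hence $s(\RR(F)) < t(\RR(D))$; since $\RR(F) \notin \RN[D]$ we also get $s(\RR(F)) > t(D)$, placing $\RR(D) \in \LN[\RR(F)]$. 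Any earlier candidate $Y < \RR(D)$ would satisfy $t(Y) \leq t(D) < s(\RR(F))$ by inclusion-freeness, so $Y \notin \LN[\RR(F)]$, and $\RR(D)$ is indeed the leftmost arc of $\LN[\RR(F)]$.

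For the $i$-repel claim: S1 yields $\Fr^i(A) = \LL(B)$, so the stipulation $\varepsilon < n^{-1}$ in the construction of $A^i$ places $t(A^i) = s(\Fr^i(A)) + \varepsilon(x+1)$ strictly before $s(B)$ in the clockwise order. Thus $s(B^i) = s(B) \notin A^i$, which is precisely the hypothesis generating the $i$-repel edge $A \to B$ of $\F^k$. The main obstacle, as flagged above, is the inductive step of $\text{S2} \Rightarrow \text{S3}$, where one must combine the $\Hl$ identity supplied by S2 with the inclusion-free structure of the PCA model to sandwich $\RR(D)$ as the leftmost arc of $\LN[\RR(F)]$; everything else is either a definitional unfolding or follows by the left-right symmetry.
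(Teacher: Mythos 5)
Your proof is correct and takes essentially the same approach as the paper: the core of both is an induction establishing the one-step identity (that $X=\Hl(Y)$ forces $\RR(Y)=\Hr\circ\RR(X)$) and the left-right symmetry of the definitions, and your inline verification of that one-step fact from $\LN$/$\RN$ and properness mirrors the paper's Fact~\ref{lem:nose equivalence:one step r}. The only difference is organizational — your single cyclic chain $\text{S1}\Rightarrow\text{S2}\Rightarrow\text{S3}\Rightarrow\text{S6}\Rightarrow\text{S4}\Rightarrow\text{S5}\Rightarrow\text{S1}$ with the reversal involution made explicit is marginally tighter than the paper's forward chain $\text{S1}\Rightarrow\cdots\Rightarrow\text{S6}$ followed by an ``analogous'' reverse chain, since you avoid proving $\text{S3}\Rightarrow\text{S4}$ as a separate induction — but the substance is the same.
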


\begin{proof}
 In this proof we use the following facts, and we write IH to reference the active inductive hypothesis, if any.  
 \begin{discription}[label={\textbf{F\arabic*:}},ref={F\arabic*}]
  \item If $X \in \A(\M)$ and $\Hr(X) \neq \bot$, then $\Fl \circ \Hr(X) = X$.\label{lem:nose equivalences:fl hr}
  \item If $X \in \A(\M)$ and $\Hl(X) \neq \bot$, then $\Fr \circ \Hl(X) = X$. \label{lem:nose equivalences:fr hl}
  \item If $X, Y \in \A(\M)$ and $X = \Hl(Y)$, then $\RR(Y) = \Hr \circ \RR(X)$. \label{lem:nose equivalence:one step r}
  \item If $X, Y \in \A(\M)$ and $X = \Hr(Y)$, then $\LL(Y) = \Hl \circ \LL(X)$. \label{lem:nose equivalence:one step l}
 \end{discription}
 
 \begin{factproof}
  \ref{lem:nose equivalences:fl hr} and \ref{lem:nose equivalences:fr hl} follow directly from the definition of $\Hr$ and $\Hl$, respectively.  For \ref{lem:nose equivalence:one step r}, observe that $s(\RR(Y))$ is the extreme immediately after $t(X)$ in a clockwise traversal of $C(\M)$ (\cref{fig:constraints}(b)).  Then, $\Fl\circ\RR(Y) = \RR(X)$, thus $\RR(Y) = \Hr \circ \RR(X)$ because $\Fl(Y) \neq \RR(X)$.  The proof of \ref{lem:nose equivalence:one step l} is omitted as it is analogous to that of \ref{lem:nose equivalence:one step r}.\hfill$\triangle$
 \end{factproof}

 $\ref{lem:nose equivalences:def} \Rightarrow \ref{lem:nose equivalences:fr}$ is proven by induction.  The base case $w = 0$ is trivial.  For $w+1 \leq i$, note that $\Hl^{i-w}\circ\LL(B) \neq \bot$ because $\Hl^i\circ\LL(B) = A \neq \bot$.  Then, $\Hl^{i-(w+1)}\circ\LL(B) \stackrel{\ref{lem:nose equivalences:fr hl}}{=} \Fr\circ\Hl^{i-w}\circ\LL(B) \stackrel{IH}{=} \Fr\circ\Fr^w(A) = \Fr^{w+1}(A)$.

  $\ref{lem:nose equivalences:fr} \Rightarrow \ref{lem:nose equivalences:hr r}$ is proven by induction.  The base case $x = 0$ is trivial.  For $x+1 \leq i$, let $X = \Hl^{i-x} \circ \LL(B)$ and $Y = \Hl^{i-(x+1)} \circ \LL(B)$.  By \ref{lem:nose equivalences:fr}, $X \neq \bot$ and $Y \neq \bot$, thus $X, Y \in \A(\M)$.  Then, $\Hr^{x+1}\circ\RR(A) = \Hr\circ\Hr^x\circ\RR(A) \stackrel{IH}{=} \Hr\circ\RR\circ\Fr^x(A) \stackrel{\ref{lem:nose equivalences:fr}}{=} \Hr\circ\RR\circ\Hl^{i-x}\circ\LL(B) = \Hr\circ\RR(X) \stackrel{\ref{lem:nose equivalence:one step r}}{=} \RR(Y) = 
  \RR\circ\Hl^{i-(x+1)}\circ \LL(B) \stackrel{\ref{lem:nose equivalences:fr}}{=} \RR \circ \Fr^{x+1}(A)$.
 
 $\ref{lem:nose equivalences:hr r} \Rightarrow \ref{lem:nose equivalences:fl}$ is proven by induction.  If $z = 0$, then $B = \RR\circ\Hl^0\circ\LL(B) \stackrel{\ref{lem:nose equivalences:fr}}{=} \RR\circ\Fr^i(A) \stackrel{\ref{lem:nose equivalences:hr r}}{=} \Hr^i\circ\RR(A)$, whereas if $z+1\leq i$, then $\Fl^{z+1}(B) = \Fl\circ\Fl^z(B) \stackrel{IH}{=} \Fl\circ\Hr^{i-z}\circ\RR(A) \stackrel{\ref{lem:nose equivalences:fl hr}}= \Hr^{i-(z+1)}\circ\RR(A)$.

 Implication $\ref{lem:nose equivalences:fl} \Rightarrow \ref{lem:nose equivalences:hl l}$ is analogous to $\ref{lem:nose equivalences:fr} \Rightarrow \ref{lem:nose equivalences:hr r}$; $\ref{lem:nose equivalences:hl l} \Rightarrow \ref{lem:nose equivalences:sym}$ is trivial; and the chain $\ref{lem:nose equivalences:sym} \Rightarrow \ref{lem:nose equivalences:fl} \Rightarrow \ref{lem:nose equivalences:hl l} \Rightarrow \ref{lem:nose equivalences:fr} \Rightarrow \ref{lem:nose equivalences:hr r} \Rightarrow \ref{lem:nose equivalences:def}$ is analogous to the chain $\ref{lem:nose equivalences:def} \Rightarrow \ref{lem:nose equivalences:fr} \Rightarrow \ref{lem:nose equivalences:hr r}\Rightarrow \ref{lem:nose equivalences:fl} \Rightarrow \ref{lem:nose equivalences:hl l} \Rightarrow \ref{lem:nose equivalences:sym}$.  Finally, note that  if any of the statements is true, then  $\Fr^i(A) \stackrel{\ref{lem:nose equivalences:def}}{=} \Fr^i\circ \Hl^i \circ \LL(B) \stackrel{\ref{lem:nose equivalences:fl hr}}{=} \LL(B)$, thus $A \to B$ is an $i$-repel of $\F^k$ by $\ref{lem:nose equivalences:fr}$.
\end{proof}

\begin{lemma}\label{lem:repels are implied}
If\/ $\M$ is a PCA model, $k \in \ORange{\wrap}$, and $i \in \CRange{k}$, then every $i$-repel of\/ $\F^k$ that is not a nose is strongly implied.
\end{lemma}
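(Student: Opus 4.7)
The plan is to mirror the proof of \cref{lem:attracts are implied}, with noses in the role of hollows and with the left neighbor $\LL(B)$ playing the symmetric role of $\RR(B)$. Suppose that an $i$-repel $A \to B$ is not an $i$-nose; by \cref{lem:nose equivalences} this means both $A \neq \Hl^i \circ \LL(B)$ and $B \neq \Hr^i \circ \RR(A)$. The $i$-repel hypothesis says $s(B) \notin A^i$, which is equivalent to $\Fr^i(A)$ coming (circularly) strictly before $B$, i.e.\ $\Fr^i(A) \leq \LL(B)$.

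I would then split along whether this inequality is strict. When $\Fr^i(A) < \LL(B)$, the point $s(\LL(B))$ lies beyond $t(A^i)$, so $A \to \LL(B)$ is itself an $i$-repel; combined with the $0$-repel $\LL(B) \to B$, this gives a walk $\W = A, \LL(B), B$ in $\F^k$. When $\Fr^i(A) = \LL(B)$, the failure of $A = \Hl^i \circ \LL(B)$ means that $A$ is not the rightmost arc with $\Fr^i$-value $\LL(B)$, forcing $\Fr^i(\RR(A)) = \LL(B)$ as well; hence $\RR(A) \to B$ is an $i$-repel and, together with the $0$-repel $A \to \RR(A)$, yields the walk $\W = A, \RR(A), B$ in $\F^k$.

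With $\W$ fixed, the weight computation is exactly of the form handled in \cref{lem:attracts are implied}. In either case summing the two edge weights gives
\[\Sep_{\Circ,\Len}(\W) = i\Len + 4 - \Circ\bigl(\Boolean{X \geq Y} + \Boolean{Y \geq Z}\bigr),\]
where $(X, Y, Z)$ is $(A, \LL(B), B)$ or $(A, \RR(A), B)$. A routine case analysis on the clockwise ordering of the three (pairwise distinct) arcs yields the identity $\Boolean{X \geq Y} + \Boolean{Y \geq Z} = \Boolean{X \geq Z}$, and therefore $\Sep_{\Circ,\Len}(\W) = i\Len + 4 - \Circ\Boolean{A \geq B} = \Sep_{\Circ,\Len}(A \to B) + 2$, so $A \to B$ is strongly implied by $\W$.

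The only subtle point is making $\W$ have three distinct vertices, since the indicator identity above can fail in the wraparound corner cases ($B = A_0$ in the first case, $A = A_{n-1}$ in the second) unless one rules out $A = \LL(B)$ and $\RR(A) = B$ respectively. Either coincidence would force $\Fr^i(A)$ to wrap far enough that $s(B) \in A^i$, contradicting the $i$-repel hypothesis; here one leans on $k < \wrap$ to forbid models where $A^i$ already engulfs the whole circle. This is the main obstacle, but as in \cref{lem:attracts are implied} it is dispatched by routine monotonicity of $\Fr^i$.
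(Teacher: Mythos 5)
Your Case 1 argument ($\Fr^i(A) < \LL(B)$, walk $A,\LL(B),B$) is sound and corresponds to the paper's sub-case in which the maximal index $j$ with $\Fr^j(A)\neq\Hl^{i-j}\circ\LL(B)$ equals $i$; the weight computation and the distinctness check are also fine. But your Case 2 has a genuine gap, and it is not the ``subtle point'' you flag at the end. You assert that when $\Fr^i(A)=\LL(B)$ and $A\neq\Hl^i\circ\LL(B)$, then $A$ is not the rightmost arc with $\Fr^i$-value $\LL(B)$ and hence $\Fr^i(\RR(A))=\LL(B)$. This tacitly assumes that whenever \emph{some} arc has $\Fr^i$-value $\LL(B)$, the rightmost such arc equals $\Hl^i\circ\LL(B)$; for $i=1$ this is the definition of $\Hl$, but for $i\geq 2$ it fails because the $\Hl$-iteration can land on an arc outside the range of $\Fr$, at which point $\Hl^i\circ\LL(B)=\bot$ even though a rightmost arc with $\Fr^i$-value $\LL(B)$ exists. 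In that situation $A$ itself may be the rightmost one, $\RR(A)\to B$ is then an $i$-attract rather than an $i$-repel, and the walk $A,\RR(A),B$ is simply not in $\F^k$.

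A concrete instance: take a proper interval model with $A_0=(0,2.1)$, $A_1=(1,5.9)$, $A_2=(2,6.1)$, $A_3=(3,7)$ and $A_j=(j,j+4)$ for $j\geq 4$, and set $i=2$, $B=A_7$. Then $\Fr(A_0)=A_2$, $\Fr(A_1)=A_5$, $\Fr(A_2)=\Fr(A_3)=A_6$, so $\Fr^2(A_0)=A_6=\LL(B)$ and $A_0\to A_7$ is a $2$-repel. One has $\Hl(A_6)=A_3$, but no arc has $\Fr$-value $A_3$, so $\Hl^2(A_6)=\bot\neq A_0$ and $A_0\to A_7$ is not a nose; yet $\Fr^2(A_1)=\Fr(A_5)=A_8>A_6$, so $A_1\to A_7$ is a $2$-attract and your proposed walk does not exist. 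The paper sidesteps this by choosing $j$ \emph{maximal} with $\Fr^j(A)\neq\Hl^{i-j}\circ\LL(B)$ (here $j=1$), setting the intermediate vertex to $X=\Hl^{i-j}\circ\LL(B)$ (here $A_3$), which the maximality of $j$ guarantees is nonbottom since $\Fr^j(A)$ has $\Fr$-value $\Hl^{i-j-1}\circ\LL(B)$, and then splitting $A\to B$ into a $j$-repel $A\to X$ and an $(i-j)$-repel $X\to B$ via Lemma~\ref{lem:nose equivalences}. So the first leg of your walk should be a $j$-repel into $\Hl^{i-j}\circ\LL(B)$ for the appropriate $j\in\CRange{i}$, not always a $0$-repel into $\RR(A)$; with that correction the rest of your structure goes through.
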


\begin{proof}
Suppose that an $i$-repel $A \to B$ is not a nose.  By Lemma~\ref{lem:nose equivalences}, $\Fr^j(A) \neq \Hl^{i-j}\circ\LL(B)$ for some $j \in \CRange{i}$.  Among all the possible choices, take the one maximizing $j$.  Note that either $j = i$ or $\Fr^{j+1}(A) = \Hl^{i-j-1}\circ\LL(B)$.  In the latter case $\Hl^{i-j} \circ \LL(B) \neq \bot$, while in the former case $\Hl^{i-j} \circ \LL(B) = \LL(B) \neq \bot$.  So, regardless of whether $i = j$, $\Hl^{i-j}\circ \LL(B) = X \neq \bot$ for some $X \in \A(\M)$ (\cref{fig:constraints}(c)).  Moreover, $s(\RR(X)) \not\in A^j$ because otherwise either $i = j$ and $A \to B$ is an $i$-attract or $i < j$ and $\Fr^{j+1}(A) \neq \Fr(X) = \Hl^{i-j-1}\circ\LL(B)$ (\cref{fig:constraints}(c)).  Then, $A \to X$ is a $j$-repel because $\Fr^j(A) \neq X$, i.e., $s(X) \not\in A^j$ (\cref{fig:constraints}(c)).  By Lemma~\ref{lem:nose equivalences}, $X \to B$ is an $(i-j)$-repel.  Altogether, $A \to B$ is implied by the path $\W = A, X, B$ of $\F^k$ because either $A \leq X \leq B$ or $X \leq B \leq A$ or $B \leq A \leq X$ and
 \begin{align*}
  \Sep_{\Circ,\Len}(\W) &= j\Len + 2 - \Circ\Boolean{A \geq X} + (i-j)\Len + 2 - \Circ\Boolean{X \geq B} 
  \\&= i\Len + 4 - \Circ\Boolean{A \geq B} = 4+\Sep_{\Circ,\Len}(A \to B).\tag*{}
 \end{align*}
\end{proof}

\begin{figure}
 \begin{tabular}{ccc}
   \includegraphics{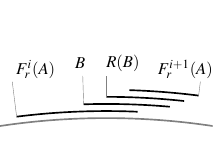} & \includegraphics{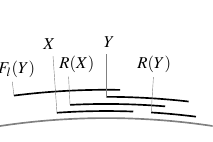} &  \includegraphics{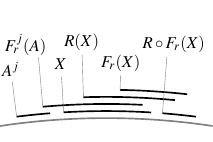}\\
  (a) & (b) & (c)
 \end{tabular}
 \caption[]{Removal of strongly implied constraints: (a) Lemma~\ref{lem:attracts are implied}; (b) Lemma~\ref{lem:nose equivalences} (\ref{lem:nose equivalence:one step r}); (c) Lemma~\ref{lem:repels are implied} for $j < i$.  In (c), $\Fr^{j+1}(A) = \Fr(X) = \Hl^{i-j-1} \circ \LL(B)$.}\label{fig:constraints}
\end{figure}

\paragraph{Some (weakly) implied edges.}

By Lemmas~\ref{lem:attracts are implied} and~\ref{lem:repels are implied}, the spanning subgraph $\F'$ of $\F^k$ formed by the hollows and noses, together with $\Sep_{\Circ,\Len}$, describes a system equivalent to $\F^k_{\Circ,\Len}$.  The digraph $\F'$ has $O(kn)$ edges and it can be further simplified.  For $k \in \ORange{\omega}$, let $\Syn_*^k(\M)$ be the spanning subgraph of $\F^k$ whose edges are the $1$-hollows and $i$-noses of $\M$, for $i \in \CRange{k}$.
 
\begin{lemma}\label{lem:only 1 hollows}
If\/ $\M$ is a PCA model and $k \in \ORange{\wrap}$, then\/ $\Syn^k_*(\M)$, together with\/ $\Sep_{\Circ,\Len}$, describes a system equivalent to\/ $\F^k_{\Circ,\Len}$.
\end{lemma}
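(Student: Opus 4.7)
My plan is to argue by induction on $i$ that every $i$-hollow of $\F^k$ with $i\geq 2$ is implied by a path in $\Syn^k_*$. Combined with Lemmas~\ref{lem:attracts are implied}~and~\ref{lem:repels are implied}, which already reduce $\F^k$ to the subgraph of hollows and noses, this will suffice: $\Syn^k_*$ contains every $i$-nose and every $1$-hollow, so only the $i$-hollows with $i\geq 2$ need to be implied.

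Let $B\to A$ be an $i$-hollow with $i\geq 2$, so $B=\Fr^i(A)$ and $A=\Fl^i(B)$. The natural candidate for splitting this edge is $A'=\Fl(B)$: the hollow identity gives $A=\Fl^{i-1}(A')$ immediately, and a short monotonicity argument, using that $\Fl\circ\Fr \leq \mathrm{id}$ in the natural ordering of arcs together with $\Fl(B)\leq \Fr^{i-1}(A)$ (both arcs have endpoint in $B$, and $\Fl(B)$ is the leftmost such), shows that $\Fr(A') = B$. Hence $B\to A'$ is a genuine $1$-hollow and lies in $\Syn^k_*$. For the remaining piece from $A'$ to $A$, if $A'=\Fl^{i-1}(B)$ then $A'\to A$ is an $(i-1)$-hollow and the induction hypothesis applies directly. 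Otherwise, the chains $E_s=\Fl^s(B)$ and $D_t=\Fr^t(A)$ disagree before collapsing to $A$; in that case I would walk forward from $A'$ along a sequence of $0$-nose edges $X\to\RR(X)$, each of which lies in $\Syn^k_*$, until landing on $D_{i-1}=\Fr^{i-1}(A)$, and then invoke the induction hypothesis on the $(i-1)$-hollow $D_{i-1}\to A$ (whose two endpoint conditions $\Fr(A)$ being the leftmost backward arc etc.\ follow by the symmetric dual of the argument just sketched).

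The weight bookkeeping is routine: each $1$-hollow contributes $-\Len$, each $0$-nose contributes $+2$, and the induction hypothesis yields a sub-path of weight at least $-(i-1)\Len$; summing, the total is at least $-i\Len$, which matches the weight of the $i$-hollow in the internal case. In the external case, the wraparound $+\Circ$ appears in exactly one edge of the path (either the outer $1$-hollow, the inductive sub-path, or one of the detour $0$-noses), matching $\Circ\Boolean{A\geq B}$. The main obstacle I anticipate is guaranteeing that the detour $\RR$-walk from $A'$ to $D_{i-1}$ is contiguous and does not cross the initial arc more than the single time allowed by $\Boolean{A\geq B}$; this ultimately reduces to the inequality $E_s\leq D_{i-s}$ coming from $\Fl\circ\Fr\leq\mathrm{id}$, which places every intermediate $\RR^j(A')$ between the two canonical chains and prevents spurious wraps.
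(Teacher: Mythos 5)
Your proposal is essentially correct, but it takes a different (and somewhat heavier) route than the paper. The paper's proof does not peel off a $1$-hollow at $\Fl(B)$: it decomposes the $i$-hollow $A_i\to A_0$ into the $\Fr$-chain of $1$-attracts $A_i\to A_{i-1}\to\cdots\to A_0$ (with $A_j=\Fr^j(A_0)$), observes that the $\Circ$-indicator telescopes so the chain has exactly the weight of the $i$-hollow, and then dispatches the two cases: if every $A_{j+1}\to A_j$ is a $1$-hollow, the chain is already a path of $\Syn^k_*$; if some $1$-attract in the chain fails to be a $1$-hollow, Lemma~\ref{lem:attracts are implied} says that edge is strongly implied, hence so is the whole $i$-hollow, which can therefore be dropped without further work. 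No $0$-noses, no induction on $i$, and no need to prove that $\Fr\circ\Fl(B)=B$. Your approach instead builds an explicit implying path inside $\Syn^k_*$: the $1$-hollow $B\to\Fl(B)$, a stretch of $0$-noses up to $\Fr^{i-1}(A)$, then (inductively) the $(i-1)$-hollow $\Fr^{i-1}(A)\to A$. This works, and is arguably more transparent because every edge of the witnessing walk visibly lives in $\Syn^k_*$, but it costs you three structural facts that the paper's proof avoids: that $\Fr\circ\Fl(B)=B$ when $B\to A$ is an $i$-hollow (so $B\to\Fl(B)$ really is a $1$-hollow), that $\Fl^{i-1}\circ\Fr^{i-1}(A)=A$ (so $\Fr^{i-1}(A)\to A$ really is an $(i-1)$-hollow), and that $\Fl(B)\leq\Fr^{i-1}(A)$ so the $0$-nose detour runs forward. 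All three do hold (by monotonicity of $\Fr$ and $\Fl$ together with the inclusion-free property), but you only sketch them, and the $\Circ$-indicator bookkeeping with the detour is fiddlier than in the paper's telescoping chain. One small slip: the condition under which no $0$-noses are needed should be $\Fl(B)=\Fr^{i-1}(A)$, not "$A'=\Fl^{i-1}(B)$" (the latter is vacuously true when $i=2$ and says nothing useful).
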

 
\begin{proof}
For $i > 1$, consider an $i$-hollow $A_i \to A_0$ of $\F^k$ and let $A_{j} = F_r^j(A_0)$ for $j \in \Range{i}$.  By definition $A_{j+1} \to A_j$ is a $1$-attract for $j \in \Range{i}$. Since $k < \wrap$, it follows that $A_{j} \leq \ldots \leq A_i \leq A_0 \leq \ldots \leq A_{j-1}$ for the unique $j \in \CRange{i}$ such that $0$ belongs to $(s(A_{j-1}), s(A_j))$ (with indices modulo $i$).  Thus, $A_0 \to A_i$ is implied by the path $\W = A_i, \ldots, A_0$ of $\F^k$ because
\begin{align*}
 \Sep_{\Circ,\Len}(\W) &= \sum_{j=1}^i (\Circ\Boolean{A_{j-1} \geq A_j} - \Len) = \Circ\Boolean{A_0 \geq A_i}-i\Len = \Sep_{\Circ,\Len}(A_i \to A_0).
\end{align*}
If $A_{j+1} \to A_j$ is not a $1$-hollow for some $j \in \CRange{i}$, then $A_{j+1} \to A_j$ is strongly implied (Lemma~\ref{lem:attracts are implied}) and, consequently, $A_i \to A_0$ is strongly implied.  Otherwise, $\W$ is a path of $\Syn_*^k$ and the result follows by Lemma~\ref{lem:attracts are implied}.
\end{proof}

An argument similar to that of Lemma~\ref{lem:only 1 hollows} can be used to remove most of the noses.  Roughly speaking, the $i$-noses that must be kept have the largest possible $i$ (note that $i$ need not be equal to $k$ because one cannot assure that a $k$-nose from $A$ exists for every $A \in \A(\M)$).  Say that an $i$-nose $A \to B$ is \emph{short} when $i < k$ and either $\Hl(A) \neq \bot$ or $\Hr(B) \neq \bot$. Those $i$-noses that are not short are said to be \emph{long}.  For $k \in \ORange{\wrap}$, define $\Syn^k(\M)$ as the spanning subgraph of $\Syn_*^k(\M)$ obtained by removing all the short noses, and $\Syn^0(\M)$ as the spanning subgraph of $\Syn_*^\wrap(\M)$ having the $0$-noses and $1$-hollows.  Note that $\Syn^k_* = \bigcup\{\Syn^i \mid i\in\CRange{k}\}$; as usual, we omit the parameter $\M$.  The digraph $\Syn^k$ is called the \emph{$k$-order synthetic graph} of $\M$.

\begin{lemma}\label{lem:only long noses}
If\/ $\M$ is a PCA model and $k \in \ORange{\wrap}$, then\/ $\Syn^k(\M)$, together with\/ $\Sep_{\Circ,\Len}$, describes a system equivalent to\/ $\F^k_{\Circ,\Len}$.
\end{lemma}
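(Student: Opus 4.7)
By Lemma~\ref{lem:only 1 hollows}, the subgraph $\Syn_*^k$ together with $\Sep_{\Circ,\Len}$ describes a system equivalent to $\F^k_{\Circ,\Len}$. Since $\Syn^k$ is obtained from $\Syn_*^k$ by deleting only the short noses, it suffices to show that every short $i$-nose $A \to B$ is implied by a walk of $\Syn^k$. By the $A$-versus-$B$ symmetry in Lemma~\ref{lem:nose equivalences}, I restrict to the case $\Hr(B) \neq \bot$; the case $\Hl(A) \neq \bot$ is handled by the mirror construction using the chain $\Hl(A), \LL(\Hl(A)), \dots, \Fl(A)$ and the dual cornerstone $\Fr(\Fl(A)) = A$.

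The geometric cornerstone of the plan is that $\Hr(B) \neq \bot$ forces $\Fl(\Fr(B)) = B$, so that $\Fr(B) \to B$ is already a $1$-hollow of $\Syn^k$. Indeed, $t(B) \in \Fr(B)$ gives $\Fl(\Fr(B)) \leq B$; if the inequality were strict some $Z < B$ would satisfy $t(Z) \in \Fr(B)$, and the inclusion-freeness of $\M$ combined with $\Fl(\Hr(B)) = B$ (which forbids $t(Z) \in \Hr(B)$, i.e.\ $t(Z) \leq s(\Hr(B))$) would give $s(\Hr(B)) \geq t(Z) > s(\Fr(B))$, contradicting $\Hr(B) \in \RN[B]$.

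Set $B' = \Hr(B)$, which makes $A \to B'$ an $(i+1)$-nose by Lemma~\ref{lem:nose equivalences}, and let $B' = B_0, B_1 = \RR(B_0), \dots, B_m = \Fr(B)$ be the block of $\RN[B]$ from $B'$ to $\Fr(B)$. Consider the walk
\[
W \;=\; A \to B_0 \to B_1 \to \cdots \to B_m \to B
\]
in $\F^k$, whose consecutive edges are the $(i+1)$-nose $A \to B_0$, the $0$-noses $B_j \to B_{j+1}$, and the $1$-hollow $B_m \to B$. A telescoping computation of the $\Circ$-indicators — analogous to the ones used in the proofs of Lemmas~\ref{lem:attracts are implied}, \ref{lem:repels are implied} and~\ref{lem:only 1 hollows}, and exploiting $k < \wrap$ so that the contiguous family $\{B, B_0, \dots, B_m\} \subseteq \RN[B]$ contributes the same cyclic wrap as the direct edge $A \to B$ — yields $\Sep_{\Circ,\Len}(W) = \Sep_{\Circ,\Len}(A \to B) + 2m$. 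Hence $W$ implies $A \to B$ in $\F^k$.

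To promote this $\F^k$-implication to a $\Syn^k$-implication I would proceed by strong induction on a suitable well-founded measure. The $1$-hollow $B_m \to B$ is always in $\Syn^k$ by the cornerstone; the $(i+1)$-nose $A \to B_0$ is in $\Syn^k$ when long (in particular whenever $i+1=k$) and otherwise falls under the inductive hypothesis at a strictly smaller value of $k - i$. The subtle point, and the main obstacle of the proof, is the middle block of $0$-noses $B_j \to B_{j+1}$: some of them may themselves be short. I would handle each such short $B_j \to B_{j+1}$ by applying the above construction to it recursively, organizing the descent lexicographically on the pair $(k - i, j)$, where $j$ counts the short noses anchored weakly to the right of $B$; each recursive call anchors strictly further clockwise, and since $\M$ is finite while $k < \wrap$ prevents multiple wraps, the descent terminates.
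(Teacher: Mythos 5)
Your proof takes a genuinely different route from the paper's, and it has a gap in the recursion that I do not see how to close as written.

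The paper's proof of this lemma uses the two-edge decomposition $A, \Hl(A), B$: the edge $A \to \Hl(A)$ is a $1$-attract and $\Hl(A) \to B$ is an $(i+1)$-repel (Lemma~\ref{lem:nose equivalences}); either one of these fails to be a hollow/nose of $\Syn_*^k$, in which case $A\to B$ is strongly implied via Lemmas~\ref{lem:attracts are implied} and \ref{lem:repels are implied}, or both are in $\Syn_*^k$ and the induction proceeds on the single quantity $k - i$, which strictly decreases because the only nose in the decomposition has order $i+1$. Your decomposition instead goes rightward, $A \to \Hr(B) \to \cdots \to \Fr(B) \to B$, padding with $0$-noses. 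Your ``geometric cornerstone'' --- that $\Hr(B) \neq \bot$ forces $\Fl(\Fr(B)) = B$, so that $\Fr(B) \to B$ is a genuine $1$-hollow of $\Syn^k$ --- is correct and is a nice observation, but it is precisely what forces the chain of $0$-noses into the picture, and that is where the argument breaks.

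The gap: your proposed termination measure is ill-founded. You announce a lexicographic descent on $(k-i, j)$, but for the middle $0$-noses $B_j \to B_{j+1}$ the first component is $k - 0 = k$, which is \emph{larger} than $k - i$ for the original short $i$-nose (unless $i = 0$, where it is equal). So the first component does not decrease when you recurse into the middle block. The second component ``$j$'' is described only as counting ``short noses anchored weakly to the right of $B$,'' which is not a well-defined natural number attached to the recursive call: expanding a short $0$-nose $B_j \to B_{j+1}$ replaces it by a $1$-nose plus a \emph{new} chain of $0$-noses over $[\Hr(B_{j+1}), \Fr(B_{j+1})]$, which may have more short noses than the one you removed. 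The informal argument that ``each recursive call anchors strictly further clockwise'' and ``$k < \wrap$ prevents multiple wraps'' is not a bound on recursion depth, since the freshly introduced $0$-noses are in a different position class than the one being expanded, and nothing a priori stops the anchors from sweeping around the circle. Finally, the telescoping weight computation and the external ($A > B$) case of the $\Circ$-indicator bookkeeping are asserted but not carried out, whereas the paper's two-edge path makes this a one-line check (and explicitly records the three admissible cyclic orderings of $A, \Hl(A), B$). To repair your proof you would essentially need to first prove that the $0$-noses in the block $[\Hr(B), \Fr(B)]$ can be recursively resolved --- but that is exactly the content of the lemma for $i = 0$, so you would still need a clean descent on $k - i$ somewhere, and at that point you might as well use the paper's two-edge decomposition.
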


\begin{proof}
By Lemma~\ref{lem:only 1 hollows}, it suffices to prove that every $i$-nose $A \to B$ is implied by a path of $\Syn^k$.  The proof is by induction on $k-i$.  The base case $i = k$ is trivial because $A \to B$ is long.  For the inductive step, in which $i < k$ and $A \to B$ is short, we have that either $\Hl(A) \neq \bot$ or $\Hr(B) \neq \bot$. We prove only the former case, as the latter case is analogous.  Thus, $A \to \Hl(A)$ is a $1$-attract, whereas $\Hl(A) \to B$ is an $(i+1)$-repel by Lemma~\ref{lem:nose equivalences}. Then, since $i \in \Range{\wrap}$, we obtain that either $\Hl(A) \leq A \leq B$ or $A \leq B \leq \Hl(A)$ or $B \leq \Hl(A) \leq A$.  Consequently, $\W = A, \Hl(A), B$ is a path of $\Syn_*^k$ that implies $A \to B$ because
\begin{align*}
 \Sep_{\Circ,\Len}(\W) &= -\Len + \Circ\Boolean{\Hl(A) \geq A} + (i+1)\Len + 2 - \Circ\Boolean{\Hl(A) \geq B} \\
 &= i\Len + 2 - \Circ\Boolean{A \geq B} = \Sep_{\Circ,\Len}(A \to B). 
\end{align*}
If $A \to \Hl(A)$ is not a hollow or $\Hl(A) \to B$ is not a nose, then $A \to B$ is strongly implied by Lemmas~\ref{lem:attracts are implied}~and~\ref{lem:repels are implied}.  Otherwise, by induction, either $A, \Hl(A), B$ is a path of $\Syn^k$ or $\Hl(A) \to B$ is a short nose and $A \to B$ is implied by a path of $\Syn^k$.
\end{proof}

Finally, the following corollary of Theorem~\ref{thm:difference constraints} sums up this section.

\begin{theorem}\label{thm:no_positive_cycles}
  Let $A_0$ be the initial arc of a PCA model\/ $\M$, $k \in \ORange{\wrap}$, and $\Circ, \Len \in\mathbb{N}$.  If\/ $\M$ is equivalent to an even $k$-multiplicative $(\Circ,\Len+1)$-CA model, then $\Sep_{\Circ,\Len}(\W) \leq 0$ for every cycle\/ $\W$ of\/ $\F^k$.  Conversely, if $\Sep_{\Circ,\Len}(\W) \leq 0$ for every cycle\/ $\W$ of $\Syn^k$, then\/ $\M$ is equivalent to the $k$-multiplicative $(\Circ,\Len+1)$-CA model\/ $\Unit$ that has an arc with beginning point $\Dist{\Sep_{\Circ,\Len}}(A_0, A)$ for every $A \in \A(\M)$.
\end{theorem}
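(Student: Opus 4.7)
The plan is to prove both halves as essentially immediate consequences of Theorem~\ref{thm:difference constraints}, leveraging the derivation of $\F^k_{\Circ,\Len}$ in the paragraph preceding its definition and the reduction chain in Lemmas~\ref{lem:attracts are implied}--\ref{lem:only long noses}.

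For the forward implication I would start from the displayed inequalities preceding \ref{eq:initial}: if $\M$ is equivalent to an even $k$-multiplicative $(\Circ,\Len+1)$-CA model $\Unit$ with $s(U_0)=0$ (a harmless normalization, as the position of $0$ on $C(\Unit)$ is irrelevant), then $s(A)\mapsto s(U(A))$ (where $U(A)$ is the arc of $\Unit$ matching $A$ under the equivalence) satisfies \ref{eq:initial}, \ref{eq:general hollow}, and \ref{eq:general nose} by construction, and is therefore a feasible solution to $\F^k_{\Circ,\Len}$. Theorem~\ref{thm:difference constraints} applied to this system then yields $\Sep_{\Circ,\Len}(\W)\le 0$ for every cycle $\W$ of $\F^k$.

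For the converse I would invoke Lemma~\ref{lem:only long noses} (which rests on Lemmas~\ref{lem:attracts are implied}, \ref{lem:repels are implied}, and \ref{lem:only 1 hollows}): the pair $(\Syn^k,\Sep_{\Circ,\Len})$ defines a system equivalent to $\F^k_{\Circ,\Len}$, in the sense that every edge of $\F^k$ is implied by a $\Syn^k$-path of no smaller weight. Under the hypothesis $\Sep_{\Circ,\Len}(\W)\le 0$ for every cycle $\W$ of $\Syn^k$, Theorem~\ref{thm:difference constraints} applied to the $\Syn^k$-system yields the canonical solution $x_A=\Dist{\Sep_{\Circ,\Len}}(A_0,A)$; by equivalence of systems this is simultaneously the canonical solution to $\F^k_{\Circ,\Len}$, so in particular $\F^k$ itself admits no positive cycle and the distance is unambiguous. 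I would then define $\Unit$ as the $(\Circ,\Len+1)$-CA model whose arcs have beginning points $x_A$.

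The main obstacle, and the substance of the last step, is verifying that $\Unit$ so defined is actually a $k$-multiplicative $(\Circ,\Len+1)$-CA model equivalent to $\M$. This is a reversal of the forward derivation: constraints \ref{eq:general hollow} and \ref{eq:general nose} precisely encode $\Boolean{s(i\mult U_y)\in i\mult U_x}=\Boolean{s(A_y^i)\in A_x^i}$ for every $i\in\CRange{k}$ and every $x,y\in\Range{n}$, which is exactly the condition that $i\mult\Unit$ be equivalent to $\M^i$ for all such $i$. The strict slack $+2$ in \ref{eq:general nose} further ensures that no two arcs of $i\mult\Unit$ share an extreme, so each $i\mult\Unit$ is a bona fide UCA model. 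The internal-versus-external dichotomy is absorbed uniformly by the $\Circ\Boolean{A\geq B}$ wraparound term in each weight, leaving only routine bookkeeping to complete the verification.
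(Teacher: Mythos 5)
Your proposal is correct and follows essentially the same route the paper takes: the paper states this theorem as a direct corollary of Theorem~\ref{thm:difference constraints}, with the forward direction supplied by the feasibility of $s(A) \mapsto s(U(A))$ and the converse supplied by Lemma~\ref{lem:only long noses} plus the observation (made in the text preceding \eqref{eq:initial}) that any solution of $\F^k_{\Circ,\Len}$ reconstitutes a $k$-multiplicative $(\Circ,\Len+1)$-CA model. The only thing I would tidy is the order of inference in the converse: first argue that ``no positive cycle in $\Syn^k$'' propagates to $\F^k$ via the implication structure of Lemmas~\ref{lem:attracts are implied}--\ref{lem:only long noses} (so that $\Dist{\Sep_{\Circ,\Len}}(A_0,A)$ is finite on both digraphs), and only then invoke the equality of canonical solutions, rather than reading the latter as producing the former.
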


\paragraph{The weighting of a walk.} By Theorem~\ref{thm:no_positive_cycles}, the weighting of each cycle of $\Syn_*^k$ plays a fundamental role in deciding if a PCA model $\M$ is equivalent to a $k$-multiplicative $(\Circ,\Len+1)$-model; we find it useful to define $\Sep_{\Circ,\Len}$ as a linear function on $\Circ$ and $\Len$.  For a walk $\W$ of $\Syn_*^k$, let:
\begin{itemize}
 \item $\hollow(\W)$ and $\nose(\W)$ be the number of hollows and noses of $\W$, respectively,
 \item $\hollow_{\Ext}(\W)$ and $\nose_{\Ext}(\W)$ be the number of external hollows and noses of $\W$, respectively,
 \item $\nose(i,\W)$ be the number of $i$-noses of $\W$,
 \item $\Bal(\W) = \sum_{i=0}^{k} i\nose(i, \W) - \hollow(\W)$, and $\Ext(\W) = \hollow_{\Ext}(\W)-\nose_{\Ext}(\W)$.
\end{itemize}
By definition, $\Sep_{\Circ,\Len}(B \to A) = -\Len + \Circ\Boolean{A \geq B}$ for every $1$-hollow $B \to A$, and also $\Sep_{\Circ,\Len}(A \to B) = i\Len + 2 - \Circ\Boolean{A \geq B}$ for every $i$-nose $A \to B$.  With the above terminology,
\begin{equation}
 \Sep_{\Circ,\Len}(\W) = \Len\Bal(\W) + \Circ\Ext(\W) + 2\nose(\W). \label{eq:sep}
\end{equation}
Intuitively, a walk $\W = A_1, \ldots, A_k$ can be seen as a traversal of the beginning points $s(A_1)$, \ldots, $s(A_k)$ of $\M$ in this order.  The weight $\Sep_{\Circ,\Len}(A_i \to A_{i+1})$ is a lower bound on how far $s(A_i)$ and $s(A_{i+1})$ must be in $C(\M)$, thus $\Sep_{\Circ,\Len}(\W)$ is a lower bound for the separation of $s(A_1)$ and $s(A_k)$.   In this sense, $\Bal(\W)$ (for ``balance'') accumulates the separation according to $\Len$, while $\Ext(\W)$ denotes the number of times that $0$ is crossed in $C(\M)$, taking into account if the cross is in a clockwise (nose) or anticlockwise (hollow) sense. 

\subsection{Building the synthetic graph of a model}
\label{sec:building}

Lemma~\ref{lem:nose equivalences} implies that noses have a well defined structure, that is depicted by the gray arrows of Figure~\ref{fig:diagonal}. To obtain this picture, suppose $A_0 \to B_i$ is an $i$-nose of $\Syn^k_*$.  By definition (i.e., \ref{lem:nose equivalences:def}), $B_i = \Hr^i \circ \RR(A_0)$, thus $B_j = \Hr^j \circ \RR(A_0) \neq \bot$ for every $j \in \CRange{i}$.  Similarly, $A_0 = \Hl^i \circ \LL(B_i)$ by~\ref{lem:nose equivalences:sym}, thus $A_j = \Hl^{i-j} \circ \LL(B_i) \neq \bot$.  That $A_j = \Hl(A_{j+1})$ and $B_{j+1} = \Hr(B_j)$ when $j < i$ follow by definition, while $A_{j+1} = \Fr(A_j)$ and $B_j = \Fl(B_{j+1})$ follow by \ref{lem:nose equivalences:fr}~and~\ref{lem:nose equivalences:fl}, respectively.  Also, \ref{lem:nose equivalences:hr r} (or, symmetrically, \ref{lem:nose equivalences:hl l}) implies that $B_j = \RR(A_j)$.  Finally, \ref{lem:nose equivalences:fr}~and~\ref{lem:nose equivalences:fl} imply that $\Fr^h(A_0) \to \RR \circ \Fr^j(A_0)$ is a $(j-h)$-nose for every $h \in \Range{j}$.  Summing up, the existence of an $i$-nose $A \to \RR \circ \Fr^i(A)$ implies the existence of many other noses in $\Syn^k_*$.  Conversely, the $0$-nose $A \to \RR(A)$ generates the $i$-nose $A \to \RR \circ \Fr^i(A)$ when $\Hr^i \circ \RR(A) \neq \bot$.

\begin{figure}
	\centering
	\includegraphics{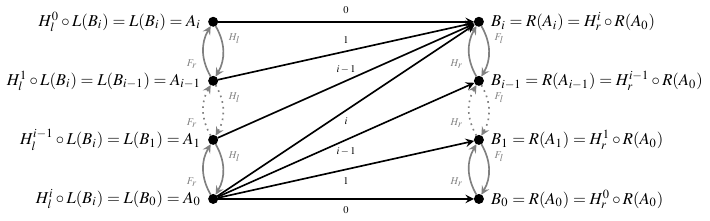}
	\caption[]{Noses of $\Syn_*^k$ from $A_0$ and to $\RR(A_{i})$ for an $i$-nose $A_0 \to \RR(A_i)$ as implied by Lemma~\ref{lem:nose equivalences}.  A label $x$ is attached to each $x$-nose.} \label{fig:diagonal}
\end{figure}

By definition, any long $i$-nose $A \to B$ of $\Syn^k$ with $i < k$ is an $i$-nose of $\Syn^{k+1}$ as well.  Therefore, to transform $\Syn^k$ into $\Syn^{k+1}$ it suffices to insert the $(k+1)$-noses of $\Syn_*^{k+1}$ and to remove the $k$-noses of $\Syn^k$ that are short in $\Syn_*^{k+1}$.  In contrast to Theorem~\ref{thm:no_positive_cycles}, the next result holds for $k = 0$.  This highlights the importance of $\Syn^0$: it is the base case for building $\Syn^k$.

\begin{theorem}\label{thm:Syn^k to Syn^k+1 remove}
 Let\/ $\M$ be a PCA model and $k \in \Range{\wrap-1}$.  Then, $\Syn^{k+1}$ can be computed from\/ $\Syn^k$ in two phases: first, each $k$-nose $A \to B$ such that $\Hr(B) \neq \bot$ is iteratively replaced by the $(k+1)$-nose $A \to \Hr(B)$; then, each remaining $k$-nose $A \to B$ with $\Hl(A) \neq \bot$ is removed.
\end{theorem}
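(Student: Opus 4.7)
The plan is to compare the edge sets of $\Syn^k$ and $\Syn^{k+1}$ directly, exploiting that they differ only on noses of order $k$ or $k+1$. First I would unfold the definition of ``long'' at the two thresholds: for every $i < k$, an $i$-nose $A\to B$ is long at threshold $k$ iff $\Hl(A)=\Hr(B)=\bot$ iff it is long at threshold $k+1$, so $\Syn^k$ and $\Syn^{k+1}$ contain exactly the same $i$-noses for $i<k$ and exactly the same $1$-hollows. The only discrepancies are therefore: (i) every $(k+1)$-nose lies in $\Syn^{k+1}$ but in no $\Syn^k$ (the latter has no $(k+1)$-order edges); and (ii) every $k$-nose lies in $\Syn^k$ (being long at threshold $k$), and it survives in $\Syn^{k+1}$ exactly when $\Hl(A)=\Hr(B)=\bot$.

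Next I would use Lemma~\ref{lem:nose equivalences}, specifically the equivalence $\ref{lem:nose equivalences:def} \Leftrightarrow \ref{lem:nose equivalences:sym}$, to exhibit a bijection between the $k$-noses $A\to B$ of $\F^k$ with $\Hr(B)\neq\bot$ and the $(k+1)$-noses of $\F^{k+1}$. In the forward direction, if $A\to B$ is a $k$-nose then $B=\Hr^k\circ\RR(A)$, so $\Hr(B)\neq\bot$ yields $\Hr^{k+1}\circ\RR(A)=\Hr(B)\neq\bot$, and $A\to\Hr(B)$ is a $(k+1)$-nose. Conversely, any $(k+1)$-nose has the form $A\to B'$ with $B'=\Hr^{k+1}\circ\RR(A)$, and then $B\coloneqq\Hr^k\circ\RR(A)$ is a well-defined arc with $\Hr(B)=B'\neq\bot$ making $A\to B$ a $k$-nose. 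Thus the map $A\to B\mapsto A\to\Hr(B)$ is the desired bijection, and Phase~1 is precisely the edge-by-edge implementation of it.

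Finally I would check that the two phases jointly transform $\Syn^k$ into $\Syn^{k+1}$. Phase~1 simultaneously removes every $k$-nose $A\to B$ with $\Hr(B)\neq\bot$ and inserts every $(k+1)$-nose, which is exactly the change required by (i) and the $\Hr(B)\neq\bot$ portion of (ii); no cascading occurs because the newly inserted edge $A\to\Hr(B)$ is a $(k+1)$-nose, not a $k$-nose, so the iteration is well-defined and order-independent. After Phase~1 the remaining $k$-noses all satisfy $\Hr(B)=\bot$; among them, a nose is short at threshold $k+1$ iff $\Hl(A)\neq\bot$, and Phase~2 deletes exactly these, leaving behind the $k$-noses with $\Hl(A)=\Hr(B)=\bot$, which are precisely those we must keep in $\Syn^{k+1}$. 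The only real obstacle is bookkeeping: one has to be careful that the ``short/long'' dichotomy is reevaluated at the new threshold $k+1$, but once this is spelled out the argument is a direct case analysis powered by Lemma~\ref{lem:nose equivalences}.
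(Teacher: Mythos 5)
Your proposal is correct and follows essentially the same approach as the paper's proof, which also rests entirely on Lemma~\ref{lem:nose equivalences} (specifically the equivalence of $A = \Hl^i\circ\LL(B)$ with $B = \Hr^i\circ\RR(A)$) to show that after Phase~1 every $(k+1)$-nose is present and every surviving $k$-nose satisfies $\Hr(B)=\bot$, after which Phase~2 simply deletes the remaining short ones. You spell out the bookkeeping (the threshold-dependence of short/long, the bijection, the non-cascading of Phase~1) more explicitly than the paper's two-sentence argument, but the underlying reasoning is identical.
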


\begin{proof}
 By Lemma~\ref{lem:nose equivalences}, the graph obtained after the first phase has all the $(k+1)$-noses and every $k$-nose $A \to B$ satisfies $\Hr(B) = \bot$.  The second step removes the remaining short $k$-noses.
\end{proof}

Theorem~\ref{thm:Syn^k to Syn^k+1 remove} yields a simple method to compute $\Syn^k$ in $O(nk)$ time when $\M$ is given (\cref{fig:example-syngraph}).  Starting with $\Syn^0$, execute $k$ steps to transform $\Syn^i$ into $\Syn^{i+1}$ for each $0 \leq i < k$.  However, Theorem~\ref{thm:Syn^k to Syn^k+1 remove} can be reinterpreted to design a faster algorithm.  Just note that $A \to B$ is an $i$-nose of $\Syn^k$ if and only if $B = \Hr^i(\RR(A))$ and either $i = k$ or both $\Hr^{i+1}(\RR(A)) = \bot$ and $\Hl(A) = \bot$.  For the next theorem, extend the weighting $\Bal$ of $\Syn_*^k$ to work with edges: $\Bal(A \to B) = -1$ if $A \to B$ is a $1$-hollow and $\Bal(A \to B) = i$ if $A \to B$ is an $i$-nose, for $i \in \CRange{k}$.

\begin{theorem}\label{thm:Syn^k algorithm}
 The problem of computing\/ $\Syn^k$ and $\Bal$, when a PCA model\/ $\M$ and a value $k \in \Range{\wrap}$ are given, can be solved in $O(n)$ time.  With this information, $\Sep_{\Circ,\Len}(A \to B)$ can be obtained in $O(1)$ time when $\Circ,\Len \in \mathbb{N}$ are given, for $k \in \Range{\wrap}$.
\end{theorem}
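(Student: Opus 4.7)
The plan is to construct $\Syn^k$ together with the weighting $\Bal$ in $O(n)$ time by exploiting the structural characterization of noses implied by Lemma~\ref{lem:nose equivalences} and the definition of $\Syn^k$: an edge $A \to B$ is a nose of $\Syn^k$ precisely when $B = \Hr^i(\RR(A))$ for some $i \in \CRange{k}$ with $i = k$ or both $\Hr^{i+1}(\RR(A)) = \bot$ and $\Hl(A) = \bot$. In particular each arc $A$ contributes at most one nose edge, namely the one with $i$ chosen as large as possible.

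I would start with the standard linear-time computation of $\LL, \RR, \Fl, \Fr, \Hl, \Hr$ for every arc by a single cyclic sweep of $\M$. The $1$-hollows are then immediate: $B \to A$ is a $1$-hollow iff $B = \Fr(A)$ and $A = \Fl(B)$, and we set $\Bal(B \to A) = -1$ for each such edge. This produces the hollow part of $\Syn^k$ in $O(n)$ time.

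For the noses, the crucial structural fact is that $\Hr\colon \A(\M) \to \A(\M)\cup\{\bot\}$ is an injective partial function, since the definition of $\Hr$ forces $\Fl(\Hr(X)) = X$ whenever $\Hr(X) \neq \bot$, so distinct sources have distinct images. Thus the digraph with edges $X \to \Hr(X)$ is a disjoint union of simple paths (no cycle can be traversed $k$ times when $k < \wrap$, by the nose structure depicted in Figure~\ref{fig:diagonal}). Orient each $\Hr$-path toward its $\bot$-endpoint and store it as an array indexed by the number of steps to that endpoint; the arrays are filled by one linear scan per path, $O(n)$ in total. After this, $\Hr^j(X)$ for any valid $j$ is obtained by an array lookup in $O(1)$. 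For each arc $A$, I set $X = \RR(A)$, denote by $h(X)$ the distance from $X$ to its $\bot$-endpoint, and take $i^* = \min(k, h(X))$; whenever $i^* = k$ or $\Hl(A) = \bot$, I insert the nose $A \to \Hr^{i^*}(X)$ and set $\Bal(A \to \Hr^{i^*}(X)) = i^*$, and otherwise I discard it. This yields $\Syn^k$ and $\Bal$ in $O(n)$ time.

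For the $O(1)$ evaluation of $\Sep_{\Circ,\Len}$, I record with each stored edge the bit $\Boolean{A \geq B}$ (obtained by comparing indices in the cyclic order of $\M$) and whether the edge is a hollow or a nose. Then equation~\eqref{eq:sep} specialized to a single edge gives $\Sep_{\Circ,\Len}(B \to A) = -\Len + \Circ\Boolean{A \geq B}$ for a $1$-hollow and $\Sep_{\Circ,\Len}(A \to B) = \Bal(A \to B)\Len + 2 - \Circ\Boolean{A \geq B}$ for a nose, each computable in $O(1)$ once $\Circ$ and $\Len$ are given. The main obstacle is avoiding the naive $O(nk)$ chain traversal for $\Hr^k$; the injectivity argument and the resulting path decomposition reduce the required $k$-th ancestor queries to constant-time array lookups after a single linear scan.
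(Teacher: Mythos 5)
Your overall approach is the paper's: decompose the functional digraph of $\Hr$ into components, preprocess each component in linear time, and then recover $\Hr^{i^*}(\RR(A))$ by constant-time lookup for each $A$, using the characterization that $A \to B$ is a nose of $\Syn^k$ iff $B = \Hr^i(\RR(A))$ with $i$ as large as possible (and either $i = k$ or $\Hl(A) = \bot$). The $1$-hollow step, the $O(1)$ evaluation of $\Sep_{\Circ,\Len}$ from $\Bal$ and the bit $\Boolean{A \geq B}$, and the injectivity of $\Hr$ are all correct and match the paper.

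However, there is a genuine gap in your treatment of the $\Hr$-digraph. You assert that it ``is a disjoint union of simple paths,'' citing only the parenthetical ``no cycle can be traversed $k$ times when $k < \wrap$, by the nose structure depicted in \cref{fig:diagonal}.'' This claim is false in general, and your data structure silently depends on it. Since $\Hr$ is a partial function and the paper observes it is injective (via $\Fl \circ \Hr(X) = X$), each component of the digraph is either a path ending at a vertex with $\Hr = \bot$ \emph{or} a cycle with no $\bot$-endpoint at all. Cycle components really do occur: take $\M$ to be the canonical PCA model of $C_n$, where $\Hr(A_i) = A_{i+1}$ for every $i$, so the whole digraph is a single $n$-cycle. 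Your algorithm orients each component ``toward its $\bot$-endpoint'' and indexes an array by the number of steps to that endpoint; on a cycle component there is no $\bot$-endpoint, so the preprocessing has nowhere to anchor and the construction breaks (regardless of whether $k$ is smaller than the cycle length). The paper's proof handles this explicitly: when a component is a cycle $v(A_1),\ldots,v(A_j)$, it sets $\nose(A_i) = k$ and $H(A_i) = A_{i+k \bmod j}$, and the two-pointer technique accommodates both path and cycle components in $O(j)$ time per component. You need to add the analogous cycle case (e.g., break the cycle at an arbitrary vertex and use modular indexing) for the $O(n)$ preprocessing to be correct.
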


\begin{proof}
 We assume that $\LL(A)$, $\RR(A)$, $\Fl(A)$, $\Fr(A)$, $\Hl(A)$, and $\Hr(A)$ can be obtained in $O(1)$ time for a given $A \in \A(\M)$.  There is no loss of generality, as they can be computed in $O(n)$ time from any reasonable representation of $\M$ (e.g.~\cite[Algorithm~5.2]{Soulignac2010}).  To find the $1$-hollows of $\Syn^k$ with $\Bal = -1$, it suffices to traverse each arc $A$ of $\M$ to determine if $\Fr \circ \Fl(A) = A$.  This step requires $O(n)$ time; we now discuss how to find all the $i$-noses.
 
 Let $\nose$ be the function such that $\nose(A) = \max\{i \mid i \leq k \text{ and } \Hr^i(A) \neq \bot\}$ and $H$ be the function such that $H(A) = \Hr^j(A)$, $j=\nose(A)$, for every $A \in \A(\M)$.  By Lemma~\ref{lem:nose equivalences}, there is a long nose $\LL(A) \to H(A)$ if and only if either $\nose(A) = k$ or $\Hl \circ \LL(A) = \bot$.  Moreover, if existing, then $\LL(A) \to H(A)$ is the unique $\nose(A)$-nose of $\Syn^k$ starting at $\LL(A)$.  Thus, once $\nose$ and $H$ are known, the problem of finding each nose $A \to B$ of $\Syn^k$, together with $\Bal(A\to B)$, can be accomplished in $O(n)$ time.
 
 Let $D$ be the digraph that has one vertex $v(A)$ for each $A \in \A(\M)$ and one edge $v(A) \to v(B)$ when $\Hr(A) = B$.  Clearly, every vertex of $D$ has at most one out neighbor.  Moreover, since there is at most one arc $A$ such that $B = \Hr(A)$ for every $B \in \A(\M)$, it follows that all the vertices in $D$ have at most one in neighbor as well.  Therefore, every component $D'$ of $D$ is either a path or a cycle.  If $D'$ is a path $v(A_1), \ldots, v(A_j)$, then $\nose(A_i) = \min\{k, j-i\}$ and $H(A_i) = A_{i + \nose(A_i)}$.  Similarly, if $D'$ is a cycle $v(A_1), \ldots, v(A_j)$, then $\nose(A_i) = k$ and $H(A_i) = A_{i+k \bmod j}$.  Then, by keeping two pointers, $\nose$ and $H$ can be computed for all the arcs corresponding to vertices in $D'$ in $O(j)$ time.  Hence, the total time required to compute $H$ and $\nose$ is $O(n)$ and, therefore, $O(n)$ time suffices to compute $\Syn^k$ and $\Bal$.
\end{proof}

\begin{corollary}\label{cor:klcMult}
 The problem \klcMult can be solved in $O(n^2)$ time for any PCA model\/ $\M$, $k \in \ORange{\wrap}$, and $\Circ,\Len \in \mathbb{N}$.  The algorithm outputs either a $k$-multiplicative $(\Circ,\Len+1)$-CA model equivalent to\/ $\M$ or a minimal family of difference constraints of\/ $\Syn^k$ with no feasible solution.
\end{corollary}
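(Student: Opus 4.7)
The plan is to instantiate Theorem~\ref{thm:no_positive_cycles} algorithmically via the Bellman--Ford shortest path algorithm, using that $\Syn^k$ is a sparse surrogate for $\F^k$.

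First, by Theorem~\ref{thm:Syn^k algorithm}, the digraph $\Syn^k$ together with the weighting $\Bal$ can be built in $O(n)$ time, and once $\Circ$ and $\Len$ are supplied, $\Sep_{\Circ,\Len}(e)$ can be evaluated in $O(1)$ time on any edge $e$ via \eqref{eq:sep}. A direct count shows that $|E(\Syn^k)| = O(n)$: each arc $A$ receives at most one $1$-hollow (namely $\Fr(A) \to A$ when it is a $1$-hollow), and, as observed in the proof of Theorem~\ref{thm:Syn^k algorithm}, each arc is the tail of at most one long nose $\LL(A) \to H(A)$.

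Next, I would invoke Bellman--Ford on the weighted digraph $(\Syn^k, \Sep_{\Circ,\Len})$ from the vertex corresponding to the initial arc $A_0$. Since $\Syn^k$ has $n$ vertices and $O(n)$ edges, Bellman--Ford terminates in $O(n^2)$ time and either returns the distances $x_A = \Dist{\Sep_{\Circ,\Len}}(A_0, A)$ for every $A \in \A(\M)$ or a cycle $\W$ of $\Syn^k$ with $\Sep_{\Circ,\Len}(\W) > 0$. In the first case, Theorem~\ref{thm:no_positive_cycles} guarantees that the family of arcs with beginning points $x_A$ is a $k$-multiplicative $(\Circ,\Len+1)$-CA model equivalent to $\M$, which we output. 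In the second case, the edges of $\W$ constitute the negative certificate.

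It remains to justify that the certificate is a \emph{minimal} infeasible family. The edges of the cycle $\W$ translate into the difference constraints of $\F^k_{\Circ,\Len}$ corresponding to the hollows and noses traversed by $\W$. Summing these constraints around $\W$ yields $0 \geq \Sep_{\Circ,\Len}(\W) > 0$, so they are jointly infeasible. Removing any single constraint of $\W$ breaks the cycle into a path $W'$; the constraints along $W'$ impose only lower bounds of the form $x_{A_j} \geq x_{A_0} + \Sep_{\Circ,\Len}(W'[A_0, A_j])$ on disjoint differences, which are trivially satisfiable (for instance by setting $x_A$ equal to the running sum of $\Sep_{\Circ,\Len}$ along $W'$). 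Hence no proper subset of the constraints in $\W$ is infeasible, proving minimality. The main subtlety is checking the edge count of $\Syn^k$ so that Bellman--Ford actually fits in the claimed $O(n^2)$ budget; once that is in place, the rest follows immediately from Theorems~\ref{thm:no_positive_cycles}~and~\ref{thm:Syn^k algorithm}.
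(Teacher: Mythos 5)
Your proposal is correct and takes essentially the same route as the paper: run Bellman--Ford on $\Syn^k$ weighted by $\Sep_{\Circ,\Len}$ from the initial arc, invoke Theorem~\ref{thm:no_positive_cycles} when a feasible solution is found, and observe that a returned positive cycle gives a minimal infeasible family because removing any single edge leaves an acyclic constraint digraph. The only additions you make beyond the paper's proof are explicitly accounting for why $\Syn^k$ has $O(n)$ edges and spelling out that an acyclic constraint system is satisfiable (both of which the paper leaves implicit via Theorem~\ref{thm:difference constraints}), so this is the same argument, not a different one.
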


\begin{proof}
 By Theorem~\ref{thm:no_positive_cycles}, \klcMult is solved with an execution of Bellman-Ford's algorithm on $\Syn^k$ weighted by $\Sep_{\Circ,\Len}$ from the initial arc of $\M$.  The algorithm requires $O(n^2)$ time because $\Syn^k$ has $O(n)$ edges and can be computed in $O(n)$ time.  If a cycle of positive weight $\W$ is found, the corresponding family $\F$ of constraints is given as output.  By Theorem~\ref{thm:difference constraints}, $\F$ has no solution because its constraint digraph is isomorphic to $\W$ (with weight $\Sep_{\Circ, \Len}$), while $\F$ is minimal because the constraint digraph of $\F - \F'$ has no cycles for every nonempty subsystem $\F'$.
\end{proof}

If $\Syn^k$ has no feasible solution, then the family of difference constraints $\F$ given by Corollary~\ref{cor:klcMult} defines a submodel $\M'$ of $\M$ that contains an arc $A$ of $\M$ if and only if $A$ is referred by a constraint of $\F$. Note that $\M'$ is equivalent to no $k$-multiplicative $(\Circ,\Len+1)$-CA model because $\Syn^k(\M')$ contains $\F$.

\section{Mitas' drawing of a synthetic graph}
\label{sec:mitas drawing}

Mitas observed that $\Syn^1$ admits a peculiar drawing in the plane when $\M$ is PIG~\cite{Mitas1994}. 
These drawings were later adapted to PCA models by Soulignac~\cite{SoulignacJGAA2017}, 
and provide a powerful tool for solving numerical representation problems \cite{Mitas1994,PirlotVincke1997,SoulignacJGAA2017,SoulignacJGAA2017a}.  In this section we define an analogous of Mitas' drawings for $\Syn^k_*$.  Although drawings are defined for general PCA models, the results are restricted to connected models for simplicity.

In Mitas' drawings, each arc $A$ of a PCA model $\M$ occupies an entry of an imaginary matrix.  The \emph{row} of $A \in \A(\M)$ is (\cref{fig:example-syngraph}):
\begin{align}
\Row(A) & = \begin{cases}
           0 & \text{if $A$ is the initial arc}\\
           \Row(B)+1 & \text{if $A = \Hr(B)$ for some $B < A$} \\
           \Row(\LL(A)) & \text{otherwise}
           \end{cases}\tag{row} \label{eq:row}
\end{align}
The \emph{number of rows} of $\M$ is defined as $\Rows(\M) = 1+\max\{\Row(A) \mid A \in \A(\M)\} = 1+\Row(\LL(A_0))$ for the initial arc $A_0$.  The family $\ARow$ of arcs with $\Row = r$ is referred to as the \emph{row $r$} of $\M$.  It is not hard to see that $\ARow$ forms a contiguous sequence.  We refer to those arcs that are the leftmost and rightmost of its row as being \emph{leftmost} and \emph{rightmost}, respectively.  A nose $A \to B$ is said to be \emph{backward} when it is internal and $A$ is rightmost. All the hollows and non-backward noses that are internal are called \emph{forward}.  A walk of $\Syn^k_*$ is \emph{internal} if all its edges are internal and it is \emph{forward} if all its edges are forward.  The \emph{interior} and \emph{backbone} of $\Syn^k_*$ are the spanning subgraphs of $\Syn^k_*$ formed by the interior and forward edges, respectively.

\begin{figure}
	\centering
    \mbox{}\hfill\includegraphics{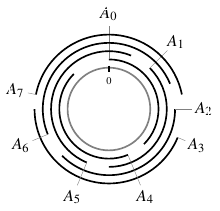}\hfill\includegraphics{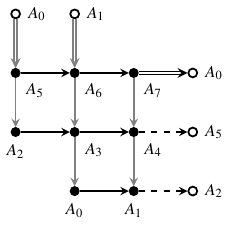}\hfill\includegraphics{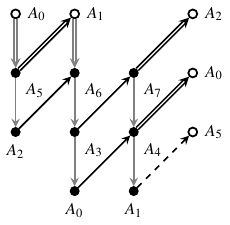}\hfill\mbox{}
	\caption{
	  A PCA model $\M$, $\Syn^0$, and $\Syn^1$ are shown from left to right.  The vertices of $\Syn^0$ and $\Syn^1$ are displayed black in rows according to \eqref{eq:row}, while white vertices correspond to the black vertex with the same label.  Noses are black, hollows are gray, double arrows are for external edges, and dashed arrows are for backward noses.  Note that $\Syn^1(\M)$ can be obtained from $\Syn^0(\M)$ as in \cref{thm:Syn^k to Syn^k+1 remove}.}\label{fig:example-syngraph}
\end{figure}

Besides helping with the drawing of $\Syn^k$, the assignment of rows to arcs allows us to classify the hollows and noses according to how many rows are skipped when a hollow or nose is traversed.
Specifically, the \emph{jump} of an internal\footnote{Although this definition can be properly applied to any edge of $\Syn_*^k$, as it is on \cite{SoulignacJGAA2017}, in this work we will be concerned just with internal edges.} edge $A \to B$ of $\Syn_*^k$ is the number $\Jmp(A \to B) = \Row(B) - \Row(A)$ of rows crossed by $A \to B$.  

\begin{lemma}\label{lem:jmp of noses and hollows}
 If\/ $\M$ is a connected PCA model, $k \in \Range{\wrap}$, and $i \in \CRange{k}$, then internal hollows have $\Jmp = -1$, forward $i$-noses have $\Jmp = i$, and backward $i$-noses have $\Jmp = i+1$.
\end{lemma}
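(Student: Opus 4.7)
The plan is to handle the three cases in sequence.  Noses come first because Lemma~\ref{lem:nose equivalences} already gives a chain description that feeds directly into the row recursion~\eqref{eq:row}.  For an internal $i$-nose $A\to B$, that lemma identifies $B=\Hr^{i}(\RR(A))$.  I would first check by induction on $j\in\CRange{i}$ that the intermediate arc $\Hr^{j}(\RR(A))$ sits strictly between $A$ and $B$ in the clockwise order of $\A(\M)$, which is exactly the diagonal picture of Figure~\ref{fig:diagonal} lifted by the internality of the whole nose.  This makes every $\Hr$-step in the chain satisfy the hypothesis of~\eqref{eq:row}, so $\Row(\Hr^{j}(\RR(A)))=\Row(\Hr^{j-1}(\RR(A)))+1$ and, iterating, $\Row(B)=\Row(\RR(A))+i$.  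A short case split on whether $A$ is rightmost of its row (equivalently, whether the nose is forward or backward) then gives $\Row(\RR(A))=\Row(A)$ in the forward case and $\Row(\RR(A))=\Row(A)+1$ in the backward case, producing $\Jmp=i$ and $\Jmp=i+1$ respectively.

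For an internal $1$-hollow $B\to A$ the plan is to identify $B$ with $\Hr(A)$, whence~\eqref{eq:row} immediately delivers $\Row(B)=\Row(A)+1$ and thus $\Jmp=-1$.  The hollow already supplies $\Fl(B)=A$, so the work is to show the second clause, $\Fl(\LL(B))\neq A$.  I would use the other half of the hollow condition, $B=\Fr(A)$, together with $A=\Fl(B)$ and the inclusion-freeness of the arcs, to pin the four relevant endpoints into the circular order $s(\LL(B))<t(\LL(A))\leq s(B)<t(A)$.  The middle inequality is exactly the statement that $\LL(A)\notin\LN[B]$ (repackaging $A=\Fl(B)$); the leftmost inequality is exactly $\LL(A)\in\LN[\LL(B)]$, which forces $\Fl(\LL(B))\leq\LL(A)<A$ and finishes the argument.

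The main obstacle, as I see it, is the hollow case: extracting the ordering $s(\LL(B))<t(\LL(A))$ from the hollow conditions is not a formal manipulation.  It draws on the connectivity of $\M$ (so that $\LL(B)$ genuinely overlaps~$\LL(A)$ rather than leaving a gap) and on $k<\wrap$ (so that the chain of arcs between $A$ and $B$ does not wrap around the distinguished point $0$ of $C(\M)$).  I would isolate a brief geometric sublemma recording the four-endpoint inequalities and verify them by a direct case analysis on where $\LL(A)$ sits relative to the interval delimited by $A$ and $B$.
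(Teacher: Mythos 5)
Your nose argument matches the paper's: use Lemma~\ref{lem:nose equivalences} to write $B=\Hr^{i}\circ\RR(A)$, argue that the chain $\Hr^{j}\circ\RR(A)$ is strictly increasing so that~\eqref{eq:row} increments at each $\Hr$-step, conclude $\Row(B)=\Row(\RR(A))+i$, and finish by a forward/backward case split on the $0$-nose $A\to\RR(A)$. That part is fine.

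The hollow argument, however, has a genuine gap. You want to identify $B$ with $\Hr(A)$, but this identity is false in general. The arcs $X$ satisfying $\Fl(X)=A$ form a contiguous block; $\Hr(A)$ is the \emph{leftmost} member of that block (the one whose left neighbour has a different $\Fl$), while $B=\Fr(A)$ is the rightmost arc whose beginning point lies in $A$. These coincide only when the block is a singleton. When $\LL(A)$ ends considerably earlier than $A$ does, several arcs $X_{1}<\cdots<X_{m}$ can all have $\Fl(X_{j})=A$, giving $\Hr(A)=X_{1}<X_{m}=B$. In that case $\LL(B)=X_{m-1}$ still has $\Fl(\LL(B))=A$, which is exactly the negation of the leftmost inequality in your four-endpoint chain: $\Fl(\LL(B))=A$ is equivalent to $\LL(A)\notin\LN[\LL(B)]$, i.e.\ $s(\LL(B))\geq t(\LL(A))$. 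A concrete model on a circle of circumference $30$: $A_{0}=(1,10)$, $A_{1}=(3,20)$, $A_{2}=(11,22)$, $A_{3}=(13,24)$, $A_{4}=(15,26)$, $A_{5}=(25,2)$. Here $A_{4}\to A_{1}$ is an internal $1$-hollow, but $\Hr(A_{1})=A_{2}\neq A_{4}$, and $s(\LL(A_{4}))=s(A_{3})=13>10=t(A_{0})=t(\LL(A_{1}))$. (The lemma itself holds: $\Row(A_{1})=1$, $\Row(A_{4})=2$, $\Jmp=-1$.) Connectivity and $k<\wrap$ do not repair the missing inequality: $\LL(B)$ and $\LL(A)$ need not overlap at all. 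Your chain also uses $\LL(A)<A$, which fails when $A$ is the initial arc --- a case you do not carve out.

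The paper handles this by \emph{not} claiming equality. It only shows $\Hr(A)\neq\bot$ and $\Hr(A)\leq B$, then observes that every arc $X$ with $\Hr(A)<X\leq B$ still has $\Fl(X)=A$ and hence cannot be $\Hr(Z)$ for any $Z$; by~\eqref{eq:row} the row therefore stays constant across that entire block, so $\Row(B)=\Row(\Hr(A))=\Row(A)+1$ and $\Jmp=-1$. The ``row is constant from $\Hr(A)$ to $B$'' step is precisely the reasoning your sketch skips, and it is where the crux of the hollow case lies.
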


\begin{proof}
 If $A \to B$ is an internal $1$-hollow, then $B = \Fl(A)$ and $A = \Fr(B)$, thus $\Hr(B) \neq \bot$ is such that $\Hr(B) \leq A$.  Moreover, since $\Fl(X) = \Fl \circ \Hr(B) = \Fl(A) = B$ for every arc $X$ with $\Hr(B) < X \leq A$, it follows that no arc $Z$ with $\Hr(Z) = X$ exists.  Consequently, $\Row(A) = \Row(\Hr(B)) = \Row(B) + 1$ by \eqref{eq:row} because $A \to B$ is internal.
 
 If $A \to B$ is an internal $i$-nose, then $B = \Hr^i \circ \RR(A)$ by Lemma~\ref{lem:nose equivalences}.  Observe that $\Hr^{j+1} \circ \RR(A) >  \Hr^j \circ \RR(A)$ for every $j \in \Range{i}$ because $\M$ is connected, $\RR(A)$ is not the initial arc, and $A \to B$ is internal.  Hence,  by~\eqref{eq:row}, $\Row(B) = \Row(\Hr^i  \circ \RR(A)) = \Row(\Hr^{i-1}  \circ \RR(A)) + 1 = \ldots = \Row(\RR(A)) + i$.  Recall that $\Jmp(A \to \RR(A)) \in \{0,1\}$ and it equals $0$ if and only if $A \to \RR(A)$ is forward.  Therefore, $\Jmp(A \to B) \in \{i,i+1\}$ and it equals $i$ if and only if $A \to B$ is forward.
\end{proof}

Lemma~\ref{lem:jmp of noses and hollows} can be extended to general walks.  For this purpose, define the jump of an internal walk $\W = A_1,\ldots,A_k$ as $\Jmp(\W)=\sum_{i=1}^{k-1}\Jmp(A_i \to A_{i+1})$.

\begin{corollary}\label{cor:jmp and bal}
 Let\/ $\M$ be a connected PCA model and $k \in \Range{\wrap}$.  If\/ $\W$ is an internal walk of\/ $\Syn_*^k$, then\/ $\Jmp(\W) = \Bal(\W) + \nose_b$ where $\nose_b$ is the number of backward noses of\/ $\W$.
\end{corollary}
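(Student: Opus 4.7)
The plan is essentially bookkeeping: sum the per-edge jump values given by Lemma~\ref{lem:jmp of noses and hollows} over the edges of $\W$ and regroup them to recover $\Bal(\W) + \nose_b$.

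First I would recall that by definition $\Jmp(\W) = \sum_{e \in E(\W)} \Jmp(e)$, where $E(\W)$ denotes the multiset of edges traversed by $\W$. Since $\W$ is internal, every edge $e$ is either an internal hollow, a forward $i$-nose for some $i \in \CRange{k}$, or a backward $i$-nose for some $i \in \CRange{k}$. I would partition $E(\W)$ accordingly, writing $\hollow(\W)$ for the number of hollows, $\nose_f(i,\W)$ for the number of forward $i$-noses, and $\nose_b(i,\W)$ for the number of backward $i$-noses, so that $\nose(i,\W) = \nose_f(i,\W) + \nose_b(i,\W)$ and $\nose_b = \sum_{i=0}^{k} \nose_b(i,\W)$.

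Next I would apply Lemma~\ref{lem:jmp of noses and hollows} edge-by-edge, which gives
\begin{align*}
  \Jmp(\W) &= -\hollow(\W) + \sum_{i=0}^{k} i\cdot\nose_f(i,\W) + \sum_{i=0}^{k} (i+1)\cdot\nose_b(i,\W) \\
           &= -\hollow(\W) + \sum_{i=0}^{k} i\cdot\bigl(\nose_f(i,\W) + \nose_b(i,\W)\bigr) + \sum_{i=0}^{k}\nose_b(i,\W) \\
           &= -\hollow(\W) + \sum_{i=0}^{k} i\cdot\nose(i,\W) + \nose_b.
\end{align*}
The first two summands are precisely $\Bal(\W)$ by its definition in the paragraph preceding \eqref{eq:sep}, so $\Jmp(\W) = \Bal(\W) + \nose_b$, as required.

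There is no real obstacle here beyond checking that the hypotheses of Lemma~\ref{lem:jmp of noses and hollows} are met, which they are: $\M$ is connected and $k \in \Range{\wrap}$, and every edge of $\W$ is internal by assumption, so every hollow of $\W$ is internal and every nose of $\W$ is either forward or backward. The step that most warrants care is making sure that the ``$+1$'' accounting for backward noses exactly matches the count $\nose_b$; this is where the identity $\nose(i,\W) = \nose_f(i,\W) + \nose_b(i,\W)$ does the bridging between the forward/backward decomposition used by the lemma and the $\Bal$ decomposition, which ignores this distinction.
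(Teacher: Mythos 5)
Your proof is correct and matches the paper's argument essentially step for step: apply Lemma~\ref{lem:jmp of noses and hollows} edge-by-edge, split noses into forward and backward, and regroup to recover $\Bal(\W) + \nose_b$. The only (cosmetic) difference is that you correctly run the sums from $i=0$, which is needed to pick up the $+1$ contributed by backward $0$-noses, whereas the paper's displayed sums begin at $i=1$ — a small typo on the paper's side that your write-up fixes.
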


\begin{proof}
 If $f_i$ and $b_i$ are the number of forward and backward $i$-noses of $\W$, respectively, then 
 \begin{align*}
  \Jmp(\W) &= \sum_{i=1}^k \left(if_i + (i+1)b_i\right) - \hollow(\W) 
            = \sum_{i=1}^k i\nose(i,\W) - \hollow(\W) + \sum_{i=1}^k b_i = \Bal(\W) + \nose_b
 \end{align*}
 by Lemma~\ref{lem:jmp of noses and hollows}.
\end{proof}

The \emph{column} $\Col(A)$ of the arc $A$ is defined according to a ``topological ordering'' of the backbone of $\Syn^k_*$.  The fact that such an ordering exists follows from the next lemma.

\begin{lemma}\label{lem:acyclic}
 If\/ $\M$ is a connected PCA model and $k \in \Range{\wrap}$, then the backbone of\/ $\Syn^k_*$ is acyclic.
\end{lemma}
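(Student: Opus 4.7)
The plan is to argue by contradiction. Suppose $\W$ is a cycle in the backbone of $\Syn_*^k$. Every edge of $\W$ is forward, hence internal and non-backward, so Corollary~\ref{cor:jmp and bal} applies with no backward noses and gives $\Bal(\W) = \Jmp(\W)$. Because $\W$ is a cycle, $\Jmp(\W) = 0$, whence
\[
  \hollow(\W) \;=\; \sum_{i=1}^{k} i\,\nose(i,\W).
\]

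First I would dispose of the sub-case where $\W$ uses no $i$-nose with $i\geq 1$: the displayed equality then forces $\hollow(\W)=0$, so every edge of $\W$ is a forward $0$-nose $A\to\RR(A)$, and $\W$ has the form $A,\RR(A),\RR^2(A),\dots,\RR^m(A)=A$. But such a full $\RR$-loop around $\M$ must traverse the edge entering the initial arc $A_0$, which is external and therefore not forward, a contradiction.

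For the generic case I would apply an extremal argument. Because every edge of $\W$ is internal, the positions $s(A)$ along $\W$ fit inside a linear segment of $C(\M)$ that avoids $0$ and so can be compared without wraparound. Pick $A^*\in\W$ minimizing $\Row(A^*)$, breaking ties by minimum $s(A^*)$. By Lemma~\ref{lem:jmp of noses and hollows} the edge of $\W$ entering $A^*$ has $\Jmp\leq 0$; a forward $0$-nose $\LL(A^*)\to A^*$ would place another arc of $\W$ in the same (minimum) row at strictly smaller position, contradicting the choice of $A^*$, so the incoming edge must be a forward hollow $\Fr(A^*)\to A^*$. Analogously, the outgoing edge cannot be a hollow, and following the (possibly empty) chain of forward $0$-noses leaving $A^*$ we must eventually reach a forward $i$-nose with $i\geq 1$, for otherwise $\W$ would close up inside the minimum row as an all-$0$-nose cycle, already excluded.

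The main obstacle, and the technical heart of the proof, is to lift this local information into a global contradiction. My plan is to unfold $\W$ using Lemma~\ref{lem:nose equivalences}, replacing every $i$-nose $A\to B$ in $\W$ by the relation $B=\Hr^i\circ\RR(A)$ and every hollow $A\to B$ by $B=\Fl(A)$. After these substitutions $\W$ reduces to a closed sequence of $\RR$-, $\LL$-, $\Fr$-, $\Fl$-, $\Hr$-, $\Hl$-steps applied to $A^*$. Combining this closure with the balance equation above and with the internality of every step, one argues that at least one step must either cross the point $0$ of $C(\M)$, violating internality, or produce a backward nose, violating forwardness; either alternative contradicts the hypothesis that every edge of $\W$ is forward.
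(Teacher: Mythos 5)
Your setup through the balance equation $\hollow(\W)=\sum_i i\,\nose(i,\W)$ is correct, and your dismissal of the all-$0$-nose case is valid (a cycle of $\RR$-steps must wrap through $A_0$, hence contain the external edge $\LL(A_0)\to A_0$, which is not forward). The extremal argument also correctly shows that a forward cycle must contain both a $1$-hollow and a forward $i$-nose with $i\ge 1$. However, this is not yet a contradiction: the balance equation is perfectly compatible with a forward closed walk containing, say, one $1$-nose and one $1$-hollow, so the existence of both types of edges alone rules nothing out. The ``technical heart'' is exactly where your proposal stops being a proof: you announce a plan to unfold $\W$ via Lemma~\ref{lem:nose equivalences} and ``argue that at least one step must either cross $0$ or produce a backward nose,'' but you give no mechanism by which such a step would be forced to exist. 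That claim is precisely what needs to be proved, and it is not obvious.

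The paper takes a different and more disciplined route: it proves the \emph{stronger} statement that every forward walk $\W=A_0,\dots,A_j$ with $j>0$ and $\Row(A_0)=\Row(A_j)$ satisfies $A_0<A_j$ (so a cycle, where $A_0=A_j$, is impossible). This is proved by a double induction on $\hollow(\W)$ and $|\W|$. The key inductive move, which your sketch is missing, is the explicit cancellation of a $p$-nose against a matching $1$-hollow: when $\W$ goes up via a $p$-nose $A_{x-1}\to A_x$ and later comes down via a $1$-hollow $A_y\to A_{y+1}$, one replaces the segment $A_{x-1},A_x,\dots,A_y,A_{y+1}$ by the forward segment $A_{x-1},\Fl(A_x),\dots,A_{y+1}$ through a $(p-1)$-nose and a chain of $0$-noses, using Lemma~\ref{lem:nose equivalences} (the structure in \cref{fig:diagonal}) to guarantee this new segment exists, is forward, and stays internal. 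This strictly decreases $\hollow(\W)$, bringing you back to the base case. Without this concrete reduction, your proof has a genuine gap: it identifies the right local configuration (a nose and a hollow must both appear) but does not show how that leads to a contradiction.
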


\begin{proof}
 It suffices to show that $A_0 < A_j$ for any forward walk $\W = A_0, \ldots, A_j$ of $\Syn^k_*$ with $j > 0$ such that $\Row(A_0)=\Row(A_j)$.  The proof is by induction on $\hollow(\W)$ and $|\W|$.  By Lemma~\ref{lem:jmp of noses and hollows}, every edge of $\W$ is a $0$-nose when $\hollow(\W) = 0$, thus $A_i < A_{i+1} = \RR(A_i)$ for every $i \in \Range{j}$ and, hence, $A_0 < A_j$.  For the inductive step, consider the following alternatives.
 \begin{discription}[label={\textbf{Case~$\mathbf{\arabic*}$:}}]
  \item $\Row(A_i) = \Row(A_0)$ for some $i \in \ORange{j}$. Let $\W_0$ and $\W_1$ be the subpaths of $\W$ from $A_0$ to $A_i$ and from $A_i$ to $A_j$, respectively.  Clearly, $\max\{\hollow(\W_0), \hollow(\W_1)\} \leq \hollow(\W)$ and $\max\{|\W_0|, |\W_1|\} < j$, thus $A_0 < A_i$ and $A_i < A_j$ by induction.

  \item $\Row(A_i) > \Row(A_0)$ for some $i \in \ORange{j}$.  By Lemma~\ref{lem:jmp of noses and hollows}, there exists $x,y \in \ORange{j}$, $x \leq y$, such that $A_{x-1} \to A_x$ is a $p$-nose, $p > 0$, and $A_{y} \to A_{y+1}$ is a $1$-hollow.  Among all the possible combinations for $x$ and $y$, take one minimizing $y-x$.  In this configuration, $\Row(A_x) = \Row(A_y)$ as every edge in the subpath $A_x, \ldots, A_y$ of $\W$ is a $0$-nose.  By Lemma~\ref{lem:nose equivalences} (\cref{fig:diagonal}), $\Fl(A_x) = \Hr^{p-1}\circ\RR(A_{x-1})$ and $A_{x-1} \to \Fl(A_x)$ is a $(p-1)$-nose.  The former condition implies that $s(A_{x-1}), s(\Fl(A_x)), s(A_{x})$ appear in this order in a clockwise traversal of $C(\M)$.  Then, taking into account that $A_{x-1} < A_x$ because $\W$ is internal, we conclude that $A_{x-1} < \Fl(A_x) < A_x$ and, thus, $A_{x-1} \to \Fl(A_x)$ is internal as well.  Moreover, $A_{x-1} \to \Fl(A_x)$ is forward because it starts at the same vertex as the forward edge $A_{x-1} \to A_x$.  Then, by Lemma~\ref{lem:jmp of noses and hollows} and recalling that $A_y \to A_{y+1}$ is a $1$-hollow, we obtain that $\Row(\Fl(A_x)) = \Row(A_x) - 1 = \Row(A_y) - 1 = \Row(A_{y+1})$.  And, since $\Fl(A_{x}) \leq \Fl(A_y) = A_{y+1}$ because $A_x \leq A_y$, the walk $\W_0$ of $0$-noses of $\Syn_*^k$ from $\Fl(A_x)$ to $A_{y+1}$ is forward as well.  Summing up, $\W'$ $=$ $A_0, \ldots, A_{x-1}, \Fl(A_x)$ $+$ $\W_0$ + $A_{y+1}, \ldots, A_j$ is a forward walk of $\Syn_*^k$.  Then $A_0 < A_j$ follows by induction because $\hollow(\W') = \hollow(\W) - 1$.
  
  \item $\Row(A_i) < \Row(A_0)$ for every $i \in \ORange{j}$.  By Lemma~\ref{lem:jmp of noses and hollows}, $A_0 \to A_1$ is a $1$-hollow and $A_{j-1} \to A_j$ is a $p$-nose, $p > 0$.  By Lemma~\ref{lem:nose equivalences} (\cref{fig:diagonal}), $A_j = \Hr \circ \Fl(A_j)$ and $A_{j-1} \to \Fl(A_j)$ is a $(p-1)$-nose that is forward because it starts at the same vertex as the forward edge $A_{j-1} \to A_j$.  Then, $\W' = A_1, \ldots, A_{j-1}, \Fl(A_j)$ is a forward walk of $\Syn^k_*$.  Clearly, $|\W'| > 1$, $\hollow(\W') = \hollow(\W)-1$ and, by Lemma~\ref{lem:jmp of noses and hollows}, $\Row(A_1) = \Row(A_0) - 1 = \Row(A_j) - 1 = \Row(\Fl(A_j))$.  Then, $A_1 \leq \LL \circ \Fl(A_j) < \Fl(A_j)$ follows by induction and, consequently, $A_0 = \Fr(A_1) \leq \Fr\circ\LL \circ \Fl(A_j) < \Hr \circ \Fl(A_j) = A_j$.
 \end{discription}
\end{proof}

The \emph{column} of an arc $A$ of $\M$ is defined as:%
\begin{displaymath}
 \Col_0(A) = \max(\{0\} \cup \{1+\Col_0(B) \mid B \to A \text{ is a forward edge of $\Syn_*^{\wrap-1}$}\}).
\end{displaymath}%
Note that $\Col_0(A) \geq 0$ is well defined since the backbone of $\Syn_*^{\wrap-1}$ is acyclic (Lemma~\ref{lem:acyclic}).  The \emph{number of columns} of $\M$ is $\Cols(\M) = 1+\max\{\Col_0(A) \mid A \in \A(\M)\}$. 

\begin{figure}
	\centering
    \includegraphics{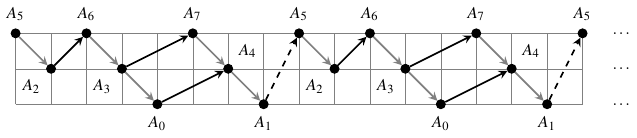}
    
	\caption{Mitas' drawing of $\Syn^1$ for the model $\M$ in \cref{fig:example-syngraph}, where only $0$- and $1$-arrows are drawn.  Each point in the drawing represents one of infinitely many positions of the vertex, while each dashed arrow corresponds to a backward nose.  Note that $\Col_0(A_7) = 5$ because $\Syn_*^2$ contains the $2$-nose $A_0 \to A_7$.}\label{fig:syngraph-columns}
\end{figure}

Let $\Col_i(A) = i\Cols(\M) + \Col_0(A)$ and $\Pos_i(A) = (\Col_i(A), \Row(A))$ for every $i > 0$ and $A \in \A(\M)$.  The \emph{(Mitas') drawing} of each subdigraph $D$ of $\Syn_*^{\wrap-1}$ is obtained by placing, in $\mathbb{R}^2$ and for every $i \geq 0$, a straight \emph{$i$-arrow} from $\Pos_i(A)$ to $\Pos_i(B)$ for each forward edge $A \to B$ of $D$ and a straight \emph{$i$-arrow} from $\Pos_i(A)$ to $\Pos_{i+1}(B)$ for each backward nose $A \to B$ of $D$ (\cref{fig:syngraph-columns}).  We write $v \to w$ to denote the arrow from $v$ to $w$ for $v,w \in \mathbb{R}^2$ and, for simplicity, we say that $v \to w$ is an \emph{arrow} of $D$ to mean that $v \to w$ is an arrow corresponding to an edge of $D$.  For every $p \geq0$, every internal walk $\W = A_0, \ldots, A_j$ of $D$ defines a curve $\Gr_p(\W)$ in $\mathbb{R}^2$ that starts at $q_0 = \Pos_p(A_0)$ and, for $x \in \Range{j}$, it takes the $i$-arrow of $A_x \to A_{x+1}$ to move from $q_x$ to $q_{x+1}$ for the unique $i$ such that $q_x = \Pos_i(A_x)$.

\begin{corollary}\label{cor:walk drawing}
  Let\/ $\M$ be a connected PCA model.  If\/ $\W$ is an internal walk of\/ $\Syn^{\wrap-1}_*$ from $A$ to $B$ with $b$ backward noses, then\/ $\Gr_p(\W)$, $p \geq 0$, is the graph of a continuous function with domain $[\Col_p(A_0), \Col_{p+b}(A_j)]$.
\end{corollary}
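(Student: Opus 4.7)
The plan is to write $\W = A_0, \ldots, A_j$, let $q_0, \ldots, q_j$ be the successive positions visited by $\Gr_p(\W)$ (so $q_0 = \Pos_p(A_0)$, and after each backward nose the copy index jumps by one while forward arrows keep it fixed, giving $q_j = \Pos_{p+b}(A_j)$), and then prove the statement by showing two things: (a) the $x$-coordinate is \emph{strictly} increasing along every straight segment $q_x \to q_{x+1}$, which forces $\Gr_p(\W)$ to be $x$-monotone and hence the graph of a function; (b) consecutive segments share their meeting endpoint, so the union of the segments' $x$-projections is the single closed interval $[\Col_p(A_0), \Col_{p+b}(A_j)]$, and within each segment the curve is linear, hence continuous.

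For (a) I would split into the two kinds of arrows. If $A_x \to A_{x+1}$ is a forward edge and its arrow lives in copy $i$, then the $x$-coordinates of its endpoints are $i\Cols(\M) + \Col_0(A_x)$ and $i\Cols(\M) + \Col_0(A_{x+1})$; the defining recursion for $\Col_0$ gives $\Col_0(A_{x+1}) \geq 1 + \Col_0(A_x)$, so the $x$-increment is at least $1$. If $A_x \to A_{x+1}$ is a backward nose jumping from copy $i$ to copy $i+1$, the $x$-increment equals
\[
\Cols(\M) + \Col_0(A_{x+1}) - \Col_0(A_x) \;\geq\; \Cols(\M) - \Col_0(A_x) \;\geq\; 1,
\]
using $\Col_0(A_{x+1}) \geq 0$ and the fact that $\Cols(\M) = 1 + \max_A \Col_0(A)$ forces $\Col_0(A_x) \leq \Cols(\M) - 1$. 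Note that $\Col_0$ is well-defined precisely because the backbone of $\Syn_*^{\wrap-1}$ is acyclic by \cref{lem:acyclic}, so this step silently relies on the connectedness hypothesis.

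For (b) the curve is piecewise linear and every break point $q_x$ is simultaneously the end of one segment and the start of the next, so $\Gr_p(\W)$ is continuous and its projection onto the $x$-axis is the union of the adjacent closed intervals $[x(q_x), x(q_{x+1})]$, which telescopes to $[x(q_0), x(q_j)] = [\Col_p(A_0), \Col_{p+b}(A_j)]$. Combined with the strict monotonicity from (a), this yields the claim. The only technical subtlety — and the one place I would be careful — is the backward-nose inequality, since it is where both the normalization $\Col_i = i\Cols(\M) + \Col_0$ and the extremal property of $\Cols(\M)$ must be used; everything else reduces to bookkeeping on the walk.
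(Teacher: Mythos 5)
Your proof is correct, and since the paper states the corollary without an explicit proof, it takes the only natural route: show each arrow has strictly positive $x$-increment (distinguishing forward arrows, where the recursion defining $\Col_0$ directly gives an increment of at least $1$, from backward noses, where the normalization $\Col_{i+1}-\Col_i=\Cols(\M)$ together with $\Col_0(A_x)\leq\Cols(\M)-1$ and $\Col_0(A_{x+1})\geq 0$ also gives at least $1$), then observe that adjacent arrows meet at their common vertex position, so the piecewise-linear curve is $x$-monotone, continuous, and projects onto the single interval $[\Col_p(A_0),\Col_{p+b}(A_j)]$. The bookkeeping of the copy index (incrementing exactly at backward noses) correctly yields $q_j=\Pos_{p+b}(A_j)$, and the appeal to Lemma~\ref{lem:acyclic} for well-definedness of $\Col_0$ matches the paper's own remark immediately after that definition.
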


The drawing of $\Syn^k$ is so attractive because it is ``plane'', thus it provides a geometric framework to reason about PCA models.

\begin{theorem}\label{thm:planarity}
 Let\/ $\M$ be a connected PCA model.  For every $k \in \Range{\wrap}$, two internal walks\/ $\W$ and\/ $\W'$ of\/ $\Syn^k$ have a common vertex if and only if\/ $\Gr_p(\W)$ and $\Gr_q(\W')$ share a point for some $p,q \geq 0$. Furthermore, a vertex $A$ belongs to\/ $\W$ and\/ $\W'$ if and only if\/ $\Pos_i(A)$ belongs to both\/ $\Gr_p(\W)$ and\/ $\Gr_q(\W')$ for some $i,p,q \geq 0$.
\end{theorem}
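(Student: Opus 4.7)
The proof splits naturally into two directions, and I would handle each separately.

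\emph{Shared vertex implies shared point.} Suppose $A$ is a common vertex of $\W$ and $\W'$, appearing as the $s$-th vertex of $\W = A_0,\ldots,A_j$ and the $s'$-th vertex of $\W' = A'_0,\ldots,A'_{j'}$. Let $b$ and $b'$ denote the number of backward noses among the first $s$ and $s'$ edges of $\W$ and $\W'$, respectively. A straightforward induction on $s$ using the recursive definition of $\Gr_p$ and Corollary~\ref{cor:walk drawing} shows that $\Pos_{p+b}(A) \in \Gr_p(\W)$ and $\Pos_{q+b'}(A) \in \Gr_q(\W')$ for every $p,q \geq 0$. Setting $p = b'$ and $q = b$ yields the common point $\Pos_{b+b'}(A)$, which also settles the ``only if'' direction of the ``furthermore'' statement.

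\emph{Shared point implies shared vertex.} For the converse, my plan is to first establish a \emph{vertex rigidity} property: for every arc $Z \in \A(\M)$ and every $i \geq 0$, if $\Pos_i(Z)$ lies on $\Gr_p(\W)$ then $Z$ is a vertex of $\W$. This reduces to proving that the open interior of every Mitas arrow of $\Syn^k$ avoids every vertex position. I would argue this by case analysis on the arrow type, using Lemma~\ref{lem:jmp of noses and hollows}: for hollows and forward $1$-noses the arrow has $|\Jmp|=1$, so it attains an integer $y$-value only at its endpoints; for a $0$-nose $A \to \RR(A)$, the arrow lies entirely at a single row $r$, and since row $r$ is contiguous in $\A(\M)$ with strictly increasing column values (by the definition of $\Col_0$ along forward $0$-nose chains), no other row-$r$ arc occupies an intermediate column; for forward $i$-noses with $i \geq 2$ and for backward noses, the only candidate intermediate vertex positions are the diagonal arcs $\Fr^t(A)$ and $\RR \circ \Fr^t(A)$ from Lemma~\ref{lem:nose equivalences} (see Figure~\ref{fig:diagonal}), and the recursive definition $\Col_0(\cdot) = 1 + \max(\ldots)$ forces those arcs to sit outside the arrow's open column span.

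With vertex rigidity in hand, I would finish the main ``if'' direction by induction on $|\W| + |\W'|$. A shared point of $\Gr_p(\W)$ and $\Gr_q(\W')$ is either a vertex position, in which case applying vertex rigidity to both curves yields the shared vertex, or it lies strictly inside two distinct arrows. To rule out the latter I would prove a companion non-crossing lemma: two arrows of $\Syn^k$ whose images intersect in a non-endpoint point must coincide on a subsegment. The essential inputs are once again Lemma~\ref{lem:nose equivalences} and the acyclicity of the backbone (Lemma~\ref{lem:acyclic}), together with the fact that a backward nose shifts the layer by exactly one (adding $\Cols$ to the $x$-coordinate) before the displacement within the strip, so the cross-layer arrows cannot transversally meet a within-layer arrow except at an endpoint.

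\emph{Main obstacle.} The technical core lies in the $i$-nose case of vertex rigidity and in the companion non-crossing lemma. Both amount to careful bookkeeping of rows and columns, and their difficulty stems from the fact that a vertex $B$ may have forward predecessors coming from noses of several different lengths, so $\Col_0(B)$ can exceed $\Col_0(A)+1$ by any amount; the diagonal structure of Lemma~\ref{lem:nose equivalences} is ultimately what pins down the allowed slopes and column gaps enough to force planarity of Mitas' drawing for $\Syn^k$.
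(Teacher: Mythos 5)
Your decomposition — forward direction (shared vertex gives shared vertex-position), ``vertex rigidity'' (no vertex position in the open interior of an arrow), and a ``non-crossing lemma'' — is conceptually the right shape, and the forward direction is correct and stated more explicitly than in the paper. But the central difficulty of this theorem is hidden in your two companion claims, and your sketches of them have a genuine gap.

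For vertex rigidity in the case of a forward $i$-nose with $i \geq 2$ (or a backward nose), you assert that ``the only candidate intermediate vertex positions are the diagonal arcs $\Fr^t(A)$ and $\RR\circ\Fr^t(A)$''. This is precisely the claim that requires proof, and nothing in Lemma~\ref{lem:nose equivalences} or the definition of $\Col_0$ rules out some unrelated arc $Z$ in an intermediate row whose column happens to coincide with the arrow's $x$-coordinate at that height. The diagonal structure tells you where $A_j$ and $B_j = \RR(A_j)$ sit, but it does not constrain the columns of other arcs in those rows. The non-crossing lemma has the same problem in harder form: you need to rule out crossings between a long-nose arrow and, say, a backward-nose arrow, or two long-nose arrows of different lengths, and ``careful bookkeeping of rows and columns'' is not an argument — the constraint that forces non-crossing is not local to the two arrows.

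The paper resolves both issues with a single device that your proposal omits: induction on $k$, building $\Syn^{k+1}_+$ from $\Syn^k_+$ by inserting $(k{+}1)$-noses one at a time (the first phase of Theorem~\ref{thm:Syn^k to Syn^k+1 remove}). When the nose $A_0 \to B_{k+1}$ is inserted, the paper constructs a closed region $Q$ bounded by the already-drawn $k$-noses $A_0 \to B_k$ and $A_1 \to B_{k+1}$ together with short hollow-and-$0$-nose paths, and then uses the \emph{inductive} planarity hypothesis to conclude that no vertex position and no arrow of the current graph sits in the interior of $Q$; the new arrow lies inside $Q$ and is therefore safe. Your vertex rigidity and non-crossing claims are exactly the facts that this polygon argument delivers, but they do not admit the direct per-arrow case analysis you propose, because proving them for the $(k{+}1)$-nose requires already knowing them for the bounding $k$-noses. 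Without the induction on $k$ (and the auxiliary graph $\Syn^k_+$, which exists precisely to make the insertion order well-defined), the argument does not close.
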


\begin{proof}
 Recall that for $k \in \ORange{\wrap}$, $\Syn^k$ is the subgraph of $\Syn^k_*$ obtained by removing every short nose.  That is, an $i$-nose $A \to B$ is removed from $\Syn^k_*$ when $i < k$ and either $\Hl(A) \neq \bot$ or $\Hr(B) \neq \bot$.  Let $\Syn^k_+$ be the subgraph of $\Syn^k_*$ obtained by removing each $i$-nose $A \to B$ such that $i < k$, $\Hl(A) \neq \bot$, and $\Hr(B) \neq \bot$.  Clearly, $\Syn^k_+$ is a supergraph of $\Syn^k$.  For technical reasons, we prove that the theorem holds even when $\Syn^k$ is replaced by $\Syn_+^k$, where $\Syn_+^0 = \Syn^0$.  This stronger version of the theorem can be of interest in other applications.  In this article we hide the definition of $\Syn^k_+$ here, to avoid distractions in the main text.

 Bending $\mathbb{R}^2$, we can identify the vertical lines passing through $i\Cols(\M)$ at the $x$ axis to obtain a cylinder $\mathbb{Y}$ in which $\Pos_i(A)$ is mapped to $\Pos_0(A)$ for every $A \in \A(\M)$ and $i \geq 0$.  This mapping transforms the drawing of $\Syn^k_+$ from $\mathbb{R}^2$ to $\mathbb{Y}$, where each $i$-arrow for $A \to B$ is mapped to the $0$-arrow of $A \to B$.  Clearly, all the curves $\Gr_p(\W)$, $p \geq 0$, of an internal walk $\W$ are mapped to the same curve of $\mathbb{Y}$.  Thus, it suffices to show that $\Syn^k_+$ has no crossing arrows when drawn in $\mathbb{Y}$ or, equivalently, that every edge has a corresponding $b$-arrow ($b \geq 0$) that is crossed by no other $b$-arrow. We prove the latter by induction on $k$.
 
 For the base case $k = 0$ we observe four facts that together imply the no $0$-arrow of $\Syn^0_+ = \Syn^0$ is crossed by another $0$-arrow.  For $r \in \Range{\Rows(\M)}$, let $A_0^r < \ldots < A_{q}^r$ be the vertices of $\Syn^0_+$ in row $r$ ($q$ depends on $r$).  Recall that $A_{i+1}^r = \RR(A^r_i)$ for $i \in \Range{q}$ by \eqref{eq:row}.  Hence, $A^r_i \to A^r_{i+1}$ is a forward $0$-nose and $\W_r = A^r_0, \ldots, A_{q}^r$ is a path of the backbone of $\Syn^0_+$.  Fact~1 then follows by Corollary~\ref{cor:walk drawing}: $\Gr_0(\W_r)$ is the graph of the constant function $x \mapsto r$ in the domain $[\Col_0(A_1^r), \Col_0(A_{q}^r)]$.  Fact 2 follows by definition: if $r < \Rows(\M)-1$, then the $0$-arrow corresponding to the backward $0$-nose $A_{q}^r \to A_0^{r+1}$ goes from $(\Col_0(A_{q}^r), r)$ to $(\Col_1(A_0^{r+1}), r+1)$.  Fact 3 follows by Lemma~\ref{lem:jmp of noses and hollows}: if $r > 0$, then the $0$-arrow corresponding to a $1$-hollow $A_i^r \to B$ starts at $(\Col_0(A_i^r), r)$ and ends at $(\Col_0(B), r-1)$.  Finally, since $B \leq B'$ when $A \to B$ and $A' \to B'$ are $1$-hollows with $A \leq A'$, Facts 1~and~3 imply Fact~4: if $A_i^r \to B$ and $A_j^r \to B'$ are hollows for $i<j$ (i.e., $\Col_0(A_i^r) < \Col_0(A_j^r)$), then $\Row(B) = \Row(B') = r-1$ and $\Col_0(B) < \Col_0(B')$.  Facts~1--4 imply that no $0$-arrow of $\Syn^0_+$ is crossed by another $0$-arrow; in order to have a clearer vision of this we can observe an example Mita's drawing on a cylinder (\cref{fig:syngraph-cylinder-0}).

\begin{figure}
	\centering
    \includegraphics{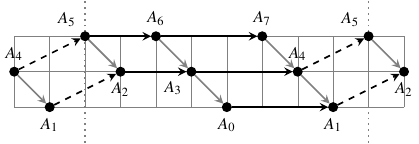}
    
	\caption{Mitas' drawing of $\Syn_+^0=\Syn^0$ for the model $\M$ in \cref{fig:example-syngraph}, rendered in the cylinder $\mathbb{Y}$.  In this drawing, the vertical dotted lines are identified with each other. This provides an example for Facts 1--4: Fact 1 says that any forward $0$-noses connecting two vertices at row $r$ are drawn as horizontal arrows at height $r$; Fact 2 says that backward $0$-noses connect the rightmost vertex at some row with the leftmost vertex at the next row; Fact 3 shows that all $0$-arrows for hollows connect vertices from some row to the previous row (and they cannot cross the arrow of a backward nose), and Fact 4 says that arrows derived from hollows that connect the same rows cannot intersect each other. In conclusion, $\Syn^0_+$ has no crossing arrows.}\label{fig:syngraph-cylinder-0}
\end{figure}
  
 For the inductive step we apply an algorithm to transform $\Syn^k_+$ into $\Syn^{k+1}_+$.  In a first phase, the algorithm inserts a nose $A \to \Hr(B)$ for every $k$-nose $A \to B$ of $\Syn^k_+$ with $\Hr(B) \neq \bot$.  In a second phase, the algorithm removes those $k$-noses that do not belong to $\Syn_+^{k+1}$.  This algorithm is correct by Lemma~\ref{lem:nose equivalences}.  Clearly, the removal of edges and the insertion of external noses create no crossing arrows in the drawing of $\Syn_+^{k+1}$.  Then, to prove the inductive step, it suffices to consider only the first phase of the algorithm for the case in which the inserted nose is internal.  Let $\Syn_i$ be the graph obtained immediately after the $i$-th $(k+1)$-nose was inserted.  We show by induction on $i$ that some $b$-arrow corresponding to the inserted edge crosses no other $b$-arrows ($b \geq 0$).  The case $i = 0$ follows by induction on $k$ because $\Syn_0 = \Syn^k_+$.  For $i > 0$, suppose the $(k+1)$-nose $A \to \Hr(B)$ is inserted to obtain $\Syn_{i}$ from $\Syn_{i-1}$ (\cref{fig:planarity}).  

 For $j \in \CRange{k+1}$, let $A_j = \Fr^j(A_0)$, $B_j = \RR(A_j)$, $X_j = \Fl(A_{j+1})$, and $Y_{j+1} = \Fr(B_j)$.  If $j \leq k$, then $A_{j+1} \to X_{j}$ and $Y_{j+1} \to B_j$ are a $1$-hollows because $\Fr(X_j) = \Fr(A_j) = A_{j+1}$ and $\Fl(Y_{j+1}) = \Fl(B_{j+1}) = B_j$.  Thus, $A_{j+1} \to X_j$ and $Y_{j+1} \to B_j$ belong to $\Syn_i$ (\cref{fig:planarity}).  Moreover, if $Z$ is a vertex with $X_j < Z < A_j$, then $\Fr(Z) = \Fr \circ \RR(Z) = \Fr(A_j)$, thus $\Hl(Z) = \bot$ and, consequently, $Z \to \RR(Z)$ is a $0$-nose of $\Syn_+^k$ and of $\Syn_{i-1}$. That is, $\Syn_{i-1}$ has a path of $0$-noses from $X_j$ to $A_j$.  Similarly, $\Syn_{i-1}$ has a path of $0$-noses from $B_{j+1}$ to $Y_{j+1}$ (\cref{fig:planarity}).  By Lemma~\ref{lem:nose equivalences}, $\Hr(B) = B_{k+1}$, thus $\Syn_i$ has a path $\W$ from $A_{k+1}$ to $B_0$ that contains the $(k+1)$-nose $A_0 \to B_{k+1}$ together with every hollow $A_{j+1}$ to $X_j$, every hollow $Y_{j+1} \to B_j$, every path of $0$-noses from $X_j$ to $A_j$, and every path of $0$-noses from $B_{j+1}$ to $Y_{j+1}$, for $j \in \CRange{k}$.  By Corollary~\ref{cor:walk drawing}, $\Gr_0(\W)$ is the graph of a partial function from $\Pos_0(A_{k+1})$ to $\Pos_b(B_0)$, where $b$ is the number of backward noses of $\W$ (\cref{fig:planarity} depicts the case $b = 0$).
 

 \begin{figure}
	\centering
	\includegraphics{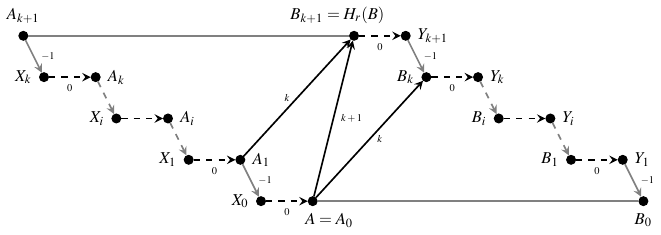} 
    
	\caption[]{Proof of the inductive step of Theorem~\ref{thm:planarity} for the case $b = 0$.  All the $0$-arrows in the picture belong to the drawing of $\Syn_i$, gray arrows correspond to $1$-hollows, black arrows correspond to noses, and dashed arrows correspond to paths. Labels in the arrows denote the jump of the corresponding arrow. The gray lines, that are not part of the drawing of $\Syn_i$, are used to enclose the polygon $P$ defined in Theorem~\ref{thm:planarity}.  Clearly, the polygon $Q$ of Theorem~\ref{thm:planarity} is contained in $P$, whereas the $(k+1)$-nose is inside $Q$.\label{fig:planarity}}
 \end{figure}

 Let $b(j)$ be the number of backward noses of the subpath of $\W$ from $A_{k+1}$ to $A_j$ for $j \in \CRange{k}$.  Also, let $\pi = 1$ if $A_0 \to B_{k+1}$ is backward and $\pi = 0$ otherwise.  By definition, the arrow corresponding to $A_0 \to B_{k+1}$ in $\Gr_0(\W)$ is a $b(0)$-arrow that goes from $\Pos_{b(0)}(A_0)$ to $\Pos_{b(0)+\pi}(B_{k+1})$.  By Lemma~\ref{lem:nose equivalences}, $B_{j+1} = \Hr(B_j)$, thus $\Row(B_{j+1}) = \Row(B_{j})+1$ by~\eqref{eq:row}.  Then, the subpath of $\W$ from $B_{j+1}$ to $B_j$ has exactly one $1$-hollow and $\Jmp = -1$ and, therefore, it has no backward noses by Corollary~\ref{cor:jmp and bal}.  Then, by induction, it follows that the arrow corresponding to $Y_1 \to B_0$ in $\Gr_0(\W)$ is a $(b(0)+\pi)$-arrow, thus $b = b(0)+\pi$. 
  
 By definition, $B_j = \RR(A_j)$ for every $j \in \CRange{k+1}$.  Then, $\Row(A_j)$ is equal to either $\Row(B_j)$ or $\Row(B_{j})-1$, the latter being true if and only if $B_j$ is leftmost.  In other words, the subpath of $\W$ from $A_{j+1}$ to $A_j$ has $\Jmp \in \{-1,0\}$ and exactly one hollow.  Thus, by Corollary~\ref{cor:jmp and bal}, this subpath has either $0$ or $1$ backward noses and it has $1$ backward nose only if $\Row(A_{j+1}) = \Row(A_{j}) = \Row(B_j)$. Consequently, $b(j) + \pi \leq 1$ follows by induction for every $j \in \CRange{k}$.  Altogether, this means that $\Syn^0$ has a $b(j)$-arrow from $\Pos_{b(j)}(A_j)$ to $\Pos_{b}(B_j)$ that corresponds to the $0$-nose $A_j \to B_j$.  As discussed in the base case, this implies that no vertex $A$ in $\Syn_{i-1}$ has $\Pos_p(A)$, $p \geq 0$, in the interior of the polygon $P$ whose borders are determined by the curve of $\Gr_0(\W)$ from $\Pos_0(A_{k+1})$ to $\Pos_{b(0)}(A_0)$, the curve of $\Gr_0(\W)$ from $\Pos_b(B_{k+1})$ to $\Pos_b(B_0)$, the line from $\Pos_{b(0)}(A_0)$ to $\Pos_b(B_0)$, and the line from $\Pos_0(A_{k+1})$ to $\Pos_b(B_{k+1})$ (\cref{fig:planarity}). 
 
 By Lemma~\ref{lem:nose equivalences}, $A_1 \to B_{k+1}$ is a $k$-nose of $\Syn^k_+$ that also belongs to $\Syn_{i-1}$ because the second phase of the algorithm was not executed yet.  Thus, there is a $b(1)$-arrow $A_1 \to B_{k+1}$ in the drawing of $\Syn_{i-1}$ that, by construction, is completely inside $P$.  Similarly, the $b(0)$-arrow corresponding to the $k$-nose $A_0 \to B_k$ is also inside in $P$.  Therefore, the sides of $Q$ are arrows of the drawing of $\Syn_{i-1}$ for the polygon $Q$ whose boundary is determined by the $b(0)$-arrow $A_0 \to B_k$, the $b(1)$-arrow $A_1 \to B_{k+1}$, the curve of $\Gr_0(\W)$ between $\Pos_{b(1)}(A_1)$ and $\Pos_{b(0)}(A_0)$ and the curve of $\Gr_0(\W)$ between $\Pos_b(B_{k+1})$ and $\Pos_b(B_k)$.  By construction, $Q \subseteq P$, thus no point in the interior of $Q$ corresponds to $\Pos_p(A)$ for every arc $A$ of $\M$ and every $p \geq 0$ (\cref{fig:planarity}). Then, by the inductive hypothesis, no arrow of $\Syn_{i-1}$ crosses the borders of $Q$. Moreover, by Lemma~\ref{lem:nose equivalences}, $A_1 \to B_k$ is the only edge of $\Syn_*^{k+1}$ with an arrow inside $Q$, but it does not belong to $\Syn_+^k$ because $\Hl(A_1) = A_0 \neq \bot$ and $\Hr(B_k) = B_{k+1} \neq \bot$.  Consequently, no arrow of $\Syn_{i-1}$ intersects the interior of $Q$ and, therefore, the $b(0)$-arrow $A_0 \to B_{k+1}$, that belongs to the interior of $Q$ as it goes from $\Pos_{b(0)}(A_0)$ to $\Pos_b(B_{k+1})$  (\cref{fig:planarity}), is crossed by no arrows of $\Syn_i$.  Summing up, $\Syn_i$ has no pair of crossing arrows.
\end{proof}


\section{A simple and efficient algorithm for the multiplicative problem}
\label{sec:recognition}

In this section we devise a simple and efficient algorithm to solve \kMult for an input PCA model $\M$.  By Theorem~\ref{thm:no_positive_cycles}, it suffices to determine if there exist $\Circ$ and $\Len$ such that $\Sep_{\Circ,\Len}(\W) \leq 0$ for every cycle $\W$ of $\Syn^k$.  One of the advantages of arranging the arcs of $\M$ into rows is that we can immediately conclude that $\Sep_{\Circ,\Len}(\W) \leq 0$ when $\W$ is internal and $\Len \geq 2n$.  Just note that: $\Ext(\W) = 0$ because $\W$ is internal, $\Jmp(\W) = 0$ because $\W$ starts and ends at the same row, and $\W$ has at least one backward edge by Lemma~\ref{lem:acyclic}.  Then, $\Sep_{\Circ,\Len}(\W) \leq -\Len + 2\nose(\W) \leq 0$ by Corollary~\ref{cor:jmp and bal} and \eqref{eq:sep}.  Indeed, if all cycles of $\Syn^k$ were internal, then $\M$ would essentially be a PIG model, and thus $k$-multiplicative by Theorem~\ref{thm:multiplicative PIG}.

The reason there are PCA models which are not $k$-multiplicative for some $k$ is because their synthetic graphs have some cycles with $\Ext\neq 0$ that leave no feasible value for $\Circ$ when viewed as inequalities of the full system $\F_{\Circ,\Len}^k$. To give a general intuition, consider a circuit $\W_N$ of $\Syn^k$ with $\Ext(\W_N)<0$ and search for the values of $\Circ$ so that $\Sep_{\Circ,\Len}(\W_N)\leq 0$, we get a lower bound for $\Circ$, while if we do the same for a circuit $\W_H$ with $\Ext(\W_H)>0$, we get an upper bound for $\Circ$. Certainly, when those two bounds are incompatible, $\M$ is not $k$-multiplicative. As we will see, we can find a pair of cycles of $\Syn^k$ such that they are compatible if and only if $\M$ is $k$-multiplicative.

The properties we will study are for internal cycles of synthetic graphs, so we will use a technique that will allow us to create an alternative model in which we have internal circuits that correspond to circuits of $\Syn^k$ which are not internal; this is the loop unrolling technique.

Let $\Circ$ be the circumference of the circle of a PCA model $\M$ and $\Copies \in \mathbb{N}$.  The \emph{$\Copies$-unrolling} of $\M$ (\cref{fig:example-unroll}) is the PCA model $\Copies \Unroll \M$ whose circle has circumference $\Copies\Circ$ that has $\Copies$ arcs $A_0$, $A_1$, $\ldots$, $A_{\Copies-1}$ for every $A \in \A$ such that, for $i \in \Range{\Copies}$:
\begin{displaymath}
  s(A_i)=s(A) + i\Circ\mbox{ and } t(A_i)=t(A) + \Circ(i+\Boolean{s(A) > t(A)}) \bmod \Copies\Circ.
\end{displaymath}
We refer to $A_i$ as being a \emph{copy} of both $A$ and $A_j$ (for $j \in \Range{\Copies} - \{i\}$); see~\cref{fig:example-unroll}.  Also, we say that a row $\ARow$ of $\Copies\Unroll\M$ is a \emph{copy} of another row $\ARow'$ of $\Copies\Unroll\M$ when $|\ARow| = |\ARow'|$ and the $i$-th arc of $\ARow$ is a copy of the $i$-th arc of $\ARow'$ for every $0 \leq i < |\ARow|$.  An important feature of $\Copies\Unroll\M$ is that it is highly repetitive when $\Copies$ is large enough.

\begin{figure}
    \mbox{}\hfill\includegraphics{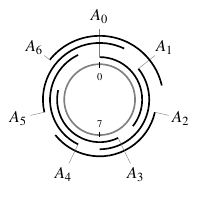}\hfill\includegraphics{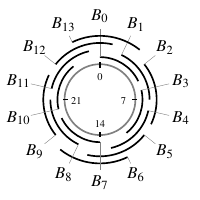}\hfill\includegraphics{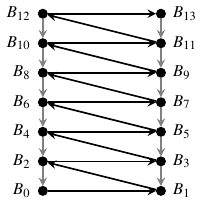}\hfill\mbox{}
	\caption{From left to right: a model $\M$ with arcs $A_0 < \ldots < A_6$, the model $2\Unroll\M$ with arcs $B_0, \ldots, B_{13}$, and the interior of $2\Unroll\Syn^0$.  The copies of $A_i$ are $B_{i}$ and $B_{i+7}$.  Note that $B_i$ is leftmost if and only if $B_{i+7}$ is rightmost, thus each backward $0$-nose $A_i \to A_{i+1}$ of $\Syn^0$ has a forward copy $B_{i+7} \to B_{i+8}$ in $2\Unroll\Syn$.  Also note that the external $0$-nose $A_6 \to A_0$ has an internal copy $B_6 \to B_7$, while the external $1$-hollows $A_0 \to A_5$ and $A_1 \to A_6$ have internal copies $B_7 \to B_5$ and $B_8 \to B_6$.}\label{fig:example-unroll}
\end{figure}

\begin{lemma}\label{lem:repetitive}
 Let\/ $\M$ be a PCA model and $r = \Rows(\Copies\Unroll\M)$ for $\Copies > 0$.  If $r \geq n$, then there exist $x, z\in\CRange{n} - \{0\}$ such that the row $i+jz$ of\/ $\M$ is a copy of the row $i$ of\/ $\M$ for every $i \geq x$, $j \geq 0$, and $i+jz < r$.
\end{lemma}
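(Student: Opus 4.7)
My plan is to exploit the $\Copies$-fold rotational symmetry of $\Copies \Unroll \M$ to establish periodicity of rows after an initial transient. I would first introduce the shift $\sigma$ sending each copy $a_i$ to $a_{(i+1)\bmod\Copies}$; this is an automorphism of the PCA structure and commutes with each of $\RR, \LL, \Fl, \Fr, \Hl, \Hr$. Consequently the head predicate ``$B = \Hr(\Fl(B))$'' is $\sigma$-invariant, and by~\eqref{eq:row} combined with the injectivity of $\Hr$, an arc $B \neq A^0_0$ of $\Copies\Unroll\M$ starts a new row iff $B$ is a head and $\Fl(B)<B$ in the arc order.

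Next I would examine how $\Fl$ interacts with copy indices. For $B = a_i$ with $i \ge 1$, the set $\LN[B]$ consists of the copies (in batch $i-1$ or $i$) of the arcs of $\LN[a]$ in $\M$, so $\Fl(B) \leq B$ always, with equality precisely when $\Fl(a) = a$ in $\M$. Letting $h_0<h_1<\dots<h_{h-1}$ enumerate the heads $a$ of $\M$ with $\Fl(a) \ne a$, I would conclude that for every $i\ge 1$ the row-starting arcs in batch $i$ are exactly $(h_0)_i,\dots,(h_{h-1})_i$, while batch $0$ contributes the initial arc $A^0_0$ together with those $(h_l)_0$ whose $\Fl$ does not wrap. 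Writing $k_0$ for the number of row-starting arcs in batch $0$, a short case analysis on whether $A^0$ is a head of $\M$ yields $1\le k_0\le n$, while $h\le n$ is immediate.

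Finally I would set $x = k_0$ and $z = h$. The enumeration of row-starting arcs in increasing order gives $B_{k_0+qh+l}=(h_l)_{q+1}$ for $q\ge 0$ and $l\in\Range{h}$, so $B_i$ and $B_{i+jz}$ are copies of the same $h_l$ whenever $i\ge x$ and $i+jz<r$. Because each row is obtained by iterating $\RR$ from its first arc until the next row boundary, and because both $\RR$ and the row-boundary predicate commute with $\sigma$ on batches of index $\ge 1$, the row starting at $B_i$ is a copy of the row starting at $B_{i+jz}$. The lower bound $z\ge 1$ follows from $r\ge n$: if $h=0$ then $r=k_0\le n$ forces the conclusion to be vacuous ($n\le 1$). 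The main technical subtlety will be the careful treatment of arcs $a$ with $\Fl(a) = a$ in $\M$, which never become row boundaries in any batch and hence do not contribute to either $h$ or $k_0$.
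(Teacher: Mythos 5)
Your proposal takes a genuinely different, more constructive route than the paper's. The paper's argument is a short pigeonhole: among the leftmost arcs $L_1,\dots$ of the first $O(n)$ rows, two must be copies of the same arc of $\M$ because $\M$ has only $n$ arcs, and then~\eqref{eq:row} propagates the periodicity forward row by row. Your approach instead computes the transient $x$ and the period $z$ explicitly by classifying the row-starting arcs through the head predicate, identifying $z$ as the number of heads of $\M$. This is more informative, but substantially longer, and it shoulders verification steps that the pigeonhole simply bypasses.

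Two of these steps need more care than your sketch gives them. First, the $\sigma$-invariance of ``$B=\Hr(\Fl(B))$'' shows only that the heads of $\Copies\Unroll\M$ come in full $\sigma$-orbits; it does not by itself identify those orbits with the copies of heads of $\M$. To conclude that the row-starting arcs of batch $i\ge1$ are exactly $(h_0)_i,\dots,(h_{h-1})_i$ you still need to argue that $a_i$ is a head of $\Copies\Unroll\M$ precisely when $a$ is a head of $\M$; this does follow from your observation that $\LN[a_i]$ consists of copies of $\LN[a]$, but one must additionally check that when $\Fl(a)=\Fl(\LL(a))$ in $\M$ the two $\Fl$'s in $\Copies\Unroll\M$ land on the \emph{same} copy rather than on consecutive ones. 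Second, in the degenerate case $h=0$ you cannot set $z=h$, since $z=0\notin\CRange{n}-\{0\}$; you should exhibit any admissible $z$ (the conclusion being vacuous there, so $z=1$ will do), and the parenthetical ``$n\le1$'' does not follow from $r=k_0\le n$. Finally, be aware that both your argument and the paper's are loose about the final row: when $h_0$ is not the initial arc of $\M$, the row of largest index is truncated by the reset at $A^0_0$ and so need not be a copy of the row $z$ positions before it, even though their first arcs are copies. This off-by-one is harmless for the lemma's later applications, which always stay far from row $r-1$, but it means the periodicity asserted for $B_{k_0+qh+l}=(h_l)_{q+1}$ should not be pushed all the way to $i+jz=r-1$.
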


\begin{proof}
 Let $L_i$ be the leftmost arc in the $i$-th row $\ARow_i$ of $\Copies\Unroll\M$ for $i \in \Range{r}$.  Since $r \geq n$, there exist $x,z\in \CRange{n} - \{0\}$ such that $L_{x+z}$ is a copy of $L_{x}$.  By~\eqref{eq:row}, if $L_{i}$ is a copy of $L_{i+j}$, then $\ARow_i$ is a copy of $\ARow_{i+j}$, thus $L_{i+1}$ is a copy of $L_{i+j+1}$ and, by induction, $\ARow_{i+jz}$ is a copy of $\ARow_{i}$ for $i \geq x$, $j \geq 0$ and $i+jz < r$.
\end{proof}

For $k \in \Range{\wrap}$, we write $\Copies \Unroll \Syn^k(\M)$ as a shortcut for $\Syn^k(\Copies \Unroll \M)$ and we drop the parameter $\M$ when no confusions are possible.  Following our naming convention, $\Copies \Unroll \Syn^k$ has a vertex called $A$ for each arc $A \in \Copies\Unroll \M$.  Thus, each vertex of $\Copies\Unroll\Syn^k$ is a copy of an arc of $\M$.  We say that a hollow (resp.\ nose) $A' \to B'$ of $\Copies\Unroll\Syn^k$ is a \emph{copy} of the hollow (resp.\ nose) $A \to B$ of $\Syn^k$ to mean that $A'$ is a copy of $A$ and $B'$ is a copy of $B$.  It is not hard to see that every edge $A' \to B'$ of $\Copies\Unroll\Syn^k$ is a copy of an edge $A \to B$ of $\Syn^k$ and, conversely, $A \to B$ has $\Copies$ copies in $\Copies\Unroll\Syn^k$.  Some of these copies can be internal while others are external and, moreover, some internal copies can be forward while others are backward (\cref{fig:example-unroll}).  Thus, $\Jmp(A' \to B')$ need not be equal to $\Jmp(A \to B)$.  Similarly, a walk $\T$ of $\Copies\Unroll\Syn^k$ is a \emph{copy} of the walk $\W$ of $\Syn^k$ when the $i$-th edge of $\T$ is a copy of the $i$-th edge of $\W$.  Again, $\W$ has $\Copies$ copies in $\Copies\Unroll\Syn^k$, one starting at each copy of its first arc, and these copies may have different $\Jmp$ values.

By definition, each external edge $A \to B$ of $\Syn^k$ has a copy $A' \to B'$ in $\Copies\Unroll\Syn^k$ that is internal when $\Copies > 1$ (\cref{fig:example-unroll}).  Similarly, each walk $\W$ of $\Syn^k$ has an internal copy in $\Copies\Unroll\Syn^k$ when $\Copies > |\W|$. When $\W$ is a circuit, any internal copy $\T$ of $\W$ is a walk between two copies $A_0$ and $A_1$ of the same vertex $A$.   If $A_0 \neq A_1$, then $\Sign(\Ext(\W))=-\Sign(\Jmp(\T))$, as the internal copies in $\T$ of the external noses and hollows of $\W$ have $\Jmp>0$ and $\Jmp<0$, respectively. Moreover, $|\Ext(\W)|$ counts one plus the number of copies of $A$ between $A_0$ and $A_1$.

Considering a large enough $\Copies$, we can manipulate all the circuits of $\Syn^k$ as if they were internal walks of $\Copies\Unroll\Syn^k$. In this work this is the sole purpose of unrolled models.  Thus, even though $\Copies \Unroll \Syn^k$ arises from its own system of difference constraints, we have no interest in solving this system for $\Copies\Unroll\M$.  Moreover, we restrict ourselves to \emph{strict PCA} (SPCA) models, i.e. PCA models that are not PIG, because $\Copies\Unroll\M$ is disconnected when $\M$ is PIG.  For these models we can take advantage of Theorem~\ref{thm:planarity} and Corollary~\ref{cor:walk drawing} to depict the drawing of $\Copies\Unroll\Syn^k$ (\cref{fig:non crossing cycles,fig:crossing cycles,fig:greedy,fig:twisting,fig:dually}).  A box labeled $\Copies\Unroll\Syn^k$ is used to frame a portion of the drawing.  Inside this box, a dot labeled $A$ is used to represent $\Pos_p(A)$ for every $A \in \A(\M)$ and $p \geq 0$.  As $\Pos_p(A)$ is defined for every $p \geq 0$, different dots can share a same label.  Similarly, for a walk $\W$ of $\Copies\Unroll\Syn^k$, $\Gr_p(\W)$ is depicted with an unlabeled curve; the identity of $\Gr(\W)$ can be decoded from the traversed dots.  Dashed horizontal and vertical lines are used to represent rows and columns of $\Copies\Unroll\Syn^k$.  The labels outside the box indicate the number of the corresponding row or column.  Note that figures are out of scale because their purpose is to explain a behavior described in the text.  Thus, these pictures as referred to as \emph{schemes}.

Twisting bound 
values are the key concept to determine if $\Copies$ is large enough.  Say that $\Spiral \in \mathbb{N}$ is a \emph{twisting bound} 
of $\Syn^k$, if for every $\Copies \geq \Spiral$, no walk $\W$ of $\Syn^k$ has a forward copy $\T$ in $\Copies\Unroll\Syn^k$ with $\Jmp(\T) \geq \Spiral$.  In other words, any copy of walk of $\Syn^k$ in $\Copies\Unroll\Syn^k$ that is internal cannot move up $\Spiral$ rows without taking a backward edge of $\Copies\Unroll\Syn^k$.  
Although we do not prove it explicitly, the techniques in this section can be applied to show that, geometrically, if $\Spiral$ is a twisting bound 
and $\T$ is a large path of $\Copies\Unroll\Syn^k$ moving upward ($\Jmp(\T) \geq \Spiral$), then $\Gr_p(\T)$ resembles a helix when drawn in the cylinder $\mathbb{Y}$ obtained by identifying the vertical lines passing through $i \Cols(\M)$, $i \geq 0$.  Theorem~\ref{thm:equivalence} applies a restricted notion of twisting bounds 
as a tool to characterize those SPCA models that are equivalent to some $k$-multiplicative model.  Specifically, $\Spiral \in \mathbb{N}$ is a \emph{hollow} (resp.\ \emph{nose}) \emph{cycle twisting bound} 
of $\Syn^k$ if for every cycle $\W$ of $\Syn^k$ with $\Ext(\W) > 0$ (resp.\ $\Ext(\W) < 0$) and every $\Copies \geq \Spiral$ it happens that no internal copy of $\Spiral\Unroll\W$ in $\Copies\Unroll\Syn^k$ is forward.  For simplicity, $\Spiral$ is a \emph{cycle twisting bound} of $\Syn^k$ when $\Spiral$ is either a nose or hollow cycle twisting bound.  
Clearly, every twisting bound is a nose cycle twisting bound, and thus a cycle twisting bound, 
because $\Jmp(\T) \geq -\Spiral\Ext(\W)$ when $\T$ is a copy of $\Spiral\Unroll\W$, for every cycle $\W$ with $\Ext(\W) < 0$.  Before Theorem~\ref{thm:equivalence}, Lemma~\ref{lem:cycle ext < 0} records the fact that the point $0$ of $C(\M)$ is crossed in the clockwise direction when enough noses are traversed.  The analogous fact that $0$ can be crossed in a counterclockwise direction follows as Corollary~\ref{cor:cycle ext > 0}.

\begin{lemma}\label{lem:cycle ext < 0}
 If\/ $\M$ is a connected PCA model and $k \in \ORange{\wrap}$, then\/ $\Syn^k$ has a cycle with\/ $\Ext < 0$.  
\end{lemma}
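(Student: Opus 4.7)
The plan is to transfer a natural nose-cycle of $\Syn_*^k$ to $\Syn^k$ via \cref{lem:only long noses}, carefully tracking $\Ext$ through the substitution. I would begin with the cycle $\mathcal{C} = A_0, A_1, \ldots, A_{n-1}, A_0$ whose edges are the $0$-noses $A_j \to \RR(A_j) = A_{(j+1)\bmod n}$; these all belong to $\Syn_*^k$ since $0 \in \CRange{k}$. Exactly one edge of $\mathcal{C}$, namely $A_{n-1} \to A_0$, is external (it is the unique edge of $\mathcal{C}$ whose source exceeds its target in the arc order), so $\hollow_\Ext(\mathcal{C}) = 0$, $\nose_\Ext(\mathcal{C}) = 1$, and therefore $\Ext(\mathcal{C}) = -1$.

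Next I would invoke \cref{lem:only long noses}, which guarantees that every edge of $\Syn_*^k$ is either already in $\Syn^k$ or is implied by a path of $\Syn^k$. The crucial observation I plan to extract from its proof is that these implications preserve $\Ext$ exactly, not merely the full weight $\Sep_{\Circ,\Len}$. Indeed, from the decomposition $\Sep_{\Circ,\Len}(\W) = \Len\Bal(\W) + \Circ\Ext(\W) + 2\nose(\W)$, the value $\Ext$ is precisely the coefficient of $\Circ$. For the recursive substitution used in the proof of \cref{lem:only long noses}---replacing a short $i$-nose $A \to B$ by the path $A, \Hl(A), B$ with $A \to \Hl(A)$ a $1$-hollow and $\Hl(A) \to B$ an $(i+1)$-nose---a case split on the three cyclic orderings $\Hl(A) \leq A \leq B$, $A \leq B \leq \Hl(A)$, and $B \leq \Hl(A) \leq A$ listed there yields $\Boolean{\Hl(A) \geq A} - \Boolean{\Hl(A) \geq B} = -\Boolean{A \geq B}$, which is exactly the statement that the two replacement edges contribute the same $\Ext$ as the original edge. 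Analogous checks for the auxiliary replacements of \cref{lem:attracts are implied,lem:repels are implied,lem:only 1 hollows} show that the whole chain of substitutions needed to rewrite an edge of $\Syn_*^k$ as a path of $\Syn^k$ preserves $\Ext$ at every step.

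Substituting each edge of $\mathcal{C}$ by an $\Ext$-preserving surrogate path in $\Syn^k$ then produces a closed walk $\W$ of $\Syn^k$ with $\Ext(\W) = \Ext(\mathcal{C}) = -1$. Peeling off subcycles at repeated vertices decomposes $\W$ into simple cycles whose $\Ext$-values sum to $-1$; at least one of them must satisfy $\Ext < 0$, and that cycle is the one required by the lemma. The principal obstacle is the $\Ext$-preservation verification: it is a routine but tedious case analysis that must be isolated from the $\Sep_{\Circ,\Len}$-preservation established in the cited lemmas by focusing solely on the coefficient of $\Circ$.
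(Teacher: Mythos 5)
Your plan diverges genuinely from the paper's, which argues by contradiction: via loop unrolling, \cref{cor:jmp and bal} and \cref{lem:acyclic}, every cycle of $\Syn^k$ with $\Ext \geq 0$ has $\Bal < 0$ and hence $\Sep_{0,3n} < 0$; so if no cycle had $\Ext < 0$, \cref{thm:no_positive_cycles} would produce a $k$-multiplicative model on a circle of circumference $0$, which is absurd. Your direct construction from the $0$-nose cycle $\mathcal{C}$ with $\Ext(\mathcal{C}) = -1$ is a clean alternative idea, and your core observation is sound: in each single substitution step appearing in the proofs of \cref{lem:attracts are implied,lem:repels are implied,lem:only 1 hollows,lem:only long noses}, the relevant indicator identity exactly preserves the coefficient of $\Circ$, i.e.\ the contribution to $\Ext$.

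The gap is in the step that promises ``a closed walk $\W$ of $\Syn^k$''. \Cref{lem:only long noses} asserts only that the two \emph{systems} are equivalent; it does not assert that every edge of $\Syn_*^k$ is implied by a walk lying entirely in $\Syn^k$. In its proof, the branch where $A \to \Hl(A)$ fails to be a hollow (i.e.\ $\Hl(A) \neq \Fl(A)$) or $\Hl(A) \to B$ fails to be a nose only concludes that $A \to B$ is \emph{strongly implied} by a walk of $\F^k$ --- not of $\Syn^k$ --- and may therefore be deleted when proving equivalence. To turn that into a $\Syn^k$-walk you must keep rewriting via \cref{lem:attracts are implied,lem:repels are implied}, and these rewrites spawn new $0$-repels $X \to \RR(X)$ that can themselves be short $0$-noses of $\Syn_*^k$; nothing you cite shows that this rewriting terminates with a walk all of whose edges lie in $\Syn^k$. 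Without that, there is no concrete closed walk $\W$ to decompose, so the final ``one subcycle has $\Ext < 0$'' step has nothing to act on. You would need a separate termination argument for the rewriting --- or, more simply, route through \cref{thm:no_positive_cycles} as the paper does.
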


\begin{proof}
 Suppose $\W$ is a circuit of $\Syn^k$ with $\Ext(\W) \geq 0$.  As stated above, if $\Copies > |\W|$, then $\W$ has an internal copy $\T$ in $\Copies\Unroll\Syn$ with $\Jmp(\T) \leq 0$.  By Corollary~\ref{cor:jmp and bal}, $\Bal(\T) = \Jmp(\T) - b$ where $b$ is the number of backward noses in $\T$.  Clearly, $\T$ is a circuit when $\Jmp(\T) = 0$ because $\T$ joins two copies of a same vertex of $\W$.  Then, $\Bal(\T) <0 $ follows by Lemma~\ref{lem:acyclic} because $b > 0$ if $\T$ is a circuit, while $\Jmp(\T) < 0$ otherwise.  By definition, $\W$ and $\T$ have the same number of noses and hollows; equivalently, $\Bal(\W) = \Bal(\T) < 0$.  Consequently, by~\eqref{eq:sep}, $\Sep_{0,3n}(\W) = 3n\Bal(\W) + 2\nose(\W) < 0$.  Then, by Theorem~\ref{thm:no_positive_cycles}, either $\Syn^k$ has a cycle with $\Ext < 0$ or $\M$ is equivalent to a $k$-multiplicative $(0,3\Len)$-CA model.  The latter is clearly impossible.   
\end{proof}

We are now ready to present the relation between cycle twisting bounds and the $\kMult$ problem. Essentially, the idea is that a model $\M$ is $k$-multiplicative if and only if it has a cycle twisting bound. We show first that if $\M$ does not have a cycle twisting bound, then it is possible to find a pair of cycles of $\Syn^k$ such that there are no values for $\Circ$ and $\Len$ whose $\Sep_{\Circ,\Len}$ inequalities can be satisfied toghether, thus making $\M$ not $k$-multiplicative. Then, we show that the existence of a cycle twisting bound implies that every pair of cycles of $\Syn^k$ with different sign for $\Ext$ must have a vertex in common. The idea for this is geometric: If there is a cycle twisting bound $\Spiral$, then we can see that there is a value $\Copies$ for which we can guarante that there are copies of these cycles in the graph $\Copies\Unroll\Syn^k$ whose drawings intersect each other (and thus the cycles have at least one vertex in common). Finally, we provide values of $\Len$ and $\Circ$ for which every cycle of $\Syn^k$ has $\Sep_{\Circ,\Len}\leq 0$, provided that every pair of cycles with different sign for $\Ext$ have a vertex in common. This shows that $\M$ is $k$-multiplicative.

\begin{theorem}\label{thm:equivalence}
 The following statements are equivalent for an SPCA model $\M$ and $k \in \ORange{\wrap}$.%
 \begin{enumerate}
  \item $\M$ is equivalent to a $k$-multiplicative model.\label{thm:multip}
  \item Some $\Spiral \in \mathbb{N}$ is a cycle twisting bound 
  of\/ $\Syn^k$.\label{thm:twister}
  \item Any two cycles\/ $\W$ and\/ $\W'$ of\/ $\Syn^{k}$ with\/ $\Ext(\W)\Ext(\W') < 0$ have a vertex in common.\label{thm:crossing}
 \end{enumerate}
\end{theorem}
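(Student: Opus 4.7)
The plan is to prove Theorem \ref{thm:equivalence} by establishing the three implications $(1) \Rightarrow (3)$, $(3) \Rightarrow (2)$, and $(2) \Rightarrow (1)$, using throughout the Mitas drawing of a loop-unrolled model $\Copies \Unroll \M$ for $\Copies$ sufficiently large. The recurring tool is that, by Corollary \ref{cor:walk drawing}, the drawing of every internal walk of $\Copies \Unroll \Syn^k$ is the graph of a continuous function, and by Theorem \ref{thm:planarity} two such drawings meet at a point exactly when the underlying walks share a vertex. Every cycle $\W$ of $\Syn^k$ with $\Ext(\W) \neq 0$ lifts to internal walks of $\Copies \Unroll \Syn^k$, namely the internal copies of $\Spiral \Unroll \W$ that exist once $\Copies > \Spiral |\Ext(\W)|$; this is the bridge from cycle combinatorics in $\Syn^k$ to plane geometry in $\Copies \Unroll \Syn^k$.

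For $(1) \Rightarrow (3)$, assume $\M$ is equivalent to a $k$-multiplicative $(\Circ, \Len+1)$-CA model $\Unit$. Since $\Copies \Unroll \Unit$ is a $k$-multiplicative $(\Copies \Circ, \Len+1)$-CA model equivalent to $\Copies \Unroll \M$, Theorem \ref{thm:no_positive_cycles} yields $\Sep_{\Circ, \Len}(\W) \leq 0$ for every cycle $\W$ of $\Syn^k$, which by \eqref{eq:sep} forces $\Bal(\W) < 0$ when $\Ext(\W) > 0$ and $\Bal(\W') > 0$ when $\Ext(\W') < 0$. Suppose toward contradiction that such $\W, \W'$ are vertex-disjoint, and take $\Copies$ much larger than $|\W| + |\W'| + |\Ext(\W)| + |\Ext(\W')|$. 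By Corollary \ref{cor:jmp and bal} the drawings of the many internal copies of $\W$ descend through $|\Spiral \Bal(\W)|$ rows while those of $\W'$ ascend, but the copies remain pairwise vertex-disjoint by hypothesis, so some ascending curve must cross a descending one in the drawing. Theorem \ref{thm:planarity} turns this crossing into a shared vertex, a contradiction.

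For $(3) \Rightarrow (2)$, I argue by contrapositive. If no $\Spiral$ is a cycle twister, then for arbitrarily large $\Spiral$ some cycle $\W_\Spiral$ of $\Syn^k$ with $\Ext(\W_\Spiral) \neq 0$ has a forward internal copy of $\Spiral \Unroll \W_\Spiral$ in some $\Copies_\Spiral \Unroll \Syn^k$ with $\Copies_\Spiral \geq \Spiral$. Since $\Syn^k$ has only finitely many cycles, pigeon-holing fixes $\W_\Spiral = \W$ with, say, $\Ext(\W) > 0$. By Corollary \ref{cor:jmp and bal} these forward copies have $\Jmp = \Spiral \Bal(\W)$ and hence descend through arbitrarily many rows, while Lemma \ref{lem:repetitive} locates a large periodic region of $\Copies_\Spiral \Unroll \M$ through which the curves pass. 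Lemma \ref{lem:cycle ext < 0} together with its symmetric counterpart Corollary \ref{cor:cycle ext > 0} provides cycles of both $\Ext$ signs; positioning internal copies of a counter-cycle of negative $\Ext$ inside the periodic region and using Theorem \ref{thm:planarity} to maintain disjointness from the $\W$-copies produces a cycle $\W'$ with $\Ext(\W') < 0$ vertex-disjoint from $\W$, violating (3).

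For $(2) \Rightarrow (1)$, assume $\Spiral$ is a cycle twister and set $\Len = 2n$. Every internal cycle $\W$ of $\Syn^k$ has $\Jmp(\W) = 0$ and at least one backward edge by Lemma \ref{lem:acyclic}, so Corollary \ref{cor:jmp and bal} gives $\Bal(\W) \leq -1$ and \eqref{eq:sep} yields $\Sep_{\Circ, \Len}(\W) \leq -\Len + 2n \leq 0$ independently of $\Circ$. For a cycle $\W$ with $\Ext(\W) \neq 0$, the twister (combined with its symmetric counterpart for the opposite $\Ext$ sign, derivable by reversing edges of $\Syn^k$) forces every internal copy of $\Spiral \Unroll \W$ in $\Copies \Unroll \Syn^k$, $\Copies \geq \Spiral$, to contain at least one backward nose, and Corollary \ref{cor:jmp and bal} combined with the row bound $|\Jmp| \leq \Rows(\Copies \Unroll \M) \leq \Copies \, n$ translates this into a uniform linear bound relating $\Bal(\W)$, $\nose(\W)$, and $|\Ext(\W)|$. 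Choosing $\Circ$ in the range dictated by these bounds then makes $\Sep_{\Circ, \Len}(\W) \leq 0$ for every cycle of $\Syn^k$, so Theorem \ref{thm:no_positive_cycles} delivers a $k$-multiplicative $(\Circ, \Len+1)$-CA model equivalent to $\M$. The main obstacle, I expect, is $(3) \Rightarrow (2)$: converting the merely geometric failure of the twister condition into a concrete vertex-disjoint pair of cycles with opposite $\Ext$ signs will require a careful combination of pigeon-holing over the finitely many cycles, the periodicity supplied by Lemma \ref{lem:repetitive}, and a topological application of Theorem \ref{thm:planarity} inside the periodic region.
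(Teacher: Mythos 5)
Your proposal reorders the cyclic chain of implications: the paper proves $\ref{thm:multip} \Rightarrow \ref{thm:twister}$, $\ref{thm:twister} \Rightarrow \ref{thm:crossing}$, $\ref{thm:crossing} \Rightarrow \ref{thm:multip}$, whereas you propose $\ref{thm:multip} \Rightarrow \ref{thm:crossing}$, $\ref{thm:crossing} \Rightarrow \ref{thm:twister}$, $\ref{thm:twister} \Rightarrow \ref{thm:multip}$. This is not a cosmetic difference: the two ``shortcuts'' you need, the direct $\ref{thm:crossing} \Rightarrow \ref{thm:twister}$ and the direct $\ref{thm:twister} \Rightarrow \ref{thm:multip}$, are precisely the ones the paper deliberately avoids, and both have gaps as sketched.

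The $\ref{thm:crossing} \Rightarrow \ref{thm:twister}$ step contains a logical error. You want to produce, from the failure of the twister condition, a pair of vertex-disjoint cycles of opposite $\Ext$, and you propose to do so by ``positioning internal copies of a counter-cycle \dots and using Theorem~\ref{thm:planarity} to maintain disjointness.'' But Theorem~\ref{thm:planarity} is an equivalence quantified over \emph{all} choices of $p$ and $q$: two walks share a vertex of $\Copies\Unroll\Syn^k$ if and only if $\Gr_p$ and $\Gr_q$ share a point for \emph{some} $p,q \geq 0$. Choosing particular copies whose curves happen to miss each other tells you nothing about whether the underlying cycles of $\Syn^k$ are disjoint; if $\W$ and the counter-cycle $\W'$ share a vertex $A$ in $\Syn^k$, then for suitable $p,q$ their curves will pass through the same $\Pos_i(A)$, and no amount of repositioning removes that. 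Disjointness is a property of $\W$ and $\W'$ in $\Syn^k$, not of a pair of chosen lifts, so you cannot ``produce'' a disjoint pair this way.

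The $\ref{thm:twister} \Rightarrow \ref{thm:multip}$ step is also incomplete in a way that is not merely a matter of detail. Your treatment of cycles with $\Ext = 0$ is fine (it is the paper's own observation at the start of Section~\ref{sec:recognition}), but for cycles with $\Ext \neq 0$ you would need to pin down an actual value of $\Circ$ making all the inequalities $\Sep_{\Circ,\Len}(\W) \leq 0$ simultaneously satisfiable, i.e.\ essentially that $\Ratio^k < \RATIO^k$. The twister hypothesis alone does not obviously give this. In the paper's $\ref{thm:crossing} \Rightarrow \ref{thm:multip}$ argument, the crucial move is to take a tight cycle $\W_N$ with $\Ext(\W_N) < 0 = \Sep_{\Circ,\Len}(\W_N)$, use condition~\ref{thm:crossing} to find a \emph{common vertex} of $\W_N$ and an arbitrary cycle $\W$ with $\Ext(\W) > 0$, and splice them at that vertex into a circuit $\W_0$ with $\Ext(\W_0) = 0$; the inequality for $\W$ then follows from $\Sep_{\Circ,\Len}(\W_0) \leq 0$. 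Your route discards the intersection hypothesis entirely and replaces it with a ``uniform linear bound relating $\Bal(\W)$, $\nose(\W)$, and $|\Ext(\W)|$'' obtained from the twister; the row bound $|\Jmp| \leq \Copies n$ you invoke grows with $\Copies$ and therefore yields no such uniform bound, and I do not see how to extract $\Ratio^k < \RATIO^k$ without some form of the splicing argument. (A related soft spot in your $\ref{thm:multip} \Rightarrow \ref{thm:crossing}$ sketch: the assertion that ``some ascending curve must cross a descending one'' is exactly what needs proof, and in the paper's $\ref{thm:twister} \Rightarrow \ref{thm:crossing}$ it is the twister hypothesis --- guaranteeing many backward edges and hence genuine wrapping around the cylinder --- that makes the intermediate-value argument go through; without a twister in hand the two families of curves could a priori spiral past each other.) In short, the reordered decomposition would need substantially more machinery than the proposal supplies.
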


\begin{proof}
 $\ref{thm:multip} \Rightarrow \ref{thm:twister}$. 
 Let $h = \max\{10, n(k+1)\}$ and suppose $\Spiral = h^4$ is not a nose cycle twisting bound 
 of $\Syn^k$.   Then, there exists a cycle $\W_N$ of $\Syn^k$ with $\Ext(\W_N) < 0$ and a value $\Copies \geq \Spiral$ such that $\Copies\Unroll\Syn^k$ has a forward copy $\T$ of $\Spiral\Unroll\W_N$.  For convenience, we first locate a forward copy of a portion of $\T$ in a controlled location of $\Copies\Unroll\Syn^k$ (\cref{fig:non crossing cycles}(a)).  For this purpose, let $A_0$ be the first vertex of $\T$ and $A_1, \ldots, A_\Spiral$ be the other copies of $A_0$ in the order they are traversed by $\T$.  Note that $\Row(A_{i+1}) > \Row(A_i)$ for $i \in \Range{\Spiral}$ because $\Ext(\W_N) < 0$ and, hence, $\Row(A_{4h}) \geq 4h$ (\cref{fig:non crossing cycles}(a)).  Then, by Lemma~\ref{lem:repetitive}, $A_{4h}$ has a copy $X_0$ such that $\Row(X_0) \in [3h, 4h)$ and the row containing $X_0$ is a copy of the row containing $A_{4h}$ (\cref{fig:non crossing cycles}(a)).  For $i \in \Range{h^3}$, let $\T_i$ be the copy of $i\Unroll\W_N$ that begins at $X_0$ and $X_1, \ldots, X_i$ be the other copies of $X_0$ in the order thy are traversed by $\T_i$.  By Corollary~\ref{cor:jmp and bal}, every internal edge of $\Copies\Unroll\Syn^k$ has $\Jmp \geq -1$.  Thus, if $\Row(X_{i-1}) \geq 3h$, then all the edges traversed by $\T_i$ between $X_{i-1}$ and $X_{i}$ are internal, thus all the vertices of $\T_i$ have $\Row \geq 2h$ because $|\W_N| \leq h$.  Hence, $\Row(X_i) > \Row(X_{i-1})$ because $\Ext(\W_N) < 0$.  By induction, this means that $\T_i$ is internal and traverses vertices with $\Row \geq 2h$.  Altogether, Lemma~\ref{lem:repetitive} implies that the row containing the $j$-th vertex traversed by $\T_i$, $j \in \Range{|\T_i|}$, is a copy of the row containing the $j$-th vertex visited by $\T$ after $A_{4h}$.  Then, every edge of $\T_i$ is a copy of an edge of $\T$ and, therefore, $\T_i$ is forward as well (\cref{fig:non crossing cycles}(a)).
 
 \begin{figure}
 \centering 
  \begin{tabular}{c@{\hspace{3mm}}c}
    \includegraphics{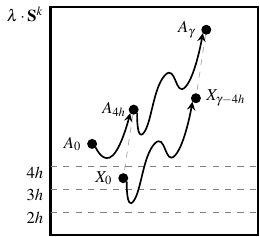} & \includegraphics{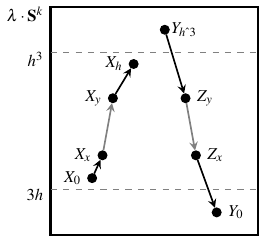} \\
	(a) & (b)
  \end{tabular}

	\caption[]{Schemes for Theorem~\ref{thm:equivalence} ($\ref{thm:multip} \Rightarrow \ref{thm:twister}$). Paths are forward, and (b) depicts $\Q_H$ and $\Q_N$ in gray.}\label{fig:non crossing cycles}
 \end{figure}

 Summing up, the previous paragraph proves that if $\Spiral$ is not a cycle twisting bound 
 of $\Syn^k$, then $\Copies\Unroll\Syn^k$ has a forward copy $\T_N = \T_{h}$ of $h\Unroll\W_N$ whose first vertex $X_0$ has $\Row(X_0) \in [3h, 4h)$.  Similar arguments to those above imply that $\Syn^k$ has a cycle $\W_H$ with $\Ext(\W_H) > 0$ such that $\Copies\Unroll\Syn^k$ has a forward copy $\T_H$ of $(h^3)\Unroll\W_H$ whose last vertex $Y_0$ has $\Row(Y_0) \in [2h, 3h)$.  Let $Y_0, \ldots, Y_{h\hat{\ }3}$ be the copies of $Y_0$ traversed by $\T_H$ in the reverse order.  (For the sake of notation, we write $h\hat{\ }i$ as a replacement of $h^i$ to avoid double subscripting.)  Note that $\Row(Y_{i+1}) > \Row(Y_i)$ for $i \in \Range{h^3}$ because $\Ext(\W_H) > 0$.
 
 By Lemma~\ref{lem:jmp of noses and hollows}, every edge of $\T_H$ has $\Jmp \in [-1, h]$.  Thus, $\T_H$ has an vertex $Z_i$ at $\Row(X_i)$ for every $i \in \Range{h}$ because $\Row(Y_0) < 3h \leq \Row(X_0)$ and $\Row(X_h) \leq \Row(X_0) + h|\W_N| < h^3 \leq \Row(Y_{h\hat{\ }3})$.  Then, as $\Syn^k$ has $n \leq h$ vertices, it follows that $Z_x$ and $Z_y$ are copies of the same vertex of $\W_H$ for some $x,y \in \Range{h}$, $x < y$.  Summing up, $\T_N$ has a forward subpath $\Q_N$ from $X_x$ to $X_y$, whereas $\T_H$ has a forward subpath $\Q_H$ from $Z_y$ to $Z_x$ (\cref{fig:non crossing cycles}(b)).  By definition, $\Row(X_x)=\Row(Z_x)$ and $\Row(X_{y}) = \Row(Z_y)$.  Then, by Corollary~\ref{cor:jmp and bal},
 \begin{displaymath}
  \Bal(\Q_N) = \Jmp(\Q_N) = \Row(X_y) - \Row(X_x) = -\Jmp(\Q_H) = -\Bal(\Q_H). \tag{*}
 \end{displaymath}
 By construction, $\Q_N$ is a copy of $\W_N' = (y-x)\Unroll\W_N$.  Thus, $\Q_N$ and $\W_N'$ have the same number of noses and hollows and, therefore, $\Bal(\Q_N) = \Bal(\W_N')$.  Similarly, $\Q_H$ is a copy of $\W_H' = z\Unroll\W_H$ for some $z \geq 0$, thus $\Bal(\Q_H) = \Bal(\W_H')$.  Moreover, $\Ext(\W_N')$ equals one plus the number of copies of $X_0$ between $X_x$ and $X_y$ that, in turn, equals one plus the number $-\Ext(\W_H')$ of copies of $Z_x$ between $Z_x$ and $Z_y$.  Then, by~\eqref{eq:sep}, (*), and the fact that $\W_N'$ has at least one nose (because $y > x$), we obtain that:
 \begin{align*}
  \Sep_{\Circ,\Len}(\W_N') &= \Len\Bal(\W_N') + \Circ\Ext(\W_N') + 2\nose(\W_N') > \Len\Bal(\Q_N) + \Circ\Ext(\W_N') \text{ and}\\
  \Sep_{\Circ,\Len}(\W_H') &= \Len\Bal(\W_H') + \Circ\Ext(\W_H') + 2\nose(\W_H') \geq -\Len\Bal(\Q_N) - \Circ\Ext(\W_N'),
 \end{align*}
 for every $\Circ, \Len > 0$.  Then, $\Sep_{\Circ,\Len}(\W_N')+\Sep_{\Circ,\Len}(\W_H')>0$ and, by Theorem~\ref{thm:no_positive_cycles}, $\M$ is equivalent to no $k$-multiplicative $(\Circ,\Len+1)$-CA model regardless of the values of $\Circ$ and $\Len$.
  
 $\ref{thm:twister} \Rightarrow \ref{thm:crossing}$.  Suppose  $\Spiral \in \mathbb{N}$ is a nose cycle twisting bound 
 of $\Syn^k$ and let $\W_N$ be a cycle of $\Syn^k$ with $\Ext(\W_N) < 0$.  If $\Copies > h^6$ for $h = \max\{10+\Spiral, n(k+1)\}$, then $h^3 \Unroll \W_N$ has a copy $\T$ whose first vertex has $\Row \in (2h^3, 2h^3+h]$.  By Lemma~\ref{lem:jmp of noses and hollows}, every edge of $\Copies\Unroll\Syn^k$ has $\Jmp \in [-1, h]$, thus $\T$ is internal and all its vertices have $\Row \in (h^3, h^4)$.  Moreover, as $\Spiral$ is a nose cycle twisting bound 
 and $h \geq \Spiral$, it follows that $\T$ has at least $h^2$ backward edges, thus $\T$ traverses $h$ leftmost copies $A_0, \ldots, A_{h}$ of some vertex of $\W_N$ (\cref{fig:crossing cycles}(a)).  If $A = A_0$ and $B \to A_h$ is the edge of $\T$ to $A_h$, then $\Row(B) > \Row(A)$ because $\Row(A_{h}) \geq \Row(A) + h$ as $\Ext(\W_N) < 0$.  Hence, the walk $\T_N$ from the leftmost vertex $A$ to the rightmost vertex $B$ traverses vertices with $\Row \in (h^3, h^4)$, starting from $\Row(A)$ and ending at $\Row(B) > \Row(A)$ (\cref{fig:crossing cycles}(a)).

 \begin{figure}
  \centering 
  \begin{tabular}{c@{\hspace{3mm}}c}
    \includegraphics{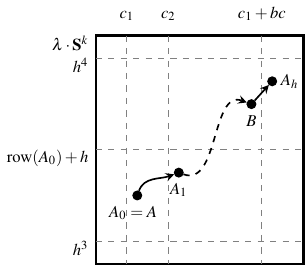} & \includegraphics{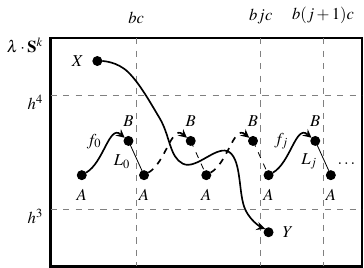} \\
	(a) & (b)
  \end{tabular}

	\caption[]{Schemes for Theorem~\ref{thm:equivalence} ($\ref{thm:twister} \Rightarrow \ref{thm:crossing}$), where $c = \Cols(\M)$.  In (a) $c_1$ and $c_2$ are multiples of $c$.}\label{fig:crossing cycles}
 \end{figure}

 Let $b-1$ be the number of backward edges in $\T_N$ and $j \geq 0$.  By Corollary~\ref{cor:walk drawing}, the curve $\Gr_{bj}(\T_N)$ depicts a  continuous function $f_j$ with domain $[\Col_{bj}(A), \Col_{b(j+1)-1}(B)]$ that is bounded below and above by the constant functions $x \mapsto h^3$ and $x \mapsto h^4$, respectively (\cref{fig:crossing cycles}(b)).  Moreover, if the curve of $f_j$ is extended with the line $L_j$ from $\Pos_{b(j+1)-1}(B)$ to $\Pos_{b(j+1)}(A)$, then the curve of a continuous function $g_j$ with domain $[\Col_{bj}(A),\Col_{b(j+1)}(A)]$ is obtained (\cref{fig:crossing cycles}(b)).  Therefore, $g = \bigcup_{j\geq 0} g_j$ is a continuous function with domain $[\Col_0(A), \infty)$ that is bounded below and above by $x \mapsto h^3$ and $x \mapsto h^4$ (\cref{fig:crossing cycles}(b)).
 
 Let $\W_H$ be any cycle with $\Ext(\W_H) > 0$.  For $i \geq 0$, let $\T_i$ be the copy of $i \Unroll \W_H$ that starts at a vertex $X$ with $\Row(X) \in [h^4, h^4+h)$.  Similarly as above, Lemma~\ref{lem:jmp of noses and hollows} implies that $\T_i$ is internal when its last vertex has $\Row > h^2$.  Then, there exists $i$ such that $\T_H = \T_i$ is internal and ends at a vertex $Y$ with $\Row(Y) < h^2$.  By Corollary~\ref{cor:walk drawing}, $\Gr_1(\T_H)$ is the curve of a continuous function that starts at $\Pos_1(X)$ and ends at a point with ordinate $\Row(Y)$.  Since $\Row(X) \geq h^4$ and $\Row(Y) < h^2$, it follows that $\Gr_1(\T_H)$ crosses the graph of $g$ at some point (\cref{fig:crossing cycles}(b)).  Among such crossing points, let $p = (x,y)$ be the one minimizing $x$.  We claim that $p$ belongs to the graph of $f_j$ for some $j \geq 0$.  Otherwise, $p$ would be a point of $L_j$ for some $j \geq 0$.  As $L_j$ starts at $\Pos_{b(j+1)-1}(B)$ and ends at $\Pos_{b(j+1)}(A)$ and $\Row(B) > \Row(A)$, Lemma~\ref{lem:jmp of noses and hollows} implies that the edge $X' \to Y'$ of $\T_H$ whose arrow contains $p$ satisfies $\Row(B) < \Row(X')$ and $\Row(Y') > \Row(A)$.  As this is impossible, because the first cross between $g$ and $\Gr_1(\T_H)$ happens at a point in which the slope of $\Gr_1(\T_H)$ is smaller (\cref{fig:crossing cycles}(b)), it follows that $p \in \Gr_j(\T_N) \cap \Gr_1(\T_H)$.  Then, $\T_N$ and $\T_H$ have a vertex in common by Theorem~\ref{thm:planarity}, and so do $\W_N$ and $\W_H$.
 
 The proof for the case in which $\Spiral$ is a hollow cycle twisting bound 
 is similar, thus we omit it for the sake of succinctness. We remark, however, that every cycle twisting bound 
 of $\Syn^k$ is a nose cycle twisting bound 
 (this is proven later in Theorem~\ref{thm:fast equivalence}).
 
 $\ref{thm:crossing} \Rightarrow \ref{thm:multip}$. By Lemma~\ref{lem:cycle ext < 0}, $\Syn^k$ has a cycle $\W$ with $\Ext(\W) < 0$.  Let $\Len=4n^2$ and $\Circ$ be the minimum such that $\Sep_{\Circ,\Len}(\W) \leq 0$ for every cycle $\W$ of $\Syn^k$ with $\Ext(\W) < 0$.  Note that $\Circ$ exists because $\Circ$ can be as large as to bound all the other terms of~\eqref{eq:sep}. Moreover, as $\Circ$ is minimum, there exists a cycle $\W_N$ with $\Ext(\W_N) < 0 = \Sep_{\Circ,\Len}(\W_N)$.  We prove that $\Sep_{\Circ,\Len}(\W) \leq 0$ for every cycle $\W$ of $\Syn^k$ and, therefore, $\M$ is equivalent to a $k$-multiplicative $(\Circ,\Len+1)$-CA model by Theorem~\ref{thm:no_positive_cycles}. Consider the following possibilities for $\Ext(\W)$.	
 \begin{discription}[label={\textbf{Case~\arabic*:}},ref={Case~\arabic*}]
  \item $\Ext(\W)<0$, thus $\Sep_{\Circ,\Len}(\W) \leq 0$ by the definition of $\Circ$.
  \item \label{thm:equivalence:ext0} $\Ext(\W)=0$.  If $\Copies > |\W|$, then $\W$ has an internal copy $\T$ in $\Copies \Unroll\Syn^k$.  Clearly, $\Jmp(\T) = 0$ because $\Ext(\W) = 0$.  Hence, $\T$ is a circuit that has $b > 0$ backward edges by Lemma~\ref{lem:acyclic}.  Moreover, $\Bal(\T) = \Bal(\W)$ because they traverse the same number of noses and hollows.  Then, by~\eqref{eq:sep} and Corollary~\ref{cor:jmp and bal},
  \begin{align*}
    \Sep_{\Circ,\Len}(\W) &= \Len\Bal(\W)+\Circ\Ext(\W)+2\nose(\W) \\ &= \Len\Bal(\T) + 2\nose(\W) = -\Len b + 2\nose(\W)  \leq 0.
  \end{align*}

  \item \label{thm:equivalence:repeated}$\Ext(\W)>0$.  By hypothesis, $\W$ and $\W_N$ have a vertex $A$ in common.  Starting at $A$, let $\W_0 = \Ext(\W) \Unroll \W_N + \lvert\Ext(\W_N)\rvert \Unroll \W$, i.e., $\W_0$ is the circuit of $\Syn^k$ that begins at $A$, repeatedly traverses $\Ext(\W)$ times $\W_N$ and then it repeatedly traverses $\lvert\Ext(\W_N)\rvert$ times $\W$.  Observe that $\Ext(\W_0) = 0$ by construction, whereas $\Ext(\W) \leq |\W| \leq n$ and $\lvert\Ext(\W_N)\rvert \leq |\W_N| \leq n$ by definition.  Thus, $2|\W_0| \leq 4n^2 = \Len$.  Then, if $\Copies > \Len^2$, we obtain that $\W_0$ has an internal copy $\T_0$ in $\Copies\Unroll\Syn^k$.  Note that $\T_0$ is a circuit because $\W_0$ is a circuit with $\Ext(\W_0) = 0$ and, thus, $\Jmp(\T_0) = 0$.  Hence, $\T_0$ has $b > 0$ backward noses by Lemma~\ref{lem:acyclic} and, consequently, $\Bal(\T_0) = -b < 0$ by Corollary~\ref{cor:jmp and bal}.  Then, by~\eqref{eq:sep} and the fact that $\Bal(\W_0) = \Bal(\T_0)$ because $\T_0$ is a copy of $\W_0$, we obtain that:
    \begin{align*}
      \lvert\Ext(\W_N)\rvert\Sep_{\Circ, \Len}(\W) &= \lvert\Ext(\W_N)\rvert\Sep_{\Circ, \Len}(\W) + \Ext(\W)\Sep_{\Circ, \Len}(\W_N) =\\
                          &= \Sep_{\Circ, \Len}(\W_0) = \Len\Bal(\W_0) + 2\nose(\W_0) \leq -\Len + 2|\W_0| \leq 0.\tag*{}
		\end{align*}
	\end{discription}
 
\end{proof}

\begin{corollary}\label{cor:cycle ext > 0}
 If\/ $\M$ is an SPCA model and $k \in \ORange{\wrap}$, then $\Syn^k$ has a cycle with\/ $\Ext > 0$.  
\end{corollary}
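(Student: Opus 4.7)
The plan is to proceed by contradiction in the style of Lemma~\ref{lem:cycle ext < 0}: assuming $\Syn^k$ has no cycle with $\Ext > 0$, I would choose parameters $(\Circ,\Len)$ so that every cycle $\W$ of $\Syn^k$ satisfies $\Sep_{\Circ,\Len}(\W) \leq 0$, invoke Theorem~\ref{thm:no_positive_cycles} to obtain a $k$-multiplicative $(\Circ,\Len+1)$-CA model $\Unit$ equivalent to $\M$, and derive a contradiction from the fact that such a $\Unit$ cannot fit on a circle whose length is larger than the total length of its $n$ arcs.

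Concretely I would take $\Len = 2n$ and $\Circ = 2n^2 k + 2n + 1$. Cycles $\W$ with $\Ext(\W) = 0$ are handled by the loop-unrolling argument from Case~\ref{thm:equivalence:ext0} of Theorem~\ref{thm:equivalence}'s proof: fixing $\Copies > n$, any internal copy $\T$ of $\W$ in $\Copies \Unroll \Syn^k$ has $\Jmp(\T) = 0$, hence is a circuit, and Lemma~\ref{lem:acyclic} forces $\T$ to contain a backward nose, so Corollary~\ref{cor:jmp and bal} gives $\Bal(\W) = \Bal(\T) \leq -1$; combined with $\nose(\W) \leq |\W| \leq n$ and \eqref{eq:sep}, this yields $\Sep_{\Circ,\Len}(\W) \leq -2n + 2n = 0$. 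Cycles with $\Ext(\W) < 0$ are handled by the crude bounds $\Bal(\W) \leq \sum_{i=0}^{k} i\,\nose(i,\W) \leq k\,\nose(\W) \leq kn$ and $\nose(\W) \leq n$, which together with \eqref{eq:sep} give $\Sep_{\Circ,\Len}(\W) \leq 2n\cdot kn - \Circ + 2n = -1 < 0$. Cycles with $\Ext(\W) > 0$ do not exist by the contradiction hypothesis.

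Theorem~\ref{thm:no_positive_cycles} now produces a $k$-multiplicative $(\Circ, 2n+1)$-CA model $\Unit$ equivalent to $\M$. Since equivalence preserves which arcs are external, $\Unit$ inherits SPCA-ness from $\M$; the key step is then to argue that the arcs of an SPCA model cover the entire circle, so that $\Circ \leq \sum_{U \in \A(\Unit)} |U| = n(2n+1) = 2n^2 + n$. For $k \geq 1$ this contradicts $\Circ = 2n^2 k + 2n + 1 \geq 2n^2 + 2n + 1 > 2n^2 + n$, completing the proof.

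The main obstacle is the coverage claim. The justification I have in mind is that if some point $p$ of $C(\Unit)$ belonged to no arc, then translating the reference point $0$ into the uncovered region would give an equivalent representation with no external arc, i.e., a PIG representation, contradicting the fact that the equivalence class of $\Unit$ (inherited from the SPCA model $\M$) admits no PIG representative. Everything else in the argument is routine and parallels the $\Ext \leq 0$ subcases already worked out in the proof of Theorem~\ref{thm:equivalence}; the asymmetry with Lemma~\ref{lem:cycle ext < 0} (which only needed \emph{connected} PCA) is precisely that the coverage step is trivial when $\Circ=0$ is the offending parameter but requires SPCA-ness when it is $\Circ$ that is too large.
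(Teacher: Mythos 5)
Your overall strategy matches the paper's: the paper's proof of this corollary also proceeds by assuming every cycle has $\Ext \leq 0$, choosing a circumference far larger than $n$ times the arc length (the paper uses $\Len = 4n^2$ and $d = \Circ\Len n^2$ where $\Circ$ is the smallest value bounding the $\Ext < 0$ cycles; you use $\Len = 2n$ and $\Circ = 2n^2k + 2n + 1$), verifying $\Sep \leq 0$ for the $\Ext = 0$ and $\Ext < 0$ cycles exactly as in the two relevant subcases of Theorem~\ref{thm:equivalence}($\ref{thm:crossing} \Rightarrow \ref{thm:multip}$), invoking Theorem~\ref{thm:no_positive_cycles}, and then arguing that the resulting model cannot exist because its arcs cannot cover the circle. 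Your arithmetic on $\Bal$, $\nose$, and the $\Ext$ cases is correct.

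The gap is in the coverage step, and it is genuine. The blanket claim ``the arcs of an SPCA model cover the entire circle'' is false: on a circle of circumference $10$, the three arcs $(1,3)$, $(2,4)$, $(9,2)$ form a connected SPCA model (the last arc is external) whose union misses $(4,9)$. Your proposed justification is also flawed at a more basic level: the paper's notion of equivalence is \emph{not} rotation-invariant. Equivalence requires a bijection that preserves, for all $e,e'\in\{s,t\}$, the order of $e(A)$ and $e'(B)$ with respect to $0$; in particular, taking $A=B$ shows that $A$ is external iff $f(A)$ is, so a PIG model and an SPCA model are never equivalent. Relocating $0$ into an uncovered region produces a model with a \emph{different} linear order of extremes — it represents the same digraph but is not equivalent to $\U$, so it does not belong to the equivalence class of $\U$ and produces no contradiction. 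What this reasoning does correctly establish is that $\U$ inherits an external arc from $\M$, so $0$ \emph{is} covered in $\U$, which means any gap lies away from $0$; this is the opposite of what you need. The paper's own final sentence (``it implies that $\U$ is a UIG model'') is equally terse and does not address this point explicitly either, but your proposal introduces an explicit false step (that the rotated model is equivalent) rather than merely leaving it implicit. To repair the argument one must use a property specific to the canonical model produced by the Bellman–Ford/longest-path solution of Theorem~\ref{thm:no_positive_cycles} (that the arc placement is as tight as the constraints allow, so a gap cannot coexist with the wrap-around constraints coming from external edges of $\Syn^k$), not a general statement about SPCA models or about equivalence under rotation of $0$.
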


\begin{proof}
 We refer to the proof of Theorem~\ref{thm:equivalence} ($\ref{thm:crossing} \Rightarrow \ref{thm:multip}$), where $\Len = 4n^2$ and $\Circ$ is defined as the minimum such that $\Sep_{\Circ,\Len}(\W) \leq 0$ for every cycle $\W$ of $\Syn^k$ with $\Ext(\W) < 0$.  Suppose that every cycle of $\Syn^k$ has $\Ext \leq 0$ and let $\W$ be any cycle of $\Syn^k$ and $d = \Circ\Len n^2$.  If $\Ext(\W) < 0$, then $\Sep_{d,\Len}(\W) \leq 0$ because $d \geq \Circ$, while if $\Ext(\W) = 0$, then $\Sep_{d,\Len}(\W) \leq 0$ as in \ref{thm:equivalence:ext0} of Theorem~\ref{thm:equivalence} ($\ref{thm:crossing} \Rightarrow \ref{thm:multip}$).  Then, by Theorem~\ref{thm:no_positive_cycles}, $\M$ is equivalent to a $(d,\Len)$-CA model $\Unit$.  Clearly, some point of $C(\Unit)$ is crossed by no arc of $\Unit$ because $d > n\Len$.  But this is impossible, as it implies that $\Unit$ is a UIG model and so is $\M$ because it is equivalent to $\Unit$.
\end{proof}

By Theorem~\ref{thm:equivalence}, \kMult can be solved by checking that every pair of cycles $\W_N$ and $\W_H$ of $\Syn^k$ with $\Ext(\W_N) < 0 < \Ext(\W_H)$ have a vertex in common.  Theorem~\ref{thm:fast equivalence} yields an efficient method in which only two cycles are traversed (Corollary~\ref{cor:kRep}).  Moreover, Theorem~\ref{thm:fast equivalence} generalizes Theorem~\ref{thm:equivalence} by replacing the restricted notion of cycle twisting bounds 
with the general notion of twisting bounds.  
Greedy walks play a central role in this theorem, as they do in the characterization by Tucker~\cite{TuckerDM1974} 
(see~\cite{SoulignacJGAA2017}).  A walk $\W = B_0, \ldots, B_j$ of $\Syn^k$ is \emph{greedy hollow} (resp.\ \emph{greedy nose}) when $\Syn^k$ has no hollows (resp.\ noses) from $B_i$ when $B_i \to B_{i+1}$ is a nose (resp.\ hollow), for $i \in \Range{j}$.  By Theorem~\ref{thm:Syn^k to Syn^k+1 remove}, at most two edges of $\Syn^k$ begin at $B_i$, one nose and one hollow.  Thus, in other words, $\W$ is greedy hollow (resp.\ nose) in $\Syn^k$ if hollows (resp.\ noses) are preferred over noses (resp.\ hollows) when choices are possible.  As hollows (resp.\ noses) have $\Jmp < 0$ (resp.\ $\Jmp \geq 0$), greedy hollow (resp.\ nose) cycles usually have $\Ext > 0$ (resp.\ $\Ext \leq 0$).  This could be false, as a greedy hollow cycle is also greedy nose when no choices are possible (e.g.~\cref{fig:example-non-mult-power}).  Lemma~\ref{lem:greedy paths} proves that at least one greedy hollow cycle with $\Ext > 0$ and one greedy nose cycle with $\Ext < 0$ exist.

\begin{lemma}\label{lem:greedy paths}
 If\/ $\M$ is a connected PCA model and $k \in \ORange{\wrap}$, then\/ $\Syn^k$ has a greedy nose cycle with\/ $\Ext < 0$.  Furthermore, if\/ $\M$ is SPCA, then\/ $\Syn^k$ has a greedy hollow cycle with\/ $\Ext > 0$.
\end{lemma}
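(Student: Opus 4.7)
The plan is to realize each greedy cycle as the periodic part of the orbit of a greedy selection function on $V(\Syn^k)$, and to control the sign of $\Ext$ by choosing the starting vertex carefully.  Define $\sigma\colon V(\Syn^k) \to V(\Syn^k)$ by $\sigma(A) = B$ whenever $A \to B$ is the (necessarily unique) nose out-edge of $A$ in $\Syn^k$, and $\sigma(A) = \Fl(A)$ otherwise; define $\tau$ symmetrically with hollows preferred.  First I would verify that $\sigma$ is total.  Suppose some $A$ had neither a nose nor a hollow out-edge in $\Syn^k$.  The absence of a nose forces the unique nose candidate from $A$ in $\Syn^k_*$ to be short, so $\Hl(A) \neq \bot$; letting $Y = \Hl(A)$, inclusion-freeness together with $\Fr(Y) = A$ implies $Y \in \LN[A]$, hence $\Fl(A) \leq Y \leq A$.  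The absence of a hollow gives $\Fr(\Fl(A)) > A$, whence $\Fl(A) \leq Y \leq A < \Fr(\Fl(A))$; monotonicity of $\Fr$ on overlapping arcs then yields $\Fr(\Fl(A)) \leq \Fr(Y) = A$, a contradiction.  Hence $\sigma$ is total, and its orbit from any vertex is eventually periodic in the finite set $V(\Syn^k)$; the periodic part is by construction a greedy nose cycle.  The same argument makes $\tau$ total and produces greedy hollow cycles.

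Next I would pin down the sign of $\Ext$.  For the greedy nose part, I would first observe that no greedy nose cycle $\W$ can have $\Ext(\W) > 0$: passing to $\Copies \Unroll \Syn^k$ for large $\Copies$, the unrolled $\sigma$-walk is still greedy nose (nose and hollow types lift from $\Syn^k$ to the unrolling), and by Lemma~\ref{lem:jmp of noses and hollows} together with Corollary~\ref{cor:jmp and bal} each internal copy of $\W$ would descend in $\Row$ by a fixed positive amount per iteration, producing an unbounded descent incompatible with $\Row \geq 0$.  To obtain strict $\Ext < 0$, start the $\sigma$-orbit at a vertex $v$ of a cycle $\W_N$ with $\Ext(\W_N) < 0$ furnished by Lemma~\ref{lem:cycle ext < 0}.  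If $\sigma$ agrees with $\W_N$ throughout one full traversal, then $\W_N$ itself is a greedy nose cycle and we are done; otherwise, at the earliest vertex $v_i$ where $\W_N$ takes a hollow while $\sigma$ takes a nose, the $\sigma$-walk in $\Copies \Unroll \Syn^k$ rises strictly above the internal copy $\T_N$ of $\W_N$.  Iterating $\W_N$ many times in the unrolling yields an unboundedly ascending $\T_N$-curve that the $\sigma$-walk remains above; planarity (Theorem~\ref{thm:planarity}) and the row behaviour of Corollary~\ref{cor:walk drawing} then force the greedy nose cycle $\W_g$ eventually entered by the $\sigma$-orbit to sustain this ascent, so $\Ext(\W_g) < 0$.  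The SPCA case is the symmetric statement: Corollary~\ref{cor:cycle ext > 0} furnishes a cycle with $\Ext > 0$, and the analogous analysis with $\tau$ replacing $\sigma$ produces a greedy hollow cycle with $\Ext > 0$.

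The main obstacle is the comparison step that shows the $\sigma$-orbit, once it diverges upward from $\W_N$, cannot settle into a greedy cycle of $\Ext = 0$.  Intuitively a greedy nose cycle with $\Ext = 0$ corresponds in $\Copies \Unroll \Syn^k$ to a bounded closed curve, which cannot reach arbitrarily large rows; yet $\T_N$ passes through arbitrarily large rows under iteration, and the $\sigma$-walk, which lies above $\T_N$ after the divergence vertex, cannot simultaneously remain bounded.  Making this quantitative requires tracking backward noses separately through $\nose_b$ in Corollary~\ref{cor:jmp and bal}, since they contribute $+1$ extra to $\Jmp$ and determine precisely how much each $\sigma$-step can rise above the corresponding step of $\T_N$.
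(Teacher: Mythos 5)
Your core argument is essentially the paper's proof: start the greedy nose walk at a vertex $v$ of a cycle $\W_N$ with $\Ext(\W_N) < 0$, pass to $\Copies\Unroll\Syn^k$, and use planarity to argue that the greedy walk remains above the iterated copies of $\W_N$ (which ascend arbitrarily high), forcing the periodic part of the orbit to ascend and hence have $\Ext < 0$. You extract the cycle via the eventually periodic orbit of $\sigma$, whereas the paper pigeonholes two copies of a repeated vertex with positive $\Jmp$ between them; these amount to the same thing. Your verification that $\sigma$ is total is a genuine (and correct) step that the paper leaves implicit.

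The preliminary observation that \emph{no} greedy nose cycle can have $\Ext > 0$ is, however, false, and the paper explicitly warns of this just before the lemma: a greedy hollow cycle is also greedy nose whenever no choices are available, and such a cycle can have $\Ext > 0$ (see \cref{fig:example-non-mult-power}). Your supporting argument (``unbounded descent incompatible with $\Row \geq 0$'') does not yield a contradiction, because when an internal copy of such a cycle nears row $0$ it simply takes an external edge and reappears near the top of the unrolled circle, rather than descending below $0$. Fortunately the false observation is not actually used: your comparison with $\T_N$ already rules out $\Ext(\W_g) \geq 0$ in full (not only $\Ext(\W_g)=0$), since a periodic part with $\Ext \geq 0$ stays bounded or descends in $\Row$ and cannot remain above a curve that ascends without bound. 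Delete the preliminary claim and the argument is sound.
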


\begin{proof}
 By Lemma~\ref{lem:cycle ext < 0}, $\Syn^k$ has a cycle $\W_N$ with $\Ext(\W_N) < 0$.  Let $h = \max\{10, n(k+1)\}$.  If $\Copies \geq h^4$, then $\Copies\Unroll\Syn^k$ has an internal copy $\T_N$ of $h^2\Unroll\W_N$ that starts at some copy $A_0$ of a vertex $A$ of $\W_N$.  Note that $\T_N$ has a vertex with $\Row \geq \Row(A_0) + h^2$ because $\Ext(\W_N) < 0$.  Then, by Corollary~\ref{cor:walk drawing}, the co-domain of the function whose graph is $\Gr(\T_N)$ contains $[\Row(A_0), \Row(A_0) + h^2]$ (\cref{fig:greedy}(a)).

 \begin{figure}
 \centering 
  \begin{tabular}{c@{\hspace{3mm}}c}
    \includegraphics{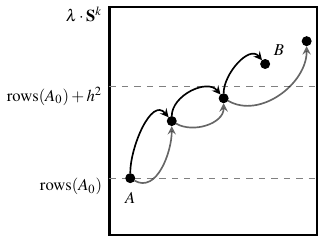} & \includegraphics{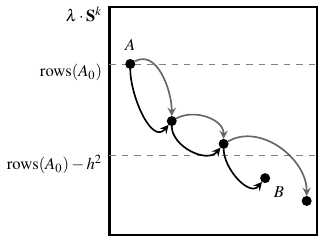} \\
	(a) & (b)
  \end{tabular}

	\caption[]{Schemes for Lemma~\ref{lem:greedy paths}. In (a) the black walk has a copy of a greedy nose cycle, in (b) the black walk has a copy of a greedy hollow cycle.}\label{fig:greedy}
 \end{figure}  

 For $i \geq 0$, let $\G_i$ be the unique greedy nose walk of $\Syn^k$ that starts at $A$ and has $i$ vertices, and $\T_i$ be the copy of $\G_i$ in $\Copies\Unroll\Syn^k$ that starts at $A_0$.  Note that if $\Pos_p(X) \in \Gr_0(\T_i) \cap \Gr_0(\T_N)$ for some vertex $X$ and $p \geq 0$, then the slope of the $p$-arrow leaving $\Pos_p(X)$ in $\Gr_0(\T_i)$ is not smaller than the slope of the $p$-arrow leaving $\Pos_p(X)$ in $\Gr_0(\T_N)$.  Otherwise, by Lemma~\ref{lem:jmp of noses and hollows}, $\T_i$ would take a hollow from $X$, whereas $\T_N$ would take a nose from $X$, contradicting the fact that $\G_i$ is greedy.  Then, as $\Pos_0(A_0) \in \Gr_0(\T_i) \cap \Gr_0(\T_N)$, induction and Theorem~\ref{thm:planarity} imply that $\Gr_0(\T_i)$ is bounded below by $\Gr_0(\T_N)$ for every $i$ such that $\T_i$ is internal (\cref{fig:greedy}(a)).  This means that $\Gr_0(\T_i)$ reaches $\Row(A_0) + h^2$ for some $i$ sufficiently large, thus $\T_i$ has a vertex $B$ with $\Row(B) \geq \Row(A_0) + h^2$ (\cref{fig:greedy}(a)).  Then, as $\Syn^k$ has $n \leq h$ vertices, it follows that $\T_i$ contains a subwalk with $\Jmp > 0$ joining two copies of a same vertex of $\Syn^k$.  The corresponding subpath of $\G_i$ is a greedy nose cycle $\G_N$ that has a copy in $\Copies\Unroll\Syn^k$ with $\Jmp > 0$, i.e., $\Ext(\G_N) < 0$.
 
 Regarding the case in which $\M$ is SPCA, recall that $\Syn^k$ has a cycle $\W_H$ with $\Ext(\W_H) > 0$ by Corollary~\ref{cor:cycle ext > 0}.  Arguments similar to those above, where the role of $\W_N$ is played by $\W_H$, allows us to conclude that $\Syn^k$ has a greedy hollow cycle $\G_H$ with $\Ext(\G_H) > 0$.  We omit the details for the sake of succinctness; see \cref{fig:greedy}(b).
\end{proof}

\begin{theorem}\label{thm:fast equivalence}
 The following statements are equivalent for an SPCA model $\M$ and $k \in \ORange{\wrap}$.%
 \begin{enumerate}
  \item $\M$ is equivalent to a $k$-multiplicative model.\label{thm:fast multip}
  \item Some greedy nose cycle of\/ $\Syn^{k}$ having\/ $\Ext < 0$ shares a vertex with a greedy hollow cycle of\/ $\Syn^k$ having\/ $\Ext > 0$.\label{thm:fast crossing}
  \item Some $\Spiral \in \mathbb{N}$ is a twisting bound 
  of\/ $\Syn^k$.\label{thm:fast twister}
 \end{enumerate}
\end{theorem}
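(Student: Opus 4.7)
The plan is to establish the cycle $(\ref{thm:fast multip})\Rightarrow(\ref{thm:fast crossing})\Rightarrow(\ref{thm:fast twister})\Rightarrow(\ref{thm:fast multip})$, with the first and third implications following quickly from existing material and the second constituting the main technical work.

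For $(\ref{thm:fast multip})\Rightarrow(\ref{thm:fast crossing})$ I combine Lemma~\ref{lem:greedy paths}, which supplies a greedy nose cycle $\G_N$ of $\Syn^k$ with $\Ext(\G_N)<0$ and a greedy hollow cycle $\G_H$ with $\Ext(\G_H)>0$, with Theorem~\ref{thm:equivalence}($\ref{thm:multip}\Rightarrow\ref{thm:crossing}$), which forces every pair of opposite-sign cycles to share a vertex. For $(\ref{thm:fast twister})\Rightarrow(\ref{thm:fast multip})$ I verify that any twister $\Spiral$ is a nose cycle twister: given a cycle $\W_N$ of $\Syn^k$ with $\Ext(\W_N)<0$ and any $\Copies\geq\Spiral$, an internal forward copy $\T$ of $\Spiral\Unroll\W_N$ in $\Copies\Unroll\Syn^k$ would join $\Spiral+1$ successive copies of the initial vertex of $\W_N$ at strictly increasing rows, exactly as in the first paragraph of the proof of Theorem~\ref{thm:equivalence}($\ref{thm:twister}\Rightarrow\ref{thm:crossing}$), so that $\Jmp(\T)\geq\Spiral$ contradicts the twister hypothesis. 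Since nose cycle twisters are cycle twisters, Theorem~\ref{thm:equivalence}($\ref{thm:twister}\Rightarrow\ref{thm:multip}$) then delivers~$(\ref{thm:fast multip})$.

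The central step $(\ref{thm:fast crossing})\Rightarrow(\ref{thm:fast twister})$ is where the new work happens. I plan to choose $\Spiral$ as an explicit polynomial in $n$, $k+1$, $|\G_N|$, and $|\G_H|$ and to argue by contradiction: suppose that for some $\Copies\geq\Spiral$ a walk $\W$ of $\Syn^k$ admits a forward copy $\T$ in $\Copies\Unroll\Syn^k$ with $\Jmp(\T)\geq\Spiral$. The key geometric tool is a \emph{sandwich principle} to be deduced from Theorem~\ref{thm:planarity} together with the deterministic greedy choice at each vertex, generalising the envelope arguments in the proof of Lemma~\ref{lem:greedy paths}: in the Mitas drawing, the curve of any forward walk starting at a vertex $v$ lies between the drawings of the greedy nose walk from $v$ (above) and of the greedy hollow walk from $v$ (below). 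Hypothesis $(\ref{thm:fast crossing})$ makes these two envelopes cross at every copy of $A$ in $\Copies\Unroll\Syn^k$, producing a periodic diamond-shaped mesh on the cylinder whose cells have dimensions bounded in terms of $|\G_N|$, $|\G_H|$, and $k+1$. The long ascent of $\T$ is then forced by Theorem~\ref{thm:planarity} to meet this mesh at vertices of $\G_N$ a number of times proportional to $\Jmp(\T)/(|\G_N|+|\G_H|)$. Splicing $\T$ with matched fragments of greedy nose lifts at these meeting points produces a forward internal copy of $\Spiral'\Unroll\G_N$ with $\Spiral'$ arbitrarily large; a direct weight computation using Corollary~\ref{cor:jmp and bal} and Equation~\eqref{eq:sep} then exhibits a cycle of $\Syn^k$ with $\Sep_{\Circ,\Len}>0$ simultaneously for every positive $\Circ$ and $\Len$, contradicting Theorem~\ref{thm:no_positive_cycles} and thereby ruling out~$\T$.

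The main obstacle is twofold: first, making the sandwich principle and the diamond-mesh counting rigorous in $\Copies\Unroll\Syn^k$, where external edges of $\Syn^k$ become internal in the unrolled synthetic graph and the row structure is more delicate than in $\Syn^k$; second, verifying that the sandwich from a starting vertex $v\neq A$ is quantitatively controlled by the $A$-based mesh, which I expect to handle through a transfer argument that follows forward greedy nose edges from $v$ to some copy of $A$ in a bounded number of columns.
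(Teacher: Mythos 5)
Your implication chain and your treatments of $(\ref{thm:fast multip})\Rightarrow(\ref{thm:fast crossing})$ and $(\ref{thm:fast twister})\Rightarrow(\ref{thm:fast multip})$ match the paper's and are fine; the problem is in $(\ref{thm:fast crossing})\Rightarrow(\ref{thm:fast twister})$, and it is a logical gap, not merely a matter of making the mesh rigorous. You propose to derive, from a hypothetical forward copy $\T$ with $\Jmp(\T)\geq\Spiral$, a cycle of $\Syn^k$ with $\Sep_{\Circ,\Len}>0$ for every $\Circ,\Len>0$, and to declare this a contradiction with Theorem~\ref{thm:no_positive_cycles}. But that theorem is a conditional: such a cycle only gives $\lnot(\ref{thm:fast multip})$, which does not contradict your standing hypothesis. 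Via Theorem~\ref{thm:equivalence}, $\lnot(\ref{thm:fast multip})$ yields \emph{some} pair of opposite-sign cycles with no common vertex, whereas $(\ref{thm:fast crossing})$ only asserts that the particular \emph{greedy} pair meets---establishing that the latter forces the former is exactly what you are trying to prove, so the argument is circular as written. Moreover, the arithmetic is off: a forward internal copy of $\Spiral'\Unroll\G_N$ corresponds to a cycle with $\Ext<0$, on which $\Sep_{\Circ,\Len}$ is eventually negative as $\Circ$ grows, so it cannot supply $\Sep>0$ for every $\Circ,\Len$.

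The idea you are missing, and which the paper uses to conclude the twister property \emph{directly} (never producing a bad cycle), is the structural fact that a greedy hollow walk has no backward noses: if $X\to Y$ were one, $X$ rightmost makes $\RR(X)$ leftmost, and then \eqref{eq:row} with Lemma~\ref{lem:nose equivalences} forces $\Hl(X)\neq\bot$, hence $X\to\Fl(X)$ is a $1$-hollow, contradicting greedy-hollowness. Combining this with Lemma~\ref{lem:acyclic}, applied to the two internal circuits obtained by concatenating matching halves of the unrolled copies $\T_N$ of $2\Ext(\G_H)\Unroll\G_N$ and $\T_H$ of $2|\Ext(\G_N)|\Unroll\G_H$ at shared copies of $A$, pins at least two backward edges on $\T_N$, so $\Gr_0(\T_N)$ spans at least $2\Cols(\M)$ columns. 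Since $\T_N$ is greedy nose, Theorem~\ref{thm:planarity} then bounds above (after relocating via Lemma~\ref{lem:repetitive}) the curve of any long ascending forward copy, which is thus also forced across $2\Cols(\M)$ columns and so picks up a backward edge---that is the twister property. Your ``sandwich'' picture is pointing the right way, but without the backward-nose-free property of greedy hollow walks the lower envelope of your mesh does not behave, and without a direct column-counting finish the only exit you propose is a universal bad cycle, which, as explained, does not close the implication.
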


\begin{proof}
 $\ref{thm:fast multip} \Rightarrow \ref{thm:fast crossing}$.  By Lemma~\ref{lem:greedy paths}, $\M$ has a greedy nose cycle $\G_N$ with $\Ext < 0$ and a greedy hollow cycle $\G_H$ with $\Ext > 0$.  By Theorem~\ref{thm:equivalence}, $\G_N$ and $\G_H$ have a vertex in common.

 $\ref{thm:fast crossing} \Rightarrow \ref{thm:fast twister}$.  Let $\Copies \geq \Spiral = h^4$ for $h = \max\{10,(k+1)n\}$, and $A_0$ be a copy of $A$ in $\Copies\Unroll\Syn^k$ with $\Row(A_0) \in (2h^2, 2h^2+h]$.  By hypothesis, a greedy hollow cycle $\G_H$ of $\Syn^k$ having $\Ext(\G_H) > 0$ shares a vertex $A$ with a greedy nose cycle $\G_N$ of $\Syn^k$ having $\Ext(\G_N) < 0$.  Define $\T_N$ as the copy of $2\Ext(\G_H) \Unroll \G_N$ in $\Copies\Unroll\Syn^k$ that starts at $A_0$ and ends at a copy $A_2$ of $A$ (\cref{fig:twisting}).  As usual, Observation~\ref{lem:jmp of noses and hollows} implies that $\T_N$ is an internal walk whose vertices have $\Row \in (h^2, h^3]$.  Similarly, the copy $\T_H$ of $2\lvert\Ext(\G_N)\rvert\Unroll\G_H$ in $\Copies\Unroll\Syn^k$ that starts at $A_2$ is also internal and ends at $A_0$ (\cref{fig:twisting}).  Let $a$ and $b$ be the number of backward noses of $\T_N$ and $\T_H$, respectively.  Clearly, both $\T_N$ and $\T_H$ pass through another copy $A_1$ of $A$ with $\Row(A_0) < \Row(A_1) < \Row(A_2)$ (\cref{fig:twisting}).  Thus, the walk obtained by traversing $\T_N$ from $A_i$ to $A_{i+1}$, $i\in \Range{2}$, and then traversing $\T_H$ from $A_{i+1}$ to $A_i$ is an internal circuit.  By Lemma~\ref{lem:acyclic}, this circuit has at least one backward edge, hence $a + b \geq 2$.  
 
 We claim that $b = 0$, i.e., $\T_H$ has no backward edges.  Contrary to our claim, suppose $X \to Y$ is a backward nose in $\T_H$.  By definition, $X$ is rightmost, thus $\RR(X)$ is leftmost and, by \eqref{eq:row}, $\RR(X) = \Hr(Z) = \Hr \circ \RR \circ \LL(Z)$ for some vertex $Z$.  Then, by Lemma~\ref{lem:nose equivalences}, $\Hl(X) = \LL(Z) \neq \bot$, thus $\Fr \circ \Fl(X) = X$.  In other words, $X \to \Fl(X)$ is a $1$-hollow.  But this is impossible if $X \to Y$ is a backward nose, because $\T_H$ is greedy hollow.  Hence, $b = 0$ and, therefore, $\T_N$ has $a \geq 2$ backward edges.

 \begin{figure}
  \centering 
  \begin{tabular}{c@{\hspace{3mm}}c}
     \includegraphics{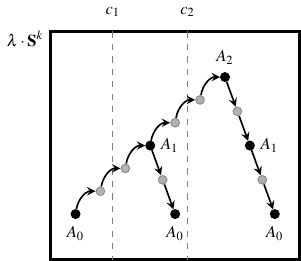} & \includegraphics{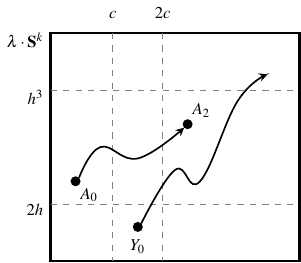}\\
	(a) & (b)
  \end{tabular}

	\caption[]{Scheme for the proof of Theorem~\ref{thm:fast equivalence} ($\ref{thm:fast crossing}
  \Rightarrow \ref{thm:fast twister})$.}\label{fig:twisting}
 \end{figure}

 To prove that $\Spiral$ is a twisting bound 
 of $\Syn^k$, we have to show that any internal copy $\T$ of a walk $\W$ of $\Syn^k$ with $\Jmp(\T) \geq \Spiral$ has a backward edges.  To prove this, we use a copy $\Q$ of $\T$ whose drawing is below $\T_N$.  Let $X_0$ be the last vertex of $\T$ with $\Row(X_0) \leq 3h$.  If $\T$ has no such vertices, then let $X_0$ be a vertex with minimum $\Row$ in $\T$.  By definition, the subwalk $\T'$ of $\T$ from $X_0$ has $\Jmp(\T') \geq \Spiral - 3h \geq h^3$ and visits vertices with $\Row \geq 2h$.  Then, by Lemma~\ref{lem:repetitive}, $X_0$ has a copy $Y_0$ with $\Row(Y_0) \in (h, 2h]$ such that the row containing $Y_0$ is a copy of the row containing $X_0$.  Moreover, if $\Q$ is the copy of $\T'$ from $Y_0$, then the $i$-th vertex of $\Q$ belongs to a row that is a copy of the row containing the $i$-th vertex of $\T'$ that is traversed after $X_0$.  Then, $\T$ has at least as many backward edges as $\Q$.  
 
 By Corollary~\ref{cor:walk drawing}, $\Gr_0(\T_N)$ is a curve that joins $\Pos_0(A_0)$ and $\Pos_{a}(A_2)$.  As in the proof of Lemma~\ref{lem:greedy paths}, Theorem~\ref{thm:planarity} together with the facts that $\T_N$ is greedy and $\Row(Y_0) < \Row(Z)$ for every $Z \in \T_N$, implies that $\Gr_1(\Q)$ is bounded above by $\Gr_0(\T_N)$.  Hence, since $\Q$ ends at a row greater than $h^3$ (because $\Jmp(\Q) \geq h^3$), it follows that $\Gr_1(\Q)$ crosses $2\Cols(\M)$, thus $\Q$ has a backward edge, and so does $\T$ as desired (\cref{fig:twisting}(b)).
 
 $\ref{thm:fast twister} \Rightarrow \ref{thm:fast multip}$ follows by Theorem~\ref{thm:equivalence} because every twisting bound is a cycle twisting bound.
 \end{proof}

\begin{corollary}\label{cor:kRep}
 The problem $\kMult$ can be solved in $O(n)$ time for every PCA model\/ $\M$ and every $k \in \ORange{\wrap}$.  If the output is no, then a negative certificate that can be authenticated in $O(n)$ time is obtained as a by-product.
\end{corollary}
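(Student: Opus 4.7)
My plan is to compute $\Syn^k$ once and then read off both the decision and the certificate from a single pair of cycles. By Theorem~\ref{thm:Syn^k algorithm}, $\Syn^k$ together with $\Bal$ is available in $O(n)$ time, and this suffices to evaluate $\Ext$ on any walk in time proportional to its length. Combining Theorems~\ref{thm:equivalence} and~\ref{thm:fast equivalence} gives a clean dichotomy: if some greedy nose cycle $\G_N$ with $\Ext(\G_N)<0$ shares a vertex with some greedy hollow cycle $\G_H$ with $\Ext(\G_H)>0$, then $\M$ is equivalent to a $k$-multiplicative UCA model (by $\ref{thm:fast crossing}\Rightarrow\ref{thm:fast multip}$ of Theorem~\ref{thm:fast equivalence}); conversely, if the two cycles do not share a vertex, then the contrapositive of $\ref{thm:multip}\Rightarrow\ref{thm:crossing}$ in Theorem~\ref{thm:equivalence} rules out $k$-multiplicativity. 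So it is enough to produce one such pair and check whether they intersect.

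To find $\G_N$ in linear time I will exploit the functional-graph structure of greedy walks. By Theorem~\ref{thm:Syn^k to Syn^k+1 remove}, each vertex of $\Syn^k$ has at most one nose out-edge and at most one hollow out-edge, so the greedy-nose successor of $v$ (the nose out-neighbour when it exists, and the hollow out-neighbour otherwise) is single-valued. The resulting functional graph on $\A(\M)$ decomposes into $\rho$-shapes, so a standard linear scan locates every terminal cycle in $O(n)$ time. The combined length of those cycles is at most $n$, hence $\Ext$ can be computed along all of them in $O(n)$ in total. Lemma~\ref{lem:greedy paths} ensures that at least one terminal cycle has $\Ext<0$, so I pick any such cycle as $\G_N$. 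Swapping the roles of noses and hollows and applying the other half of Lemma~\ref{lem:greedy paths} produces $\G_H$ with $\Ext(\G_H)>0$ in the same way.

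Finally, I mark $V(\G_N)$ in a Boolean array indexed by $\A(\M)$ and scan $\G_H$ to test for a common vertex in $O(n)$. If one exists, the algorithm outputs the $k$-multiplicative UCA model equivalent to $\M$ produced by the construction of Section~\ref{sec:representation}. Otherwise it returns the pair $(\G_N,\G_H)$. Authentication of a putative negative certificate in $O(n)$ traverses each returned cycle to verify that its edges are $1$-hollows or noses of $\Syn^k$ (using Lemma~\ref{lem:nose equivalences}) and that the greedy choice was made at every step, recomputes $\Ext(\G_N)<0$ and $\Ext(\G_H)>0$ from the edge list and $\Bal$, and checks that the two vertex sets are disjoint; the rejection is then justified by Theorem~\ref{thm:equivalence}. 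The only subtle point is the linear-time enumeration of greedy cycles: it works precisely because the greedy successor is single-valued, collapsing what could be a search over all cycles of $\Syn^k$ into a single traversal of a functional graph whose cycles jointly occupy at most $n$ vertices.
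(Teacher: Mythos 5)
Your proposal follows essentially the same strategy as the paper's proof: build $\Syn^k$ in $O(n)$ time via Theorem~\ref{thm:Syn^k algorithm}, extract one greedy nose cycle $\G_N$ with $\Ext<0$ and one greedy hollow cycle $\G_H$ with $\Ext>0$ using the fact that the greedy successor is single-valued, and decide by testing whether $\G_N$ and $\G_H$ share a vertex, with correctness delegated to Theorems~\ref{thm:equivalence} and~\ref{thm:fast equivalence}. The paper constructs the same functional graphs (its $\Syn_N$ and $\Syn_H$) and does the same cycle enumeration.

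However, there is a genuine gap in your argument: you never dispose of the PIG case, and the algorithm as you describe it breaks on PIG inputs. Both Theorem~\ref{thm:equivalence} and Theorem~\ref{thm:fast equivalence} are stated only for \emph{SPCA} models, and, crucially, the second half of Lemma~\ref{lem:greedy paths} (existence of a greedy hollow cycle with $\Ext>0$) also has the SPCA hypothesis. A PIG model has no external hollows, so every cycle of its synthetic graph has $\Ext\leq 0$; your step ``Swapping the roles of noses and hollows \ldots produces $\G_H$ with $\Ext(\G_H)>0$'' simply finds nothing, and the subsequent intersection test is undefined. The paper avoids this by first testing whether $\M$ is PIG and, if so, outputting yes immediately (every PIG model is $\infty$-multiplicative by Theorem~\ref{thm:multiplicative PIG}); only for SPCA inputs is the greedy-cycle machinery invoked. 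You need the same preliminary branch. A smaller remark: your authentication step additionally verifies that the returned cycles are \emph{greedy}, which is more than is required—by Theorem~\ref{thm:equivalence}, any two vertex-disjoint cycles of $\Syn^k$ with $\Ext$ values of opposite sign constitute a valid negative certificate, so it suffices to check that the edges lie in $\Syn^k$, that the $\Ext$ signs are opposite, and that the vertex sets are disjoint.
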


\begin{proof}
 If $\M$ is PIG, then the algorithm outputs yes because $\M$ is $\infty$-multiplicative~\cite{CorneilKimNatarajanOlariuSpragueIPL1995,GavoillePaulSJDM2008} (see Theorem~\ref{thm:tucker} for an alternative proof).  Otherwise, $\Syn^k$ is built in $O(n)$ time with Theorem~\ref{thm:Syn^k algorithm}.  Then, a subgraph $\Syn_H$ (resp.\ $\Syn_N$) of $\Syn^k$ is obtained in $O(n)$ time by removing each nose (resp.\ hollow) $A \to B$ when a hollow (resp.\ nose) $A \to X$ exists.  By construction, the walks of $\Syn_H$ (resp.\ $\Syn_N$) are precisely the greedy hollow (resp.\ nose) walks of $\Syn^k$.  By Lemma~\ref{lem:nose equivalences} and Theorem~\ref{thm:Syn^k to Syn^k+1 remove}, at most one hollow (resp.\ nose) begins at each vertex $A$, thus the family of greedy hollow (resp.\ nose) cycles is obtained from $\Syn_H$ (resp.\ $\Syn_N$) in $O(n)$ time.  By Lemma~\ref{lem:greedy paths}, at least one of these greedy cycles has $\Ext > 0$ (resp.\ $\Ext < 0$).  Let $\G_N$ and $\G_H$ be greedy cycles with $\Ext(\G_N) < 0 < \Ext(\G_H)$ that are computed in $O(n)$ time.  The algorithm outputs yes if and only if $\G_N$ and $\G_H$ have a vertex in common, a fact that can be checked in $O(n)$ time.  The algorithm is correct by Theorems \ref{thm:equivalence}~and~\ref{thm:fast equivalence}.  When the output is no, the pair of cycles $(\G_H, \G_N)$ is returned.  To authenticate this certificate, $\Syn^k$ is generated in $O(n)$ time to verify that $\G_H$ and $\G_N$ are cycles that have no vertices in common.
\end{proof}

If $\M$ is PIG, then $\M$ is $\infty$-multiplicative.  If $G(\M)$ is co-bipartite, then $\wrap = 2$ and $\M$ is $1$-multiplicative~\cite{TuckerDM1974}.  Finally, if $\M$ is SPCA and $G(\M)$ is not co-bipartite, then $\M$ is the unique PCA model representing $G(\M)$, up to equivalence and full reversal~\cite{HuangJCTSB1995}.  In this last case, the pair of cycles $(\G_H,\G_N)$ with $\Ext(\G_H)\Ext(\G_N) < 0$ and no vertices in common defines a submodel $\F = (C(\M), \{A \in \A(\M) \mid A \in \G_H \cup \G_N\})$ of $\M$ that is equivalent to no $k$-multiplicative UCA model.   Therefore, Theorem~\ref{thm:equivalence} implies a characterization by forbidden induced subgraphs for the class of PCA graphs that admit $k$-multiplicative UCA models.  For $k=1$, this is the characterization in \cite{TuckerDM1974}.

\section{A certifying algorithm for the multiplicative problem}
\label{sec:representation}

Suppose \kMult answers yes for an SPCA model $\M$, and let $\Unit$ be a $k$-multiplicative $(\Circ,2n^2+1)$-CA model equivalent to $\M$, where $\Circ$ is the minimum such that $\Sep_{\Circ,\Len}(\W) \leq 0$ for every cycle $\W$ with $\Ext(\W) < 0$.  The existence of $\Unit$ follows by Theorem~\ref{thm:equivalence}.  Since $|\Ext(\W)| \leq n$ for every cycle $\W$ of $\Syn^k$, Theorem~\ref{thm:no_positive_cycles} and~\eqref{eq:sep} imply that $\Circ$ is polynomial in $n$, thus we can compute $\Unit$ in polynomial time.  In this section we design a certifying algorithm for \kMult that runs in $O(n)$ time.  Although the algorithm is a simple generalization of one by Soulignac~\cite{SoulignacJGAA2017,SoulignacJGAA2017a} 
for \Rep $=$ \nMult{1}, its correctness follows by simpler and shorter arguments that exploit the loop unrolling technique.  We remark that the algorithm works for every PCA model, thus we do not assume that $\M$ is SPCA beyond this point.

Let $\W$ be a cycle of $\Syn^k$. Recall that $\Bal(\W)$ is a total that increases with every nose and decreases with every hollow in $\W$. In a way, $\Bal$ measures how many edges went from some arc to other arc in the clockwise direction in the model (noses), and compares it with the amount of edges that did this movement counter-clockwise (hollows). On other hand, $\Ext$ measures something similar, but this time in terms of the amount of turns of the circle were taken before returning to the initial arc. We will pay attention to the ratio of the amount of edges with respect to the amount of turns. However, notice that clockwise movement implies a positive $\Bal$ but a negative $\Ext$, and vice-versa. Therefore, we define 
$\Ratio(\W) = -\Bal(\W)(\Ext(\W))^{-1}$.  By Theorem~\ref{thm:no_positive_cycles} and~\eqref{eq:sep}, if $\M$ is equivalent to a $k$-multiplicative $(\Circ,\Len+1)$-CA model, then either $\Ext(\W) < 0$ and $\Circ \geq \Len\Ratio(\W) + 2\nose(\W)$ or $\Ext(\W) > 0$ and $\Circ \leq \Len\Ratio(\W)$.  Then, as $\W$ has at least one nose when $\Ext(\W) < 0$, we obtain that $\Circ = \Len\Ratio^k(\M) + \Extra \leq \Len\RATIO^k(\M)$ for some $\Extra > 0$, where
\begin{align*}
 \Ratio^k(\M) &= \max\{\Ratio(\W) \mid \W \text{ is a cycle of $\Syn^k$ with } \Ext(\W) < 0\}\text{, and} \\
 \RATIO^k(\M) &= \min\{\Ratio(\W) \mid \W \text{ is a cycle of $\Syn^k$ with } \Ext(\W) > 0\}.
\end{align*}
We omit $\M$ as usual; note that $\RATIO^k = \infty$ when every cycle of $\Syn^k$ has $\Ext \leq 0$.

The fact that $\Ratio^1 < \RATIO^1$ is a restatement of a well-known result in \cite{TuckerDM1974}.  Specifically, Tucker 
proved that a PCA model $\M$ is equivalent to a $1$-multiplicative model if and only if $a/b > x/y$ for 
every $(a,b)$-independent and $(x,y)$-circuit.  
We shall not define what an $(a,b)$-independent is or what an $(x,y)$-circuit is.  Instead, we remark that, as discussed in \cite[Theorem~4]{SoulignacJGAA2017}, 
each $(a,b)$-independent corresponds to a circuit $\W_N$ of $\Syn^1$ with $\Ext(\W_N) < 0$ and, similarly, each $(x,y)$-circuit corresponds to a circuit $\W_H$ of $\Syn^1$ with $\Ext(\W_H) > 0$.  Moreover, $a/b = \Ratio(\W_N) + h$ and $x/y = \Ratio(\W_H) + h$ for some constant $h$ \cite[Theorem~4]{SoulignacJGAA2017}.  Therefore, Tucker's characterization not only implies that $\Ratio^1 < \RATIO^1$ when $\M$ is equivalent to a $1$-multiplicative model, it also implies the converse.  Soulignac discusses alternative characterizations of $1$-multiplicative models that are described in terms of $\Ratio^1$ and the parameters $\Length^1$ and $\Lex^1$ that we define next for every $k \geq 1$\cite[Theorem~2]{SoulignacJGAA2017}.  
In few words, $\Length^k$ and $\Lex^k$ are special weightings of $\Syn^k$ that can be used to discard some edges of $\Syn^k$ that are implied when some specific values of $\Circ$ and $\Len$ are used.  As an acyclic digraph is obtained after discarding these ``redundant'' edges, the canonical solution to the full system $\F_{\Circ,\Len}^k$ can be computed more efficiently.  Theorem~\ref{thm:tucker} below is the generalization of Soulignac's characterization for $k \geq 1$ and is the theoretical foundation for the algorithm that we develop in this section.

By definition, $\Bal$ is a weighting of $\Syn^k$ where, for every edge $A \to B$, $\Bal(A \to B) = -1$ if $A \to B$ is a $1$-hollow and $\Bal(A \to B) = i$ if $A \to B$ is an $i$-nose.  Similarly, $\Ext$ is a weighting of $\Syn^k$ where $\Ext(A \to B) = \Boolean{B \geq A}$ if $A \to B$ is a hollow and $\Ext(A \to B) = -\Boolean{A > B}$ if $A \to B$ is a nose.  Let $\Length^k$ and $\Lex^k$ be the weightings of $\Syn^k$ such that
\begin{align*}
 \Length^k(A \to B) =& \Bal(A \to B) + \Ratio^k\Ext(A \to B)\text{, and}\\
 \Lex^k(A \to B) =& (\Length^k(A \to B), \Ext(A \to B)).
\end{align*}
By \eqref{eq:sep}, if $\W$ is a walk of $\Syn^k$ and $\Circ = \Len\Ratio^k + \Extra$, then 
\begin{align}
 \Sep_{\Circ,\Len}(\W) &= \Len\Bal(\W) + \Circ\Ext(\W) + 2\nose(\W) \notag \\
 &= \Len\Bal(\W) + \Len\Ratio^k\Ext(\W) + \Extra\Ext(\W) + 2\nose(\W) \notag \\
 &= \Len\Length^k(\W) + \Extra\Ext(\W) + 2\nose(\W) = (\Len, \Extra) \cdot \Lex^k(\W) + 2\nose(\W). \label{eq:sep bis}
\end{align}

Let $A_0$ be the initial arc of $\M$.  Say that an edge $A \to B$ of $\Syn^k$ is \emph{redundant} when 
\begin{displaymath}
 \Dist{\Lex^k}(A_0, B) > \Dist{\Lex^k}(A_0, A) + \Lex^k(A \to B),
\end{displaymath}
where $>$ denotes the lexicographically greater relation.  Let $\Red^k(\M)$ be the spanning subgraph of $\Syn^k$ obtained by removing all the redundant edges; as usual, we omit $\M$ from $\Red^k$.  Our final characterization yields an alternative algorithm that provides a $k$-multiplicative model equivalent to $\M$ at the cost of having longer arcs.

\begin{theorem}\label{thm:tucker}
 Let $A_0$ be the initial arc of a connected PCA model\/ $\M$, $k \in \ORange{\wrap}$, $\Circ = \Len\Ratio^k+\Extra$, $\Len = \Extra^3$, and $\Extra = 4n$. The following statements are equivalent:
 \begin{enumerate}
   \item $\M$ is equivalent to a $k$-multiplicative UCA model.\label{thm:tucker:equivalence}
   \item $\Ratio^k < \RATIO^k$.\label{thm:tucker:bound}
   \item $\Lex^k(\W) < (0,0)$ for every cycle\/ $\W$ of\/ $\Syn^k$.\label{thm:tucker:length}
   \item $\Red^k$ is acyclic.\label{thm:tucker:negative-witness}
   \item $\Dist{\Sep_{\Circ, \Len}}(\Syn^k,A_0, A) = \IDist{\Sep_{\Circ, \Len}}(\Red^k, A_0, A)$ for every $A \in \A(\M)$.\label{thm:tucker:positive-witness}
 \end{enumerate}
\end{theorem}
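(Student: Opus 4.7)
The strategy is to close the chain \ref{thm:tucker:equivalence}~$\Rightarrow$~\ref{thm:tucker:bound}~$\Rightarrow$~\ref{thm:tucker:length}~$\Rightarrow$~\ref{thm:tucker:positive-witness}~$\Rightarrow$~\ref{thm:tucker:equivalence}, together with the side-link \ref{thm:tucker:length}~$\Leftrightarrow$~\ref{thm:tucker:negative-witness}. For \ref{thm:tucker:equivalence}~$\Rightarrow$~\ref{thm:tucker:bound} I would combine Theorem~\ref{thm:no_positive_cycles} with \eqref{eq:sep}: a $k$-multiplicative $(\Circ',\Len'+1)$-CA model forces $\Sep_{\Circ',\Len'}(\W)\leq 0$ on every cycle $\W$ of $\Syn^k$, and dividing by $|\Ext(\W)|$ on cycles with $\Ext(\W)\neq 0$ pinches $\Circ'/\Len'$ strictly above $\Ratio^k$ (strict because cycles with $\Ext<0$ must contain a nose) and at most $\RATIO^k$. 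For \ref{thm:tucker:bound}~$\Rightarrow$~\ref{thm:tucker:length} I would split on the sign of $\Ext(\W)$: when $\Ext(\W)\neq 0$, the definitions of $\Ratio^k$ and $\RATIO^k$ translate by elementary algebra into $\Length^k(\W)\leq 0$ (with $\Ext(\W)<0$) or $\Length^k(\W)<0$ (with $\Ext(\W)>0$), either way giving $\Lex^k(\W)<(0,0)$; when $\Ext(\W)=0$, I would recycle the unrolling argument of the case $\Ext(\W)=0$ in the proof of Theorem~\ref{thm:equivalence}, which produces an internal circuit copy of $\W$ with $\Jmp=0$, hence a backward edge by Lemma~\ref{lem:acyclic}, so $\Bal(\W)<0$ and $\Length^k(\W)<0$.

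The equivalence \ref{thm:tucker:length}~$\Leftrightarrow$~\ref{thm:tucker:negative-witness} follows by a symmetric telescoping identity. Any cycle $v_0,\ldots,v_{j-1},v_0$ of $\Red^k$ satisfies the defining equality $\Dist{\Lex^k}(A_0,v_{i+1})=\Dist{\Lex^k}(A_0,v_i)+\Lex^k(v_i\to v_{i+1})$ on each edge, so cyclic summation telescopes to $\Lex^k(\W)=(0,0)$, contradicting \ref{thm:tucker:length}; this proves \ref{thm:tucker:length}~$\Rightarrow$~\ref{thm:tucker:negative-witness}. Conversely, under \ref{thm:tucker:negative-witness}, any cycle $\W$ of $\Syn^k$ with $\Lex^k(\W)\geq(0,0)$ either would make $\Dist{\Lex^k}$ infinite (destroying the definition of $\Red^k$) or would lie entirely inside $\Red^k$ by the very same telescoping, contradicting acyclicity.

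The principal obstacle is \ref{thm:tucker:length}~$\Rightarrow$~\ref{thm:tucker:positive-witness}, where the explicit choice $\Len=\Extra^3=(4n)^3$ is essential. The plan has two stages. First, since $\Ratio^k$ is a ratio of integers with denominator at most $n$, any cycle with $\Length^k<0$ satisfies $\Length^k\leq -1/n$, while a cycle with $\Length^k=0$ and $\Ext<0$ has $\Ext\leq -1$; each case, combined with the bounds $|\Ext|,\nose\leq n$ on a cycle and \eqref{eq:sep}, forces $\Sep_{\Circ,\Len}(\W)<0$, so $\Dist{\Sep_{\Circ,\Len}}(\Syn^k,A_0,\cdot)$ is finite and attained on some path. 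Second, using \eqref{eq:sep bis}, I would compare any two paths $P_1,P_2$ from $A_0$ to $A$: the scaling gap $\Len/n=64n^2$ dominates the worst-case differences $|\Extra\,\Delta\Ext|\leq 8n^2$ and $|2\,\Delta\nose|\leq 4n$, so $\Lex^k(P_1)>\Lex^k(P_2)$ implies $\Sep(P_1)\geq\Sep(P_2)$. Consequently some max-$\Sep$ path is also a max-$\Lex^k$ path, and each of its edges tightly satisfies the $\Dist{\Lex^k}$ equality and is therefore non-redundant; the whole path lies in $\Red^k$, which yields $\Dist{\Sep_{\Circ,\Len}}(\Syn^k,A_0,A)=\IDist{\Sep_{\Circ,\Len}}(\Red^k,A_0,A)$. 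I expect the delicate borderline to be $\Length^k(P_1)=\Length^k(P_2)$ with $\Ext(P_1)>\Ext(P_2)$, where $\Extra=4n$ is only just sufficient to dominate $|2\,\Delta\nose|\leq 4n$; this is precisely what fixes the constants in the statement. Finally, \ref{thm:tucker:positive-witness}~$\Rightarrow$~\ref{thm:tucker:equivalence} is immediate: the equality forces $\Dist{\Sep_{\Circ,\Len}}$ to be finite, so $\Syn^k$ has no positive $\Sep$-cycle, and Theorem~\ref{thm:no_positive_cycles} produces a $k$-multiplicative $(\Circ,\Len+1)$-CA model equivalent to $\M$.
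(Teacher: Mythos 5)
Your proof is correct, and for the crucial implication to \ref{thm:tucker:positive-witness} you take a genuinely different route than the paper. Where the paper proves the chain $\ref{thm:tucker:length}\Rightarrow\ref{thm:tucker:negative-witness}\Rightarrow\ref{thm:tucker:positive-witness}$ — with the step $\ref{thm:tucker:negative-witness}\Rightarrow\ref{thm:tucker:positive-witness}$ done by induction on walk length, comparing the lex-optimal path $\W_j$ of $\Red^k$ to a candidate walk $\W_\Syn$ edge by edge and using two quantitative cases to show that redundant edges strictly decrease $\Sep$ — you instead prove $\ref{thm:tucker:length}\Rightarrow\ref{thm:tucker:positive-witness}$ directly by exhibiting a single path $P$ from $A_0$ to $A$ that simultaneously attains $\Dist{\Lex^k}$ and $\Dist{\Sep_{\Circ,\Len}}$: your dominance lemma ($\Lex^k(P_1)>\Lex^k(P_2)$ forces $\Sep(P_1)\geq\Sep(P_2)$, via $\Length^k\leq -1/n$ or $\Ext$-gap $\geq 1$) shows that among the $\Lex^k$-maximizers one may pick a $\Sep$-maximizer, and the prefix-optimality of that path shows every edge satisfies the $\Dist{\Lex^k}$-equality, hence lies in $\Red^k$. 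The ingredients are identical to the paper's (the rationality argument putting a $1/n$ floor under $\Length^k$ differences; the bounds $|\Ext|\leq n$, $\nose\leq n$; the constants $\Len=(4n)^3$, $\Extra=4n$), but your packaging is tighter and avoids the induction; the paper's version makes it more explicit why $\Red^k$-acyclicity alone suffices, since it never invokes \ref{thm:tucker:length} inside that step. Your side link $\ref{thm:tucker:length}\Leftrightarrow\ref{thm:tucker:negative-witness}$ is what closes the cycle through \ref{thm:tucker:negative-witness}: the forward direction telescopes exactly as in the paper, and your sketch of the converse is sound once you note that under \ref{thm:tucker:negative-witness} any cycle $\W$ with $\Lex^k(\W)\geq(0,0)$ has no redundant edges (either $\Dist{\Lex^k}$ is $+\infty$ along $\W$, making the strict comparison in the definition of redundancy vacuously false, or the telescoping forces equalities), so $\W\subseteq\Red^k$, contradicting acyclicity. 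One small imprecision: non-redundancy of an edge $A\to B$ gives only $\Dist{\Lex^k}(A_0,B)\leq\Dist{\Lex^k}(A_0,A)+\Lex^k(A\to B)$, not equality; the equality holds only when combined with the reverse triangle inequality under finite $\Dist{\Lex^k}$, but this is exactly the hypothesis you are working under, so it does not break the argument.
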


\begin{proof}
 $\ref{thm:tucker:equivalence} \Rightarrow \ref{thm:tucker:bound}$ follows by Theorem~\ref{thm:no_positive_cycles} and \eqref{eq:sep}; see above.
 
 $\ref{thm:tucker:bound} \Rightarrow \ref{thm:tucker:length}$.  If $\Ext(\W) = 0$ and $\M$ is PIG, then $\W$ is internal because $\Syn^k$ has no external hollows.  Hence, by Lemma~\ref{lem:acyclic} and Corollary~\ref{cor:jmp and bal}, $\Length^k(\W) = \Bal(\W) < 0$.  Similarly, if $\Ext(\W) = 0$ and $\M$ is not PIG, then $\W$ has an internal copy $\T$ in $|\W|\Unroll\Syn^k$ that is a circuit and has $\Bal(\T) = \Bal(\W)$, thus $\Length^k(\W) = \Bal(\W) = \Bal(\T) < 0$ by Lemma~\ref{lem:acyclic} and Corollary~\ref{cor:jmp and bal}.  If $\Ext(\W) > 0$, then $-\Bal(\W)(\Ext(\W))^{-1} = \Ratio(\W) \geq \RATIO^k > \Ratio^k$, thus $\Length^k(\W) < 0$.  Finally, if $\Ext(\W) < 0$, then $-\Bal(\W)(\Ext(\W))^{-1} = \Ratio(\W) \leq \Ratio^k$, thus $\Length^k(\W) \leq 0$. 
 
 $\ref{thm:tucker:length} \Rightarrow \ref{thm:tucker:negative-witness}$.  If $\Red^k$ has some circuit $\W = B_0, \ldots, B_j$ ($B_0 = B_j$), then $B_i \to B_{i+1}$ is not redundant for $i \in \Range{j}$ and, consequently, 
  \begin{align*}
    \Dist{\Lex^k}(A_0, B_j) &\leq \Dist{\Lex^k}(A_0, B_{j-1}) + \Lex^k(B_{j-1} \to B_j) \leq \ldots \\
                            &\leq \Dist{\Lex^k}(A_0, B_0) + \Lex^k(\W) = \Dist{\Lex^k}(A_0, B_j) + \Lex^k(\W).
  \end{align*}
 
 $\ref{thm:tucker:negative-witness} \Rightarrow \ref{thm:tucker:positive-witness}$.  Note that $\Dist{\Sep_{\Circ,\Len}}(\Syn^k, A_0, A) \geq \IDist{\Sep_{\Circ,\Len}}(\Red^k,A_0, A)$ for $A \in \A(\M)$ because every path of $\Red^k$ is a walk of $\Syn^k$.  For the other inequality, it suffices to prove that $\Sep(\W) \leq \IDist{\Sep_{\Circ,\Len}}(\Red^k, A_0, A)$ for every walk $\W = A_0, \ldots, A_j$ with $A = A_j$ and $j \leq n$.  We prove this fact by induction on $j$.  The base case $j = 0$ is trivial.  In the inductive step $j > 0$, let
  \begin{itemize}
    \item for $i \in \CRange{j}$, $\W_i$ be a walk of $\Red^k$ from $A_0$ to $A_i$ with $\Lex^k(\W_i) = \IDist{\Lex^k}(\Red^k, A_0, A_i)$, and
    \item $\W_{\Syn}$ be the walk obtained by traversing $A_{j-1} \to A_j$ after $\W_{j-1}$.
  \end{itemize}
  By the inductive hypothesis, $\Sep(\W) \leq \Sep(\W_{j-1}) + \Sep(A_{j-1} \to A_{j}) = \Sep(\W_{\Syn})$, thus $\Sep(\W) \leq \Sep(\W_j)$ when $A_{j-1} \to A_j$ is an edge of $\Red^k$.  Suppose, then, that $A_{j-1} \to A_{j}$ is redundant in $\Syn^k$.  In this case, taking into account that no edge of $\W_j$ is redundant, it follows by induction that $\Lex^k(\W_j) = \Dist{\Lex^k}(\Syn^k, A_0, A_j)$.  Consequently, there are only two possibilities for $\Lex^k(\W_j)$ and $\Lex^k(\W_\Syn)$ because
  \begin{displaymath}
    \Lex^k(\W_j) = \Dist{\Lex^k}(\Syn^k, A_0, A_j) > \Dist{\Lex^k}(\Syn^k, A_0, A_{j-1}) + \Lex^k(A_{j-1} \to A_{j}) = \Lex^k(\W_{\Syn}).
  \end{displaymath}

  \begin{discription}[label={\textbf{Case~$\mathbf{\arabic*}$:}}]
   \item $\Length^k(\W_j) > \Length^k(\W_\Syn)$.  If $\W$ is a cycle with $\Ext(\W) < 0$ and $\Ratio(\W) = \Ratio^k$, then $$|\Ext(\W)|\Ratio^k = -|\Ext(\W)|\Bal(\W)(\Ext(\W))^{-1} = \Bal(\W)$$ is integer.  Therefore, 
   \begin{align*}
   |\Ext(\W)|(\Length^k(\W_j) - \Length^k(\W_\Syn)) = & |\Ext(\W)|(\Bal(\W_j) - \Bal(\W_\Syn)) \\ 
   &+ \Bal(\W)(\Ext(\W_j) - \Ext(\W_\Syn))
   \end{align*}
   is also integer.  Then, as $|\Ext(\W)| \leq |\W| \leq n$, we gat that $\Length^k(\W_j) - \Length^k(\W_\Syn) \geq n^{-1}$. On the other hand, by definition, $\nose(\W_{\Syn}) \leq n$, $\Ext(\W_j) \geq -|\W_j| \geq -n$ and $\Ext(\W_\Syn) \leq |\W_\Syn| \leq n$.  Then, by \eqref{eq:sep bis},
  \begin{align*}
     \Sep_{\Circ,\Len}(\W_j) - &\Sep_{\Circ,\Len}(\W_{\Syn}) = \Len(\Length^k(\W_j) - \Length^k(\W_\Syn)) + \Extra(\Ext(\W_j) - \Ext(\W_\Syn)) \\
     & + 2(\nose(\W_j) - \nose(\W_\Syn)) \geq (4n)^3n^{-1} + 4n(-2n) - 2n > 0.
  \end{align*}

  \item $\Length^k(\W_j) = \Length^k(\W_\Syn)$ and $\Ext(\W_j) > \Ext(\W_\Syn)$.  In this case, \eqref{eq:sep bis} implies that
  \begin{align*}
     \Sep_{\Circ,\Len}(\W_j) - \Sep_{\Circ,\Len}(\W_{\Syn})  &= \Len(\Length^k(\W_j) - \Length^k(\W_\Syn)) + \Extra(\Ext(\W_j) - \Ext(\W_\Syn)) \\
     &\hphantom{= }+ 2(\nose(\W_j) - \nose(\W_\Syn)) \geq  4n - 2n > 0.
  \end{align*}
 \end{discription}
 Summing up, $\IDist{\Sep_{\Circ,\Len}}(\Red^k, A_0, A_j) = \Sep_{\Circ,\Len}(\W_j) > \Sep_{\Circ,\Len}(\W_{\Syn}) \geq \Sep_{\Circ,\Len}(\W)$ in the case when $A_{j-1} \to A_j$ is redundant.
  
  $\ref{thm:tucker:positive-witness} \Rightarrow \ref{thm:tucker:equivalence}$.  If $\M$ is equivalent to no $k$-multiplicative $(\Circ,\Len+1)$-CA model, then $\Syn^k$ has a cycle $\W$ with $\Sep_{\Circ,\Len}(\W) > 0$ by Theorem~\ref{thm:no_positive_cycles}.  Then, $\infty = \Dist{\Sep_{\Circ,\Len}}(\Syn^k, A_0, A) > \IDist{\Sep_{\Circ,\Len}}(\Red^k, A_0, A)$ for every $A \in \W$.
\end{proof}

Theorem~\ref{thm:tucker} yields the following algorithm to compute a $k$-multiplicative model equivalent to an input PCA model $\M$ when \kMult answers yes; $A_0$ is the initial arc of $\M$:
\begin{enumerate}
 \item Insert an arc intersecting $\LL(A)$ and $A$ for every $A \in \A(\M) \setminus \{A_0\}$ such that $\LL(A) \cap A = \emptyset$. (After this step, $\M$ is a connected model.)\label{alg:insert}
 \item Compute $\Ratio^k$ to obtain the weighting $\Lex^k$ of $\Syn^k$.\label{alg:ratio}
 \item Determine $\Dist{\Lex^k}(\Syn^k, A_0, A)$ for every $A \in \A(\M)$.\label{alg:dlex}
 \item Obtain $\Red^k$ by removing each redundant edge $A \to B$ of $\Syn^k$.\label{alg:red}
 \item Set $s(A) = \Dist{\Sep_{\Circ,\Len}(\Red^k, A_0, A)}$ for every $A \in \A(\M)$, where $\Len = (4n)^3$ and $\Circ = \Len\Ratio^k + 4n$.\label{alg:sep}
 \item Remove all the arcs inserted at Step~\ref{alg:insert}.
 \item Output $(C, \{s(A), s(A) + \Len \bmod \Circ \mid A \in \A(\M)\})$ for a circle $C$ with $|C| = \Circ$.\label{alg:model}
\end{enumerate}
Steps \ref{alg:insert} and \ref{alg:red}--\ref{alg:model} can be easily implemented in $O(n)$ time; just recall that $\Red^k$ is acyclic by Theorem~\ref{thm:tucker}.  In the following sections we discuss how to implement Steps \ref{alg:ratio}~and~\ref{alg:dlex}.  

\subsection{Step~\ref{alg:ratio}: computation of the ratios}

To efficiently compute $\Ratio^k$, the key is to observe that every greedy nose cycle $\W$ of $\Syn^k$ with $\Ext(\W) < 0$ has $\Ratio(\W) = \Ratio^k$.  A weaker form of this result, stating that at least one greedy nose cycle $\W$ has $\Ratio(\W) = \Ratio^k$, is already known for $k = 1$ \cite[Lemma~2]{SoulignacJGAA2017}.

\begin{lemma}\label{lem:ratio greedy}
 If\/ $\M$ is a connected PCA model and $k \in \ORange{\wrap}$, then\/ $\Ratio^k = \Ratio(\G_N)$ for every greedy nose cycle\/ $\G_N$  of\/ $\Syn^k$ with\/ $\Ext(\G_N) < 0$.
\end{lemma}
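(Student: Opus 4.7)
The plan is to show $\Ratio(\W) \leq \Ratio(\G_N)$ for every cycle $\W$ of $\Syn^k$ with $\Ext(\W) = -p < 0$, where $q = |\Ext(\G_N)|$; equivalently, $p\Bal(\G_N) \geq q\Bal(\W)$. The argument exploits the loop unrolling technique together with the planarity of Mitas' drawing.

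I would work in $\Copies\Unroll\Syn^k$ for $\Copies$ sufficiently large (say $\Copies = h^4$ with $h = \max\{10, n(k+1)\}$, following the proofs of Theorems~\ref{thm:equivalence}~and~\ref{thm:fast equivalence}). Fix a vertex $X \in \G_N$ and a copy $X_0$ of $X$ in the middle-row range of $\Copies\Unroll\M$. Let $\T_N$ denote the internal copy of $m\Unroll\G_N$ starting at $X_0$; since $X \in \G_N$ and $\G_N$ is greedy nose, $\T_N$ is itself a greedy nose walk in the unrolled graph. Let $\T_\W$ be the internal walk from $X_0$ obtained by first bridging from $X_0$ to a copy $Y_0$ of some vertex $Y \in \W$ and then iterating $\W$ for $m$ loops; such a bridge is available because $\M$ is connected and $\Copies$ is large enough.

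The core of the argument is the geometric dominance of $\T_N$ over $\T_\W$. At each vertex, $\Syn^k$ has at most one outgoing nose and one outgoing hollow; by Lemma~\ref{lem:jmp of noses and hollows} the nose arrow has $\Jmp \geq 0$ whereas the hollow arrow has $\Jmp = -1$, so the greedy nose walk always selects the higher-slope outgoing arrow. Combined with Theorem~\ref{thm:planarity}, an induction on walk length shows that $\Gr_0(\T_N)$ lies weakly above $\Gr_0(\T_\W)$ at every column where both are defined. To translate this into a ratio comparison, I would use Corollary~\ref{cor:jmp and bal} ($\Jmp(\T) = \Bal(\T) + \nose_b(\T)$ on internal walks) together with Lemma~\ref{lem:repetitive}: for large $\Copies$, consecutive copies of any vertex in $\Copies\Unroll\M$ advance in $\Row$ at a common periodic rate $\sigma$, so the row gains of $\T_N$ and $\T_\W$ equal $mq\sigma + O(1)$ and $mp\sigma + O(1)$ respectively, giving
\begin{align*}
 m\Bal(\G_N) + \nose_b(\T_N) &= mq\sigma + O(1), \\
 m\Bal(\W) + \nose_b(\T_\W) &= mp\sigma + O(1).
\end{align*}
Dividing by $mq$ and $mp$ respectively and letting $m \to \infty$ yields $\Ratio(\G_N) = \sigma - \beta_N$ and $\Ratio(\W) = \sigma - \beta_\W$, where $\beta_N$ and $\beta_\W$ are the asymptotic backward-nose densities per copy-advance.

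The inequality $\Ratio(\G_N) \geq \Ratio(\W)$ therefore reduces to $\beta_N \leq \beta_\W$. This should follow from the geometric dominance: each backward-nose arrow spans approximately $\Cols(\Copies\Unroll\M)$ columns in $\mathbb{R}^2$, so at a fixed large target column $C$, the number of logical copy-advances made by a walk is $\approx C/(\beta \Cols)$ and its row there is $\approx C\sigma/(\beta \Cols)$; the pointwise dominance $\Gr_0(\T_N) \geq \Gr_0(\T_\W)$ at column $C$ then forces $\beta_N \leq \beta_\W$. The main obstacle will be making this last step fully rigorous, since $\T_N$ and $\T_\W$ reach different columns after the same number of logical iterations and the bridge contributes $O(1)$ extra backward noses; the cleanest route is to pick an increasing sequence of target columns $C \to \infty$ and apply the dominance there, absorbing the finite bridge and the boundary error of Lemma~\ref{lem:repetitive} into lower-order terms that vanish in the limit.
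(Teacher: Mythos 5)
Your proposal shares the paper's broad strategy (loop unrolling, Mitas' drawings, planarity plus greedy dominance) but the mechanism you use to turn the geometric dominance into the ratio inequality is genuinely different, and both differences hide gaps. The decisive one is the ``bridge'' from $X_0$ to a copy $Y_0$ of a vertex of $\W$: you justify its existence by the connectivity of $\M$, but connectivity of the model does not give strong connectivity of $\Syn^k$ as a digraph (indeed the paper never relies on any such reachability). Since $\G_N$ and $\W$ may lie in different strongly connected components of $\Syn^k$, the whole construction of $\T_\W$ as a walk emanating from $X_0$ is not available in general. The paper avoids needing a bridge at all: it fixes copies of a vertex $A \in \G_N$ and of a vertex $B \in \W$ in the unrolled model, uses Lemma~\ref{lem:repetitive} to find a row period $z$, and then compares the segment walks $\T_N(j)$ (copies of $-z\Ext(\W)\Unroll\G_N$) and $\T(j)$ (copies of $-z\Ext(\G_N)\Unroll\W$). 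The periodic alignment forces $\Jmp(\T_N(j)) = \Jmp(\T(j))$ exactly, so the comparison of $\Bal$ reduces directly to a comparison of backward-edge counts, eliminating the limiting argument entirely.

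The second gap is the one you already flag: passing from ``$\Gr_0(\T_N)$ dominates $\Gr_0(\T_\W)$'' to ``$\beta_N \leq \beta_\W$''. Your sketch requires that the curves be comparable at matching \emph{columns} while the backward-nose density controls the column-advance, a step entangled with the same asymptotics you are taking a limit over; as stated it is circular rather than merely ``needing to be made rigorous''. The paper's pigeonhole avoids this: once planarity gives a uniform inequality $\sum a(j) \le \sum b(j) + 2$ across the aligned segments, integrality produces one index $j$ with $a(j)\le b(j)$, and a short exact computation with $\Bal$ and $\Jmp$ closes the argument with no limits. Finally, your argument quietly presupposes the SPCA case, whereas the lemma as stated covers PIG models; the paper handles PIG separately (unrolling a PIG model is not meaningful), and you should too if you keep this route.
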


\begin{proof}
 Let $\G_N$ be a greedy nose cycle of $\Syn^k$ with $\Ext(\G_N) < 0$ and $\W$ be a cycle of $\Syn^k$ with $\Ext(\W) < 0$.  The existence of $\G_N$ follows by Lemma~\ref{lem:greedy paths}.  We shall prove that $\Ratio(\G_N) \geq \Ratio(\W)$ to obtain that $\Ratio^k = \Ratio(\G_N)$.
  
 Suppose first that $\M$ is PIG and let $A_0$ be the initial arc of $\M$.  By hypothesis, $\G_N$ and $\W$ both contain the unique external nose $\LL(A_0) \to A_0$ of $\Syn^k$, thus $\Ext(\G_N) = \Ext(\W) = -1$ and $\Jmp(\G') = \Jmp(\W) = \Rows(\M)-1$.  Moreover, the subpaths $\G'$ of $\G_N$ and $\W'$ of $\W$ from $A$ to $\LL(A)$ are internal.  Then, as $\G'$ is greedy, Theorem~\ref{thm:planarity} implies that $\Gr_0(\G')$ is bounded below by $\Gr_0(\W')$, thus the number of backward edges of $\G'$ is not greater than the number of backward edges of $\W'$.  Consequently, $\Bal(\G') \geq \Bal(\W')$ by Corollary~\ref{cor:jmp and bal}, thus $\Bal(\G_N) \geq \Bal(\W)$ and, therefore, $\Ratio(\G_N) \geq \Ratio(\W)$.
 
 Suppose now that $\M$ is not PIG.   Fix a vertex $A$ of $\G_N$ and let $A_0 < \ldots < A_{\Copies}$ be the copies of $A$ in $\Copies\Unroll\Syn^k$ for $\Copies \gg n^6$.  Similarly, let $B_0 < \ldots < B_{\Copies}$ be the copies in $\Copies\Unroll\Syn^k$ of a vertex $B$ of $\W$.  Let $w = n^2$, $i = \Ext(\G_N)\Ext(\W)$, and $z \in \CRange{n}- \{0\}$ be such that row $w+zj$ is a copy of row $w$ for every $j \geq 0$ with $w+zj < \Rows(\lambda\Unroll\Syn^k)$.  The existence of $z$ follows by Lemma~\ref{lem:repetitive}.  Note that $i \leq n^2$ because every edge has $\Ext \in [-1,1]$.  Moreover, for $j \in \CRange{n}$, the copy $\T_N(j)$ of $-z\Ext(\W) \Unroll \G_N$ that starts at $A_{w+zij}$ in $\Copies\Unroll\Syn^k$ is internal and ends at $A_{w+zi(j+1)}$. Similarly, the copy $\T(j)$ of $-z\Ext(\G_N)\Unroll\W$ that starts at $B_{w+zij}$ in $\Copies\Unroll\Syn^k$ is internal and ends at $B_{w+zi(j+1)}$.  By definition, the rows of $\Copies\Unroll\Syn^k$ between $A_{w+zij}$ and $B_{w+zij}$ are copies of the rows between $A_{w}$ and $B_{w}$ and, consequently, $\Jmp(\T_N(j)) = \Jmp(\T(j))$.
  
 For $j \in \CRange{n}$, let $a(j)$ be the number of backward edges in $\T_N(j)$ and $b(j)$ be the number of backward noses in $\T(j)$.  Clearly, $\Q_N = \T_N(1) + \ldots + \T_N(n-1)$ is greedy and internal because $\G_N$ is greedy and $\T_N(j)$ is internal for $j \in \CRange{n}$.  Moreover, $\Q_N$ has $x = \sum_{j=1}^{n-1} a(j)$ backward edges.  Similarly, $\Q = \T(0) + \ldots + \T(n)$ is internal and has $y = \sum_{j=0}^n b(j)$ backward edges.  By Corollary~\ref{cor:walk drawing}, $\Gr_0(\Q_N)$ is a continuous curve from $\Pos_0(A_{w+z})$ to $\Pos_{x}(A_{w+zin})$, whereas $\Gr_1(\Q)$ is a continuous curve from $\Pos_1(B_{w})$ to $\Pos_{y+1}(B_{wzi(n+1)})$.  Since $\Q_N$ is greedy, Theorem~\ref{thm:planarity} implies that $\Gr_0(\Q_N)$ is bounded below by $\Gr_1(\Q)$.  Consequently, $x \leq y + 2$.  Then, since $a(j)$ and $b(j)$ are integer, there exists $j \in \Range{n} - \{0\}$ such that $a(j) \leq b(j)$. By definition, $\T_N(j)$ has exactly $-z\Ext(\W)$ copies of each edge in $\G_N$, thus $\Bal(\T_N(j)) = -z\Ext(\W)\Bal(\G_N)$.  Similarly, $\Bal(\T(j)) = -z\Ext(\G_N)\Bal(\W)$. Then, by Corollary~\ref{cor:jmp and bal}:
 \begin{align*}
  zi(\Ratio(\G_N) - \Ratio(\W)) &= -zi\Ext(\G_N)^{-1}\Bal(\G_N) + zi\Ext(\W)^{-1}\Bal(\W) \\
  &= -z\Ext(\W)\Bal(\G_N) + z\Ext(\G_N)\Bal(\W) \\
  &= \Bal(\T_N(j)) - \Bal(\T(j)) \\
  &= \Jmp(\T_N(j)) - a(j) - \Jmp(\T(j)) + b(j) \geq 0.
 \end{align*}
 As $zi > 0$, we conclude that $\Ratio(\G_N) \geq \Ratio(\W)$ and, therefore, $\Ratio(\G_N) = \Ratio^k$.
\end{proof}

The analogous of Lemma~\ref{lem:ratio greedy} for $\RATIO^k$ is stated below without proof.  When $\M$ is SPCA, $\Ratio^k(\M)$ and $\RATIO^k(\M)$ can be obtained in $O(n)$ time by considering a greedy nose cycle $\G_N$ with $\Ext(\G_N) < 0$ and a greedy hollow cycle $\G_H$ with $\Ext(\G_H) > 0$.  The cycles $\G_N$ and $\G_H$ exist by Lemma~\ref{lem:greedy paths} and can be computed in $O(n)$ time as in Corollary~\ref{cor:kRep}.  This yields another algorithm for \kMult that is just a restatement of the one discussed in Corollary~\ref{cor:kRep}: instead of looking for the intersection of $\G_N$ and $\G_H$, compare their ratios.  This algorithm is a simplification of the one designed by Kaplan and Nussbaum~\cite{KaplanNussbaumDAM2009} 
in which all the greedy cycles are traversed.  

\begin{lemma}
 If\/ $\M$ is an SPCA model and $k \in \ORange{\wrap}$, then\/ $\RATIO^k = \Ratio(\G_H)$ for every greedy hollow cycle\/ $\G_H$ of\/ $\Syn^k$ with\/ $\Ext(\G_H) > 0$.
\end{lemma}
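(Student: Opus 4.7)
The plan is to mimic the proof of Lemma~\ref{lem:ratio greedy} almost verbatim, interchanging the roles of noses and hollows and replacing $\Ext<0$ with $\Ext>0$ throughout, while exploiting a key simplification unique to the hollow case. Let $\G_H$ be any greedy hollow cycle of $\Syn^k$ with $\Ext(\G_H)>0$, which exists by Lemma~\ref{lem:greedy paths} since $\M$ is SPCA, and let $\W$ be an arbitrary cycle of $\Syn^k$ with $\Ext(\W)>0$. Since $\G_H$ is itself a competitor in the definition of $\RATIO^k$, it suffices to show $\Ratio(\G_H)\leq\Ratio(\W)$, from which $\Ratio(\G_H)=\RATIO^k$ follows.

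First I would fix vertices $A\in\G_H$ and $B\in\W$, take $\Copies\gg n^6$, and work in $\Copies\Unroll\Syn^k$. Let $A_0<A_1<\cdots$ and $B_0<B_1<\cdots$ be the copies of $A$ and $B$ in the unrolled model. Set $w=n^2$, $i=\Ext(\G_H)\Ext(\W)>0$, and apply Lemma~\ref{lem:repetitive} to extract a period $z\in\CRange{n}-\{0\}$ such that row $w-zj$ is a copy of row $w$ for every valid $j\geq0$. Since $\Ext(\G_H)>0$ forces unrolled traversals of $\G_H$ to descend in rows (lowering the copy index of $A$ by $\Ext(\G_H)$ per lap), I define $\T_H$ to be the internal copy of $z\Ext(\W)\Unroll\G_H$ starting at $A_w$ and, since $z\Ext(\W)\Ext(\G_H)=zi$, ending at $A_{w-zi}$. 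Analogously $\T$ is the internal copy of $z\Ext(\G_H)\Unroll\W$ from $B_w$ to $B_{w-zi}$. Internality holds for $w,\Copies$ sufficiently large, and the periodicity afforded by $z$ forces $\Jmp(\T_H)=\Jmp(\T)$ because both walks span the same row difference.

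The crucial observation is twofold. First, a copy of a greedy hollow cycle in the unrolled model is itself greedy hollow: for an interior copy $A_i$ of a vertex $A$, the local $\Fr,\Fl,\Hr,\Hl$ structure in $\Copies\Unroll\M$ matches that of $A$ in $\M$, so the hollows and noses starting at $A_i$ are in bijective correspondence with those starting at $A$. Second, by the purely local argument already used in the proof of Theorem~\ref{thm:fast equivalence}\,(\ref{thm:fast crossing}$\Rightarrow$\ref{thm:fast twister}), a backward nose at a rightmost vertex $X$ forces $\Hl(X)\neq\bot$, which in turn makes $X\to\Fl(X)$ a $1$-hollow, contradicting greediness. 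Hence $\T_H$ has zero backward noses, so by Corollary~\ref{cor:jmp and bal} we have $\Jmp(\T_H)=\Bal(\T_H)=z\Ext(\W)\Bal(\G_H)$. For $\T$, letting $b\geq0$ denote its number of backward noses, $\Jmp(\T)=\Bal(\T)+b=z\Ext(\G_H)\Bal(\W)+b$.

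Equating $\Jmp(\T_H)=\Jmp(\T)$ and dividing by $zi>0$ yields
\[
\frac{\Bal(\G_H)}{\Ext(\G_H)}-\frac{\Bal(\W)}{\Ext(\W)}=\frac{b}{zi}\geq 0,
\]
that is, $\Ratio(\G_H)\leq\Ratio(\W)$, as required. The main obstacle I anticipate is purely bookkeeping: choosing $w$ and $\Copies$ large enough to guarantee internality of $\T_H$ and $\T$, and correctly flipping the signs throughout the original argument because $\Ext>0$ makes unrolled copies descend rather than ascend in the row structure. Notably, unlike in Lemma~\ref{lem:ratio greedy}, no appeal to Theorem~\ref{thm:planarity} is needed: the greedy hollow walk has no backward noses outright, so the $\Bal$ values can be compared directly without invoking the planarity-based slope argument that was necessary to bound the backward-nose count of a greedy \emph{nose} walk.
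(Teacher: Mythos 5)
Your proof is correct and is the intended dualization of Lemma~\ref{lem:ratio greedy}, which the paper deliberately omits. You also correctly isolate a genuine simplification that a literal dualization would obscure: an internal greedy hollow walk has \emph{no} backward noses at all, by the local argument from Theorem~\ref{thm:fast equivalence} (a backward nose at a rightmost $X$ forces $\Hl(X)\neq\bot$, hence $X\to\Fl(X)$ is a $1$-hollow, which a greedy hollow walk must prefer). Consequently a single internal copy $\T_H$ suffices and $a=0\le b$ holds outright, so the $n$-copy pigeonhole and the Theorem~\ref{thm:planarity}-based slope comparison that Lemma~\ref{lem:ratio greedy} needs to extract some $j$ with $a(j)\le b(j)$ can both be dropped; the remaining skeleton (Lemma~\ref{lem:repetitive} to equate $\Jmp(\T_H)=\Jmp(\T)$, Corollary~\ref{cor:jmp and bal} to convert to $\Bal$, and dividing $z\Ext(\W)\Bal(\G_H)-z\Ext(\G_H)\Bal(\W)=b\ge 0$ by $zi>0$) is carried out with the correct sign flips for $\Ext>0$. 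The only looseness — treating $w$ and the subscripts on $A_{w-zi}$ interchangeably as row and copy indices, and the exact thresholds for internality — is inherited verbatim from the paper's own proof of Lemma~\ref{lem:ratio greedy} and is not a gap you introduced.
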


\begin{corollary}\label{cor:ratio computation}
 Given a connected PCA model\/ $\M$ and $k \in \ORange{\wrap}$, it takes $O(n)$ time to compute\/ $\Ratio^k$ and\/ $\RATIO^k$.
\end{corollary}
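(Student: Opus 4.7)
The plan is to combine Lemma~\ref{lem:ratio greedy}, its $\RATIO^k$-analogue stated just above, and the linear-time cycle extraction that already underlies Corollary~\ref{cor:kRep}. Since $\Ratio(\W) = -\Bal(\W)/\Ext(\W)$ can be computed in $O(|\W|)$ time once a cycle $\W$ is available, it will suffice to exhibit, in total linear time, one greedy nose cycle with $\Ext < 0$ and, when $\M$ is SPCA, one greedy hollow cycle with $\Ext > 0$.

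First, construct $\Syn^k$ together with the weighting $\Bal$ in $O(n)$ time via Theorem~\ref{thm:Syn^k algorithm}, noting that $\Ext(A \to B) \in \{-1,0,1\}$ is determined in $O(1)$ from the positions of $A$ and $B$ relative to the initial arc. By Theorem~\ref{thm:Syn^k to Syn^k+1 remove} every vertex of $\Syn^k$ has at most one outgoing nose and at most one outgoing hollow. Deleting each outgoing hollow whose tail also has an outgoing nose yields a subdigraph $\Syn_N$ of out-degree at most one whose directed walks are exactly the greedy nose walks of $\Syn^k$; symmetrically, deleting each outgoing nose whose tail also has an outgoing hollow gives the greedy-hollow functional digraph $\Syn_H$. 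Each of these subgraphs is a disjoint union of paths and $\rho$-shaped components, so a pointer-chasing traversal that marks visited vertices enumerates all their directed cycles in $O(n)$ time, accumulating $\Bal$ and $\Ext$ along the way so that $\Ratio$ is known as soon as each cycle closes.

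By Lemma~\ref{lem:greedy paths}, at least one cycle $\G_N$ of $\Syn_N$ satisfies $\Ext(\G_N) < 0$ and, if $\M$ is SPCA (a property recognisable in $O(n)$ time by scanning the arcs for one that crosses the point $0$), at least one cycle $\G_H$ of $\Syn_H$ satisfies $\Ext(\G_H) > 0$. Lemma~\ref{lem:ratio greedy} then gives $\Ratio^k = \Ratio(\G_N)$ and its analogue gives $\RATIO^k = \Ratio(\G_H)$; if instead $\M$ is PIG, then $\Syn^k$ has no external hollows, hence no cycle of $\Syn^k$ has $\Ext > 0$, and $\RATIO^k = \infty$ by definition. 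Summing the linear costs of building $\Syn^k$, of producing $\Syn_N$ and $\Syn_H$, and of traversing each functional digraph once yields the claimed $O(n)$ bound.

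The only delicate bookkeeping point is that a functional-digraph traversal may uncover cycles of the wrong $\Ext$-sign before finding a valid one; however, the total number of cycles in a functional digraph on $n$ vertices is at most $n$, each cycle contributes $O(|\W|)$ to the traversal, and the sums $\Bal$ and $\Ext$ are maintained incrementally, so inspecting every cycle until one of the desired sign is found still costs $O(n)$ overall. This is a routine detail rather than a genuine obstacle.
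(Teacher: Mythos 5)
Your proposal is correct and follows the same route the paper intends: combine Lemma~\ref{lem:ratio greedy} and its $\RATIO^k$-analogue with the linear-time construction of greedy nose/hollow cycles already carried out in Corollary~\ref{cor:kRep} (the paper states Corollary~\ref{cor:ratio computation} without its own displayed proof, relying on the sentences immediately preceding it, which say exactly this). Your additional details—recognising $\Syn_N$ and $\Syn_H$ as functional digraphs via Theorem~\ref{thm:Syn^k to Syn^k+1 remove}, handling the PIG case by observing that there are no external hollows so $\RATIO^k=\infty$, and arguing that inspecting all (disjoint) cycles costs $O(n)$ total—are all consistent with the paper's machinery and fill in the bookkeeping correctly.
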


\subsection{Step~\ref{alg:dlex}: determining the distances according to $\Lex$}

Let $A_0$ be the initial arc of $\M$.  The key to efficiently determine $\Dist{\Lex^k}(A_0, A)$ is to observe that some path $\W$ of $\Syn^k$ from $A_0$ to $A$ with $\Lex^k(\W) = \Dist{\Lex^k}(A_0, A)$ is ``dually greedy''; we remark that a restricted version of this fact is already known for $k=1$ \cite[Lemma~4]{SoulignacJGAA2017}.   A walk $\W = B_0, \ldots, B_j$ of $\Syn^k$ is \emph{greedy anti-hollow} (resp.\ \emph{anti-nose}) when $\Syn^k$ has no hollows (resp.\ noses) reaching $B_{i+1}$ when $B_i \to B_{i+1}$ is a nose (resp.\ hollow), for $i \in \Range{j}$.  In other words, $\W$ is greedy anti-hollow (resp.\ anti-nose) when hollows (resp.\ noses) are preferred to noses (resp.\ hollows) in a backward traversal of $\W$.  The walk $\W$ is a \emph{dually greedy hollow} (resp.\ \emph{nose}) when there exists $i \in \CRange{j}$ such that:
\begin{itemize}
 \item $B_0, \ldots, B_i$ is a greedy hollow (resp.\ nose) walk of $\Syn^k$, and
 \item $B_i, \ldots, B_j$ is a greedy anti-nose (resp.\ anti-hollow) walk of $\Syn^k$.
\end{itemize}

\begin{lemma}\label{lem:semi-greedy}
 Let\/ $\M$ be a connected PCA model that is equivalent to a $k$-multiplicative model for $k \in \ORange{\wrap}$, and $A_0$ be the initial arc of\/ $\M$.  If\/ $\W$ is a path of\/ $\Syn^k$ from $A_0$ to a vertex $A$, then\/ $\Syn^k$ has a dually greedy nose (resp.\ hollow) walk\/ $\W_D$ from $A_0$ to $A$ with\/ $\Lex^k(\W_D) = \Lex^k(\W)$.
\end{lemma}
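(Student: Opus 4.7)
My plan is to establish the dually greedy nose case; the dually greedy hollow case is entirely symmetric (exchanging the roles of noses and hollows). I would define $\W_D$ as a walk from $A_0$ to $A$ that is extremal, among all walks from $A_0$ to $A$ whose $\Lex^k$ weight equals $\Lex^k(\W)$, with respect to the lexicographic order on its sequence of rows (preferring higher rows as early as possible). I would then argue that such an extremal walk must be dually greedy nose, splitting at its peak row into a greedy nose initial segment and a greedy anti-hollow terminal segment.

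The central tool would be a weight-preserving \emph{exchange lemma} derived from the diagonal structure of noses in Lemma~\ref{lem:nose equivalences} and Figure~\ref{fig:diagonal}. Whenever a vertex $B_i$ of $\W_D$ admits both a nose $B_i \to X$ and a hollow $B_i \to Y$ in $\Syn^k$, the diagonal identities show that a sub-walk of $\W_D$ beginning with the hollow can be replaced by a parallel sub-walk beginning with the nose (and ending at the same vertex) whose total $\Bal$ and total $\Ext$ coincide with those of the original; this uses the $\Fr$--$\Hr$ commutation expressed in \ref{lem:nose equivalences:fr}--\ref{lem:nose equivalences:hr r} of Lemma~\ref{lem:nose equivalences}, which lets one trade a $j$-nose plus a hollow for a hollow plus a $j$-nose along the parallelogram. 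Since $\Lex^k = (\Bal + \Ratio^k \Ext, \Ext)$ depends only on $\Bal$ and $\Ext$, the exchange preserves $\Lex^k$. If $\W_D$ were not greedy nose on its initial ascending portion, this exchange would produce another walk of the same $\Lex^k$ but with a higher row appearing earlier, contradicting extremality; a symmetric exchange on the tail, based on the transposed identities \ref{lem:nose equivalences:fl}--\ref{lem:nose equivalences:hl l}, would force the terminal portion of $\W_D$ to be greedy anti-hollow.

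The main obstacle is constructing the parallel sub-walk used in the exchange lemma rigorously. The naive diamond suggested by Figure~\ref{fig:diagonal} typically involves short noses that belong to $\Syn^k_*$ but not to $\Syn^k$, since $\Syn^k$ retains only the longest nose from each vertex. My plan for handling this is to \emph{extend} the exchange window until all the noses involved are long: Theorem~\ref{thm:Syn^k to Syn^k+1 remove} describes precisely how short noses are absorbed into longer ones, so the extended parallel sub-walk can be traced through the backbone of $\Syn^k$ row by row. The combinatorial identities on $\Bal$ and $\Ext$ that guarantee $\Lex^k$-preservation are stable under this extension, but verifying them calls for a careful case analysis based on the relative positions of the vertices in the Mitas' drawing, with Theorem~\ref{thm:planarity} ensuring that the extended window is topologically well-behaved and does not wrap around the circle in an uncontrolled way.
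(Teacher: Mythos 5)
Your strategy is genuinely different from the paper's. The paper works in the unrolled model $\Copies\Unroll\Syn^k$: it takes the copy $\T$ of $\W$, a greedy nose path $\T_N$ issuing from the first vertex of $\T$, and a greedy anti-hollow walk $\T_H$ ending at the last vertex of $\T$, shows via Theorem~\ref{thm:planarity} that $\Gr(\T)$ is bounded above by both $\Gr(\T_N)$ and $\Gr(\T_H)$, proves that $\T_N$ and $\T_H$ must cross, and concatenates them at the crossing. The quantities $\Bal$ and $\Ext$ then come out equal via Corollary~\ref{cor:jmp and bal}, with no local exchanges at all. You instead propose a local exchange argument on $\Syn^k$ itself, stitched together with an extremality condition on row sequences.

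There is a genuine gap, and it sits exactly where you flag the ``main obstacle.'' Two issues make the exchange approach much harder to close than your sketch suggests. First, your extremality order — preferring higher rows early in the walk — is not well defined for the walks that actually arise: in an SPCA model the row numbering is cyclic, external edges jump from row $\Rows(\M)-1$ back to row $0$, and the walk $\W$ (and hence $\W_D$) is not confined to the interior of $\Syn^k$. A walk and its exchanged replacement may traverse external edges a different number of times, so the row sequences are not comparable entrywise. Second, and more fundamentally, Theorem~\ref{thm:planarity} — which you invoke to keep the ``extended window'' from wrapping — only guarantees planarity for \emph{internal} walks of $\Syn^k$. When the exchange pushes you through a short nose (present in $\Syn^k_*$ but removed from $\Syn^k$), the detour you trace through $\Syn^k$ along the backbone is not a priori internal, and without internality the planar curves $\Gr_p$ give you no control. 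The paper avoids both problems by doing the entire argument in $\Copies\Unroll\Syn^k$ with $\Copies$ chosen using a twister: there, all relevant copies are internal by construction, the rows are genuinely linearly ordered in the strip under consideration, and planarity applies globally, so no local exchange or extremality principle is needed. To make your route work you would essentially have to reprove the unrolling machinery inside the exchange lemma, at which point you might as well use the paper's global argument; as written, the proposal does not establish that the extended exchange preserves $\Bal$ and $\Ext$, nor that the extremal walk exists in a meaningful sense.
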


\begin{proof}
 We only prove the existence of the dually greedy nose walk as the existence of the dually greedy hollow walk is analogous.  Suppose first that $\M$ is not PIG.  By Theorem~\ref{thm:fast equivalence}, $\Syn^k$ has a twisting bound $\Spiral$.  
 Let $\Copies \gg h^9$ for $h = \max\{10+\Spiral, n(k+1)\}$, and consider a copy $B$ of $A_0$ in $\Copies\Unroll\Syn^k$ with $\Row(B) \in (h^4, h^4+h]$.  Let $\T$ be the copy of $\W$ in $\Copies\Unroll\Syn^k$ that starts at $B$ and ends at a copy $X$ of $A$, $\T_N$ be the greedy nose path of $\Copies\Unroll\Syn^k$ with $|\T_N| = 3h^2$ that starts at $B$ and ends at a vertex $Y$, and $\T_H$ be the greedy anti-hollow walk of $\Copies\Unroll\Syn^k$ with $|\T_H| = 7h^4$ that starts at a vertex $Z$ and ends at $X$.  Moreover, let $x$, $y$, and $z$ be the number of backward edges in $\T$, $\T_N$, and $\T_H$, respectively.
 
 \begin{figure}
  \centering 
  \begin{tabular}{c@{\hspace{3mm}}c}
    \includegraphics{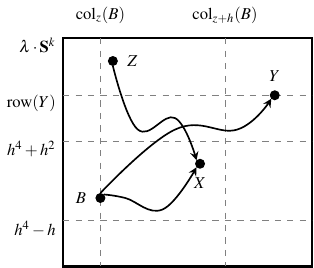} & \includegraphics{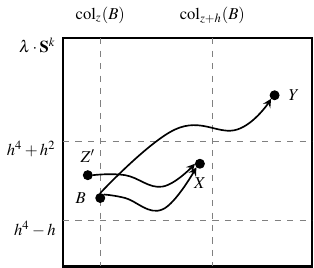} \\ 
    (a) & (b) 
  \end{tabular}

	\caption[]{Scheme for the proof of Lemma~\ref{lem:semi-greedy}}\label{fig:dually}
 \end{figure}  

 By Lemma~\ref{lem:greedy paths}, $\Syn^k$ has a greedy nose cycle $\G$ with $\Ext(\G) < 0$.  If some vertex of $\G$ has a copy in $\T_N$, then a copy $\T_G$ of $\G$ is included in $\T_N$ because $\T_N$ is greedy.  Otherwise, $\T_N$ contains the copy $\T_G$ of a greedy nose cycle disjoint from $\G$ that also has $\Ext < 0$ by Theorem~\ref{thm:equivalence}.  Thus, whichever the case, $\Jmp(\T_N) > 0$. Moreover, $\T_N$ visits at least $2h$ copies of $\T_G$ because $|\T_G| \leq h$ and $|\T_N| = 3h^2$.  Hence, $y > h$ because $h \geq \Spiral$ and $\Spiral$ is a twisting bound.  
 As usual, Lemma~\ref{lem:jmp of noses and hollows} implies that $\T$ and $\T_N$ are internal, $\T$ visits vertices with $\Row \in [h^4-h,h^4+2h^2]$, and $\T_N$ visits vertices with $\Row \in [h^4-2h^2, h^4+4h^3]$.  Moreover, by Theorem~\ref{thm:planarity}, $\Gr_z(\T)$ is bounded above by $\Gr_z(\T_N)$ because both start at $\Pos_{z}(B)$ and $\T_N$ is greedy.  
 
 Similarly, $\T_H$ is internal and visits vertices with $\Row \in [h^3, h^5]$ by Lemma~\ref{lem:jmp of noses and hollows}, thus $\Gr_{z}(\T)$ is bounded above by $\Gr_x(\T_H)$ because both end at $\Pos_{z+x}(X)$ and $\T_H$ is greedy anti-hollow.  Consider the following alternatives to prove that $\Gr_z(\T_N)$ and $\Gr_x(\T_H)$ share a point $p$.
 \begin{discription}[label={\textbf{Case~\arabic*: }}]    
   \item $\Jmp(\T_H) < 0$.   By Corollary~\ref{cor:jmp and bal}, $|\Jmp(\T_H)| \geq 7h^3$ because $\T_H$ traverses at least $7h^3$ copies of $Z$, thus $\Row(X) \leq \Row(Y) \leq \Row(Z)$.  Then, as $x \leq |\T| \leq h < y$ it follows that $\Gr_z(\T_N)$ and $\Gr_x(\T_H)$ have a point in common (\cref{fig:dually}(a)).
    
   \item $\Jmp(\T_H) \geq 0$.  Let $q \in [z, x+z]$.  Recall that $\Gr_{z}(\T)$ is bounded above by $\Gr_x(\T_H)$.  Then,  Lemma~\ref{lem:jmp of noses and hollows} implies that $\Row(Z') \in [h^4-h, h^4+4h^2]$ for every $Z'$ such that $\Gr_x(\T_H)$ traverses $\Pos_{q}(Z')$.  As there are less than $6h^2$ such possible $Z'$ for each $q$ and $x \leq h$, it follows that $\Gr_x(\T_H$) passes through $\Pos_{z-1}(Z')$ for some vertex $Z'$, i.e., $x < z$.  Then, $\Gr_z(\T_N)$ and $\Gr_x(\T_H)$ share some point (\cref{fig:dually}(b)).
 \end{discription}
    
 Summing up, $\Copies\Unroll\Syn^k$ has a walk $\T_D$ such that $\Gr_z(\T_D)$ starts at $\Pos_z(B_0)$, takes the arrows of $\Gr_z(\T_N)$ until reaching $p$, and then it takes the arrows of $\Gr_x(\T_H)$ until reaching $\Pos_{x+z}(B)$.  The path $\T_D$ is dually greedy by construction, and so is the walk $\W_D$ of $\Syn^k$ whose copy is $\T_D$.  Moreover, $\T_D$ has $x$ backward edges and $\Jmp(\T_D) = \Jmp(\T)$.  Then, by Corollary~\ref{cor:jmp and bal}, $\Bal(\W_D) = \Bal(\T_D) = \Jmp(\T_D) - x = \Jmp(\T) - x = \Bal(\T) = \Bal(\W)$, whereas $\Ext(\W_D) = \Ext(\W)$ is the number of copies of $A_0$ between $B$ and $X$ in $\Copies\Unroll\Syn$.  Therefore, $\Lex^k(\W_D) = \Lex^k(\W)$ as desired.
 
 The proof for the case in which $\M$ is PIG is analogous, although loop unrolling is avoided.  We succinctly describe it here for the sake of completeness.  By Lemma~\ref{lem:cycle ext < 0}, $\Syn^k$ has a greedy nose path $\G$ that is internal, starts at $A_0$, and ends at $\LL(A_0)$.  The path $\W$ is also internal because it starts at $A_0$ and, thus, it cannot take the unique external nose $\LL(A_0) \to A_0$.  By Theorem~\ref{thm:planarity}, $\Gr_1(\G)$ is bounded below by $\Gr_1(\W)$.  Let $\W_H$ be the maximal greedy anti-hollow walk that is internal and has a drawing $\Gr_p(\W_H)$, $p\geq0$, that ends at the same point as $\Gr_1(\W)$.  By Theorem~\ref{thm:planarity}, $\Gr_p(\W_H)$ is bounded below by $\Gr_1(\W)$.  Moreover, $\Gr_p(\W_H)$ is bounded above by $x \mapsto \Rows(\M)-1$ because $\Ext(\W_H) \leq 0$ as $\M$ is PIG.  Then, $\Gr_1(\G)$ and $\Gr_p(\W_H)$ share a point $q$ because there is a finite number of vertex positions that $\Gr(\W_H)$ can traverse without either reaching the column 0 or taking the external nose.  Moreover, as above, the path $\W_D$ whose drawing takes $\Gr_1(\G)$ until $q$ and then takes $\Gr_1(\W_H)$ is dually greedy and has $\Lex^k(\W_D) = \Lex^k(\W)$.
\end{proof}

\begin{lemma}\label{lem:lex computation}
 If\/ $\M$ is a connected PCA model that is equivalent to a $k$-multiplicative model for $k \in \ORange{\wrap}$, then\/ $\Dist{\Lex^k}$ can be computed in $O(n)$ time.
\end{lemma}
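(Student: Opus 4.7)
The plan is to compute $\Dist{\Lex^k}(A_0,\cdot)$ for the initial arc $A_0$ by dynamic programming on a ``preferred predecessor'' functional graph, exploiting Lemma~\ref{lem:semi-greedy} to restrict attention to dually greedy nose walks.

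As observed in the proof of Corollary~\ref{cor:kRep} (via Lemma~\ref{lem:nose equivalences} and Theorem~\ref{thm:Syn^k to Syn^k+1 remove}), each vertex of $\Syn^k$ has at most one outgoing hollow and at most one outgoing nose; the symmetry of Lemma~\ref{lem:nose equivalences} yields the analogous statement for incoming edges. Hence the greedy-nose successor $\sigma(V)$ (the outgoing nose target when present, else the outgoing hollow target) and the greedy-anti-hollow predecessor $\pi(A)$ (the incoming hollow source when present, else the incoming nose source) are well defined. I would first compute $\sigma$ and $\pi$ for every vertex, and then iterate $A_0,\sigma(A_0),\sigma^2(A_0),\ldots$ until the first revisited vertex to obtain the greedy nose walk $N(A_0)$, recording for each visited $V$ the weight $f(V)=\Lex^k$ of the corresponding prefix; this all takes $O(n)$ time.

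By Lemma~\ref{lem:semi-greedy}, every $\Dist{\Lex^k}(A_0,A)$ is realised by a dually greedy nose walk, i.e., a prefix of $N(A_0)$ ending at some meeting vertex $B$ followed by the unique greedy anti-hollow walk from $B$ to $A$. The last edge of that anti-hollow walk is necessarily $\pi(A)\to A$, so I obtain the recurrence
\begin{equation*}
\Dist{\Lex^k}(A_0,A) = \max\bigl\{X(A),\,Y(A)\bigr\},
\end{equation*}
with $X(A)=f(A)$ when $A\in N(A_0)$ (and $-\infty$ otherwise), $Y(A)=\Dist{\Lex^k}(A_0,\pi(A))+\Lex^k(\pi(A)\to A)$ when $\pi(A)\neq\bot$ (and $-\infty$ otherwise), plus the trivial base case $\Dist{\Lex^k}(A_0,A_0)=(0,0)$. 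This turns the task into a DP on the functional graph of $\pi$, whose weakly connected components are rho-shaped: acyclic tree branches feeding into at most one $\pi$-cycle or ending at a sink $\bot$.

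The main obstacle is the $\pi$-cycles, where the recurrence is itself cyclic. The key observation is that every $\pi$-cycle is a cycle of $\Syn^k$; and because $\M$ is equivalent to a $k$-multiplicative model, Theorem~\ref{thm:tucker} guarantees $\Lex^k(\W)<(0,0)$ for every cycle $\W$ of $\Syn^k$. Hence traversing a $\pi$-cycle more than once strictly decreases $\Lex^k$, so on each cycle of length $m$ the optimum is realised by unrolling the recurrence at most once around it. Concretely, for a cycle $c_0,\ldots,c_{m-1}$ (with $\pi(c_i)=c_{(i+1)\bmod m}$), I precompute the prefix sums of the edge weights along the cycle together with prefix and suffix maxima of the $X$-values, from which $\Dist{\Lex^k}(A_0,c_i)$ can be read off for every cycle vertex in $O(m)$ total. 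The remaining off-cycle vertices are then settled in topological order of the reverse-$\pi$ graph using the recurrence directly. Summing over components, all phases take $O(n)$ time, proving the claim.
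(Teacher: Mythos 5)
Your proposal is correct and follows essentially the same path as the paper's proof: both use Lemma~\ref{lem:semi-greedy} to reduce to dually greedy nose walks, compute the greedy nose walk from $A_0$ together with its per-vertex prefix weights, form the greedy anti-hollow predecessor map $\pi$ (the paper's $x$), and evaluate a recurrence on the resulting functional graph. The single genuine difference is how the $\pi$-cycles of that functional graph are discharged. The paper proves that on each cycle the vertex $A_W$ with the smallest $\phi$ (its index on the greedy nose walk, or $n+1$ if absent) already has $\Dist{\Lex^k}(A_0,A_W)=\alpha(A_W)$, because a nontrivial anti-hollow suffix into $A_W$ would have to meet the greedy nose prefix at another cycle vertex with strictly larger $\phi$, contradicting the minimality of $\phi(A_W)$; deleting the one $D$-edge into each such $A_W$ turns the functional graph into a forest, on which $\psi$ is then evaluated from the roots. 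You instead keep each cycle intact, invoke Theorem~\ref{thm:tucker}(\ref{thm:tucker:length}) to bound the unrolling at one period, and solve the resulting circular recurrence by a prefix-sum / prefix-and-suffix-maximum scan (be slightly careful here: the quantity that must be max-scanned is $X(c_j)$ adjusted by the cyclic prefix sum of the edge weights, not the raw $X$-values). Both devices cost $O(m)$ per cycle, so the overall $O(n)$ bound is unchanged; the paper's cut is a little slicker in that it avoids a bespoke cycle subroutine, while yours is more mechanical and dispenses with the $\phi$-minimality observation.
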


\begin{proof}
 We prove that the next algorithm is linear and computes a function $\psi = \Dist{\Lex^k}$.
 \begin{enumerate}
  \item Let $\G = A_0, \ldots, A_p$ be the maximal greedy nose path from $A_0$.  For $i \in \CRange{p}$, let $\phi(A_i) = i$ and $\alpha(A_i) = \Lex^k(A_0, \ldots, A_i)$.  For $A \not\in \G$, let $\phi(A) = n+1$ and $\alpha(A) = (-\infty, -\infty)$.\label{alg:dlex:W}
  \item For $A \in \A(\M)$, let $x(A)$ be the unique vertex that precedes $A$ in every greedy anti-hollow path of $\Syn^k$ that traverses $A$.  Let $D$ be the digraph with a vertex $v(A)$ and an edge $v(x(A)) \to v(A)$ for every $A\in\A(\M)$.\label{alg:dlex:D}
  \item For each cycle $W$ of $D$, let $A_W$ be arc of $\A(\M)$ with minimum $\phi$ among those arcs $A$ such that $v(A) \in W$. The digraph $F$ that is obtained after removing $v(x(A_W)) \to A_W$ for every cycle $W$ of $D$ is a forest: each root has in-degree $0$, whereas $v(x(A))$ is the parent of $A$ for each edge $v(x(A)) \to v(A)$. \label{alg:dlex:F}
  \item The algorithm outputs the function $\psi$ below, that is well defined because $F$ is a forest:\label{alg:dlex:psi}
  \begin{equation*}
   \psi(A) = \begin{cases}
                   \alpha(A) & \text{if $A$ is a root of $F$} \\
                   \max\{\psi(x(A)) + \Lex^k(x(A) \to A), \alpha(A)\} & \text{otherwise} \\
                  \end{cases}
  \end{equation*}
 \end{enumerate}
 
 To see that the algorithm is correct, we prove that $\psi(A) = \Dist{\Lex^k}(A_0, A)$ for every $A \in \A(\M)$.  By Theorem~\ref{thm:tucker}, $\Lex^k(\W) < (0,0)$ for every cycle $\W$ of $\Syn^k$, thus $\Dist{\Lex^k}(A_0, A) = \IDist{\Lex^k}(A_0, A)$ is well defined.  By Lemma~\ref{lem:semi-greedy},
\begin{displaymath}
  \Dist{\Lex^k}(A_0, A) = \max\{\Lex^k(\W) \mid \W \text{ is a dually greedy hollow path from $A_0$ to $A$}\}.
\end{displaymath}
 In other words, there exists a dually greedy nose path $\G_D$ from $A_0$ to $A$ with $\Lex^k(\G_D) = \Dist{\Lex^k}(A_0, A)$.  By definition, $\G_D = \G' + \G_A'$ where $\G' = A_0, \ldots, A_q$, $q\in\CRange{p}$, is a subpath of the greedy nose path $\G$ computed at Step~\ref{alg:dlex:W} and $\G_A'$ corresponds to the path from $x(A_q)$ to $x(A)$ in the digraph $D$ computed at Step~\ref{alg:dlex:D}.  The proof that $\psi(A) = \Lex^k(\G_D)$ is by induction on the length of the unique path of $F$ from a root to $A$.  
 
 In the base case, $v(A)$ is a root of $F$.  Clearly, $\Lex^k(\G_D) \geq \alpha(A)$ by Step~\ref{alg:dlex:W}, whereas $\psi(A) = \alpha(A)$ by Step~\ref{alg:dlex:psi}.  Suppose, to obtain a contradiction, that $\Lex^k(\G_D) > \alpha(A)$.   Then, $A \neq A_q$ and, moreover, $B \neq A_i$ for every $B \in \G_A'$ and every $i \in \Range{q}$ because $\G_D$ is a path.  By Step~\ref{alg:dlex:W}, it follows that $\phi(B) > \phi(A_q)$, thus $\G_A'$ corresponds also to a path of the digraph $F$ computed at Step~\ref{alg:dlex:F}.  But this is impossible because $v(A)$ is a root of $F$.  Hence, $\Lex^k(\G_D) = \alpha(A) = \psi(A)$ when $v(A)$ is a root of $F$.
 
 In the inductive step, $v(x(A))$ is the parent of $v(A)$ in $F$.  Clearly, $\Lex^k(\G_D) \geq \alpha(A)$ by Step~\ref{alg:dlex:W}, whereas $\Lex^k(\G_D) \geq \Lex^k(\W) + \Lex^k(x(A) \to A)$ for every dually greedy nose path $\W$ from $A_0$ to $x(A)$.  Hence, $\Lex^k(\G_D) \geq \psi(A)$ follows by Step~\ref{alg:dlex:psi} and the inductive hypothesis.  Conversely, if $A = A_q$, then $\Lex^k(\G_D) = \alpha(A)$, whereas if $A \neq A_q$, then $\G_D = \W, A$ for a dually greedy path $\W$ from $A_0$ to $x(A)$.  Then,  $\Lex^k(\G_D) \leq \psi(A)$ also follows by Step~\ref{alg:dlex:psi} and the inductive hypothesis.

 Regarding the time complexity, $\Syn^k$ is computed in $O(n)$ time via Theorem~\ref{thm:Syn^k algorithm} before the algorithm is invoked.  Then, the greedy nose path $\G$ of Step~\ref{alg:dlex:W} can be obtained in $O(n)$ time, while $\phi$ and $\alpha$ are calculated in $O(n)$ time with a single traversal of $\G$.  Similarly, Step~\ref{alg:dlex:D} is implemented in $O(n)$ time by traversing the edges of $\Syn^k$ in a backward direction.  Step~\ref{alg:dlex:F} consumes $O(n)$ time as well as each vertex of $D$ has at most one in-neighbor.  Finally, $\psi$ is calculated in $O(n)$ time at Step~\ref{alg:dlex:psi} with a traversal of $F$ from the roots to its leaves.
\end{proof}

\begin{theorem}
 The problem $\kMult$ can be solved in $O(n)$ time for every PCA model\/ $\M$ and every $k \in \Range{\wrap}$.  The algorithm outputs either a $k$-multiplicative model equivalent to\/ $\M$ or a negative certificate that can be authenticated in $O(n)$ time. 
\end{theorem}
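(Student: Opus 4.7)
The plan is to assemble the pieces developed in Sections~\ref{sec:recognition} and~\ref{sec:representation} into a single linear-time certifying procedure. First, the trivial case $k=0$ is handled separately: every PCA model is $0$-multiplicative, so any even $(\Circ,\Len+1)$-CA model equivalent to $\M$ works, and such a model can be produced in $O(n)$ time by placing arcs uniformly on a circle. For $k \in \ORange{\wrap}$, I would begin by invoking Corollary~\ref{cor:kRep} to decide \kMult in $O(n)$ time; if the answer is no, the pair of greedy cycles $(\G_H, \G_N)$ returned by that corollary is an $O(n)$-authenticatable negative certificate, and we are done.

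Assume now the answer is yes. The plan is to execute the seven-step procedure laid out after Theorem~\ref{thm:tucker}. Step~\ref{alg:insert} augments $\M$ with at most $n$ auxiliary arcs so that $\M$ becomes connected; this is immediate in $O(n)$ time and does not affect the distances between the original arcs of $\A(\M)$. Step~\ref{alg:ratio} computes $\Ratio^k$, which by Corollary~\ref{cor:ratio computation} takes $O(n)$ time (via one greedy nose cycle), and from it one obtains the weighting $\Lex^k$ on the edges of $\Syn^k$ in constant time per edge. Step~\ref{alg:dlex} computes $\Dist{\Lex^k}(A_0, A)$ for all $A$ in $O(n)$ time by Lemma~\ref{lem:lex computation}. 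With $\Dist{\Lex^k}$ in hand, Step~\ref{alg:red} builds $\Red^k$ by scanning each edge of $\Syn^k$ once and discarding those identified as redundant, which is $O(n)$ because $\Syn^k$ has $O(n)$ edges by Theorem~\ref{thm:Syn^k algorithm}.

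For Step~\ref{alg:sep}, Theorem~\ref{thm:tucker} guarantees that $\Red^k$ is acyclic, so a single topological-order traversal computes $\IDist{\Sep_{\Circ,\Len}}(\Red^k, A_0, A)$ for every $A$ in $O(n)$ time; Theorem~\ref{thm:tucker} moreover guarantees that this equals $\Dist{\Sep_{\Circ,\Len}}(\Syn^k, A_0, A)$, hence, by Theorem~\ref{thm:no_positive_cycles}, the values $s(A)$ produced form a feasible solution to $\F_{\Circ,\Len}^k(\M)$, yielding a $k$-multiplicative $(\Circ,\Len+1)$-CA model equivalent to $\M$. Steps~\ref{alg:model} and the preceding removal of the auxiliary arcs are straightforward in $O(n)$. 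Correctness follows from Theorem~\ref{thm:tucker}: the chosen $\Circ$ and $\Len$ satisfy the hypotheses of that theorem because $\Ratio^k < \RATIO^k$ by $\ref{thm:tucker:equivalence} \Rightarrow \ref{thm:tucker:bound}$.

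I do not foresee a substantial new obstacle here: the entire theorem is a packaging statement assembling Corollary~\ref{cor:kRep}, Theorem~\ref{thm:tucker}, Corollary~\ref{cor:ratio computation}, and Lemma~\ref{lem:lex computation}. The only subtlety worth spelling out explicitly is that the auxiliary arcs inserted at Step~\ref{alg:insert} to connect $\M$ do not invalidate the final output: the beginning points $s(A)$ are computed for every arc (original or inserted) using the synthetic graph of the connected model, and removing the inserted arcs at the end does not disturb the relative order of the extremes of the original arcs, so the output is indeed equivalent to the input $\M$ and is $k$-multiplicative. The authentication of the negative certificate in the no-instance is exactly the one from Corollary~\ref{cor:kRep}: rebuild $\Syn^k$ in $O(n)$ via Theorem~\ref{thm:Syn^k algorithm} and check that $\G_H$ and $\G_N$ are vertex-disjoint cycles with $\Ext(\G_H)\Ext(\G_N) < 0$.
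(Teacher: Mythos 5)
Your proposal is correct and takes essentially the same route as the paper: decide via Corollary~\ref{cor:kRep}, return its pair of greedy cycles as the negative certificate, and otherwise run the seven-step procedure after Theorem~\ref{thm:tucker}, with the $O(n)$ bound following from Corollary~\ref{cor:ratio computation}, Lemma~\ref{lem:lex computation}, and the acyclicity of $\Red^k$. The only quibble is the $k=0$ case, which you over-engineer (and ``placing arcs uniformly'' would not by itself give equivalence); the paper simply returns $\M$ unchanged, since every PCA model is $0$-multiplicative by definition.
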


\begin{proof}
 If $k = 0$, the algorithm returns $\M$ in $O(1)$ time.  Otherwise, $O(n)$ time spent by Corollary~\ref{cor:kRep} to decide if $\M$ is equivalent to some $k$-multiplicative model.  If the answer is no, then a negative certificate is obtained as a by-product.  If the answer is yes, then the algorithm implied by Theorem~\ref{thm:tucker} (Section~\ref{sec:representation}) is executed to build the $k$-multiplicative model $\Unit$ equivalent to $\M$.  By Corollary \ref{cor:ratio computation} and Lemma~\ref{lem:lex computation}, this last step requires $O(n)$ time as well.
\end{proof}

\section{Concluding remarks}

In this article we designed a certifying and linear time algorithm to solve \kMult.  As a by-product, we obtained a certifying and $O(n^2)$-time algorithm 
for \klcMult.  From a theoretical point of view, we provided a new characterization of those PCA models that are equivalent to a $k$-multiplicative UCA model, for every $k < \wrap$.  The proof of this characterization exploits a powerful geometric framework given by Mitas' drawings and the loop unrolling technique.  Mitas' drawings allow us to treat the internal cycles of the synthetic graphs as if they were curves in $\mathbb{R}^2$.  The intersection of two curves corresponds to the intersection of the paths.  The loop unrolling technique, on the other hand, provides an internal copy of every cycle of the synthetic graph.  In a forthcoming article \cite[see the preprint]{SoulignacTerliskyC2017} we combine Mitas' drawings of the synthetic graphs with the loop unrolling technique to solve the minimal representation problem.  

In some sense, the algorithm that we provide is a generalization of the one given by Soluignac \cite{SoulignacJGAA2017,SoulignacJGAA2017a} 
that, in turn, generalizes the algorithm by Mitas~\cite{Mitas1994} 
for UIG graphs.  Even though many properties of UIG models hold naturally in UCA models, this is not always the case, as UCA models have a much richer structure than UIG models.  This is the case for many of algorithms that solve \RepUIG.  The fact that the algorithm for \RepUIG based on synthetic graphs generalizes to \Rep is a plus for this tool.  

As discussed in Section~\ref{sec:introduction}, every PIG model $\M$ is equivalent to an $\infty$-multiplicative UIG model.  This fact can be easily proven by looking at the algorithms by Corneil et al.~\cite{CorneilKimNatarajanOlariuSpragueIPL1995} and Lin et al.~\cite{LinSoulignacSzwarcfiter2009}, 
and it also follows by Theorem~\ref{thm:equivalence}.  We note that there exist PCA models that are not PIG and are equivalent to $\infty$-multiplicative models as well.  For instance, $\{(2i, 2(i+k)+1 \mod 2n) \mid 1 \leq i \leq n\}$ is an $\infty$-multiplicative model representing $C_n^k$, $n \gg k$, where $C_n$ is the cycle with $n$ vertices.

 \begin{figure}
  \mbox{}\hfill\includegraphics{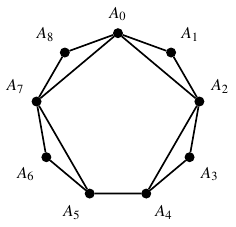}\hfill\raisebox{-5mm}{\includegraphics{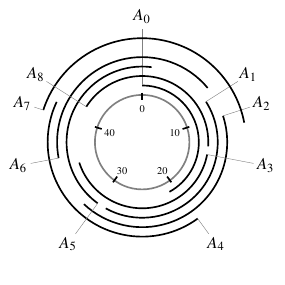}}\hfill\includegraphics{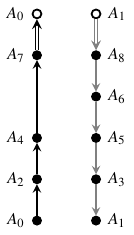}\mbox{}
  \caption{The graph $G$ in the left is $2$-UCA because the $(50,21)$-CA model in the center represents $G^2$.  However, the greedy cycles $A_0,A_2,A_4,A_7,A_0$ and $A_1,A_8,A_6,A_5,A_3,A_1$ of $\Syn^2(\M)$ do not intersect, for the PCA model $\M$ representing $G$ (that is unique up to full reversal and movement of $0$).}\label{fig:example-non-mult-power}
\end{figure}

Say that a PCA (resp.\ PIG) graph $G$ is $k$-UCA (resp.\ $k$-UIG), for $k \geq 0$, when $G^i$ is UCA (resp.\ UIG) for every $1 \leq i \leq k$.  By definition, if a PCA model $\M$ is equivalent to a $k$-multiplicative UCA model, then $G(\M)$ is $k$-UCA.  Theorem~\ref{thm:multiplicative PIG} implies that the converse is also true for PIG graphs: if $G$ is $k$-PIG, then $G$ is PIG and, thus, it admits an $\infty$-multiplicative UIG model.  One is tempted to think that the converse is also true for UCA graphs: if $G$ is $k$-UCA and $k < \wrap$, then $G$ is $k$-multiplicative.  Unfortunately, this is false (\cref{fig:example-non-mult-power}).  Note that $\infty$-UCA graphs is the subclass of UCA graphs closed under taking powers; its graphs can be recognized in $O(n^2)$ time.  Several classes of graphs that are closed under taking powers were studied, including PIG, interval, PCA, and circular-arc graphs~\cite{RaychaudhuriCN1987,FlotowDAM1996}.  Computing models representing powers of circular-arc graphs is an important problem with different applications~\cite{AgnarssonDamaschkeHalldorssonDAM2003}.  We leave open the problem of recognizing these graphs in $o(n^2)$ time.  

\subsection*{Acknowledgements}

A preliminary version of this article was presented at CLAIO 2018, and some of the results presented here were made available on a preprint server~\cite{SoulignacTerliskyC2022}. This work was developed as part of the PhD thesis of the second author. During the review process of this manuscript, the thesis was successfully defended in 2024~\cite{Terlisky2024}. The thesis also incorporates findings from an earlier preprint~\cite{SoulignacTerliskyC2017}, which are planned for future publication.

The authors were supported by PICT ANPCyT grant 2015--2419.

\end{document}